\DeclareMathOperator{\reduce}{red}
\DeclareMathOperator{\dist}{d}
\DeclareMathOperator{\gs}{gs}
\DeclareMathOperator{\hreduce}{\widehat{red}}
\DeclareMathOperator{\hdist}{\hat{d}}
\DeclareMathOperator{\fRep}{Rep^{\mathit{f}}}
\DeclareMathOperator{\Hgrid}{\mathit{H}^{grid}}
\DeclareMathOperator{\Hbound}{\mathit{H}^{bound}}
\newcommand{\norm}[1]{{\left\| #1 \right\|}}
\newcommand{\abs}[1]{{\left|{#1}\right|}}
\newcommand{\floor}[1]{{\left\lfloor{#1}\right\rfloor}}
\newcommand{\ceil}[1]{{\left\lceil{#1}\right\rceil}}
\newcommand{\ket}[1]{{\left| #1 \right\rangle}}
\newcommand{\<}{\langle}
\renewcommand{\>}{\rangle}
\DeclareMathOperator{\volume}{vol}
\DeclareMathOperator{\mspan}{span}
\newenvironment{enuma}{\begin{enumerate}[\upshape (a)]}{\end{enumerate}}
\newenvironment{enumi}{\begin{enumerate}[\upshape (i)]}{\end{enumerate}}
\newcommand{\N}{\mathbb{N}}
\newcommand{\Z}{\mathbb{Z}}
\newcommand{\Q}{\mathbb{Q}}
\newcommand{\R}{\mathbb{R}}
\newcommand{\C}{\mathbb{C}}
\newcommand{\F}{\mathbb{F}}
\DeclareMathOperator{\ha}{\hat{\mathbf{a}}}
\DeclareMathOperator{\hb}{\hat{\mathbf{b}}}
\DeclareMathOperator{\hx}{\hat{\mathbf{x}}}
\DeclareMathOperator{\fa}{\mathbf{a}}
\DeclareMathOperator{\fb}{\mathbf{b}}
\DeclareMathOperator{\fe}{\mathbf{e}}
\DeclareMathOperator{\ta}{\tilde{\mathbf{a}}}
\DeclareMathOperator{\tb}{\tilde{\mathbf{b}}}
\DeclareMathOperator{\tx}{\tilde{\mathbf{x}}}
\DeclareMathOperator{\fz}{\mathbf{z}}
\DeclareMathOperator{\fm}{\mathbf{m}}
\newcommand{\calA} {\mathcal{A}}
\newcommand{\calB} {\mathcal{B}}
\newcommand{\calF} {\mathcal{F}}
\newcommand{\calI} {\mathcal{I}}
\newcommand{\calM} {\mathcal{M}}
\newcommand{\calO} {\mathcal{O}}
\newcommand{\calV} {\mathcal{V}}
\newcommand{\calR} {\mathcal{R}}
\newcommand{\calW} {\mathcal{W}}
\newcommand{\fraka} {\mathfrak{a}}
\newcommand{\ASMB}{\textbf{A}}
\newcommand{\ASME}{)}
\newcommand{\ASM}[1]{\ASMB#1\ASME}
\theoremstyle{plain}
\newtheorem{definition}{Definition}[section]
\newtheorem{theorem}[definition]{Theorem}
\newtheorem{proposition}[definition]{Proposition}
\newtheorem{lemma}[definition]{Lemma}
\newtheorem{corollary}[definition]{Corollary}
\theoremstyle{definition}
\newtheorem{remark}[definition]{Remark}
\newenvironment{algorithm}{\paragraph{Algorithm}}{}
\title{Quantum Algorithm for Computing the Period Lattice of an Infrastructure}
\author{Felix Fontein\footnote{Insitute of Mathematics, University of Zurich, Winterthurerstrasse 190,
    8057 Zurich, Switzerland; \texttt{felix.fontein@math.uzh.ch}}
  \and Pawel Wocjan\footnote{Department of Electrical Engineering and Computer Science, University
    of Central Florida, Orlando, FL 32816-2362; \texttt{wocjan@eecs.ucf.edu}}}
\begin{document}
  \maketitle
  
	We present a quantum algorithm for computing the period lattice of infrastructures of fixed
  dimension.  The algorithm applies to infrastructures that satisfy certain conditions.  The latter
  are always fulfilled for infrastructures obtained from global fields, i.e., algebraic number
  fields and function fields with finite constant fields, as described in \cite{ff-tioagfoaur}.  
  
  The first of our main contributions is an exponentially better method for sampling approximations of vectors of the dual
  lattice of the period lattice than the methods outlined in the works of {\sc Hallgren} and {\sc Schmidt and Vollmer}.
  This new method improves the success probability by a factor of at least $2^{n^2-1}$ where $n$ is the dimension.
  The second main contribution is a rigorous and complete proof that the running time of the
  algorithm is polynomial in the logarithm of the determinant of the period lattice and exponential
  in $n$.  The third contribution is the determination of an explicit lower bound on the success probability of 
  our algorithm which greatly improves on the bounds given in the above works.
  
  The exponential scaling seems inevitable because the best currently known methods for carrying out
  fundamental arithmetic operations in infrastructures obtained from algebraic number fields take
  exponential time.  In contrast, the problem of computing the period lattice of infrastructures
  arising from function fields can be solved without the exponential dependence on the dimension~$n$
  since this problem reduces efficiently to the abelian hidden subgroup problem.  This is also true
  for other important computational problems in algebraic geometry.  The running time of the best
  classical algorithms for infrastructures arising from global fields increases subexponentially
  with the determinant of the period lattice.

  \newpage
  
  \tableofcontents
  
  \section{Introduction}
  \label{sec:intro}
  
  \subsection{Informal definition of an infrastructure and the problem of computing the period lattice}
  
  An \emph{$n$-dimensional infrastructure~$\calI$} is a finite set of distinguished
  points on an $n$-dimensional torus $\R^n / \Lambda$, where $\Lambda$ is a lattice of full rank in
  $\R^n$. To every of these finitely many distinguished points, we assign a region on the torus, so
  that every point on the torus lies in exactly one such region. If $x$ is such a distinguished
  point, every point~$y$ in this region can be represented by the difference $t := y - x$ together
  with $x$, i.e., as the pair $(x, t)$. These tuples $(x, t)$ are essentially the $f$-representations
  of the infrastructure. Infrastructures can be obtained, for example, from global fields, i.e., from
  algebraic number fields as well as function fields with finite constant fields; in this case, the
  lattice corresponds to the free part of the unit group. We explain later that such
  infrastructures satisfy all assumptions we make on infrastructures in this paper.

  We present a quantum algorithm for computing the period lattice $\Lambda$ of infrastructures of
  fixed dimension $n$ and provide a rigorous and detailed proof of its performance.  We focus our
  attention on non-discrete infrastructures.  An infrastructure is called discrete if its period
  lattice is integral and the coordinates of the distinguished points are integral (or more
  generally, if everything can be made integral by a suitable rescaling).  Discrete and non-discrete
  infrastructures arise from function fields and number fields, respectively.  The problem of
  computing the period lattice of discrete infrastructures is easy since this problem can be solved
  by using the same approach as for the abelian hidden subgroup problem.  The reason is that the
  quantum algorithm for solving the abelian HSP can also be viewed as computing a hidden lattice in
  $\Z^n$.
  
  \subsection{Intuition behind the quantum algorithm and brief summary of new contributions}
  The idea behind the quantum algorithm for computing the period lattice of a (non-discrete) infrastructures is a follows. It is possible to define
  a function from the window $\calV=\{0,\ldots,qN-1\}^n\subset\Z^n$ into a certain finite set, whose elements are related to $f$-representations, so that
  \[
  	f(v) = f(v') \Leftrightarrow \frac{v-v'}{N} \approx \lambda \mbox{ for some $\lambda\in\Lambda$}.
  \]
  In words, there is a collision iff the two values $v$ and $v'$ differ approximately 
  by an integer multiple of a lattice vector of the period lattice.  This implies that 
  the elements of the preimage $f^{-1}(v)$ have the special form
  \[
  	v' = v + N \lambda + \xi_\lambda,
  \]
  where $\lambda\in\Lambda$ and $\xi_\lambda$ is a certain error vector from $(-1,1)^n$ such that $v'\in\calV$.  
  Moreover, for a constant fraction of $v$ the cardinality of the corresponding preimage is $f^{-1}(v)$ is close 
  to $\tfrac{q^n}{\det(\Lambda)}$, which corresponds to the natural density of the lattice $\Lambda$ in $\R^n$.
  
  We prove that such function $f$ exists and can always be evaluated correctly at all points
  of $\calV$ with constant probability.  Our analysis takes into account the special nature of the shapes of the regions of the distinguished points and the
  way how these regions interlock with each other.  This analysis closes a gap in the work \cite{hallgrenUnitgroup}.  The works \cite{schmidt-vollmer,arthurDiss}
  chose a different approach.  They showed that it is not necessary that the function $f$ can always be evaluated correctly.  However, their resulting analysis 
  leads to a significantly worse overall running time.
  
  Efficiency means here that we can evaluate this
  function in time that is polynomial in the logarithm of the determinant of the period lattice
  $\Lambda$ and exponential in the dimension $n$.  This exponential scaling seems inevitable because
  the best methods for carrying out fundamental arithmetic operations in such infrastructures take
  exponential time.
  
  Following the quantum algorithm for the abelian HSP, we start by evaluating the function $f$ in superposition over the window $\calV$ and measuring
  the output register.  The resulting post-measurement state is a ``pseudo-periodic'' state, i.e., a uniform superposition of the above $v'$.  It is 
  important that this superposition contains sufficiently many values of the form $v'$.  The pseudo-periodic state corresponds to a 
  uniform superposition of a randomly translated rectangular portion of the rescaled lattice $N\Lambda$ such that only few of its points are missing and the remaining points are
  only slightly perturbed.  We present a new method for precisely 
  analyzing the probability of obtaining a pseudo-periodic state with sufficiently values of the $v'$.  This analysis also closes a gap in the work \cite{hallgrenUnitgroup}.
  
  Similarly to the situation in the abelian HSP, we effectively
  remove the undesired random offset $v$ by applying a multidimensional quantum Fourier transform.  This allows us to sample approximations of lattice vectors of 
  the dual lattice $\Lambda^\ast$.  To mitigate the perturbations effects caused by the error vectors $\xi_\lambda$, we have to perform the quantum Fourier transform 
  over a larger window $\calW$.  But this comes at the price of an exponential decay of the success probability with increasing dimension $n$.  The idea to use
  a larger window goes back to \cite{hallgrenUnitgroup} and \cite{schmidt-vollmer,arthurDiss}.  We obtain here a new better method for sampling approximations improving
  the success probability by the exponential factor $2^{n-1}$ compared to the less efficient methods in \cite{hallgrenUnitgroup} and \cite{schmidt-vollmer,arthurDiss}.  This 
  is not just an improvement in the analysis, but an improvement of the algorithm.
  
  We present lattice and group theoretic results, making it possible to prove that $2n+1$ approximations obtained by the above sampling process form an approximate 
  generating set of $\Lambda^*$ with constant probability for fixed dimension.  No such bound on the number of required samples was proved in the previous works. 
  Once we have such approximate generating set, we recover an approximate basis of $\Lambda^*$.  We describe an improved
  method for this purpose.  We then determine an approximate basis of $\Lambda$ from such approximate basis of $\Lambda^*$.  
  
  Finally, we discuss in detail how to choose all parameters to obtain an approximate basis of the period lattice $\Lambda$ that has the desired approximation quality.
  We obtain an explicit lower bound on the success probability of our algorithm, which reveals precisely how the complexity depends on the various
  parameters. We compare this probability to the ones presented in the works of {\sc Schmidt and
  Vollmer} and \textsc{Schmidt} and conclude that our probability is exponentially better by at least $2^{{n^2}-1}$.
  The work of {\sc Hallgren} gives no explicit probability.
  
  Note that in the one-dimensional case more specialized algorithms lead to a much better
  probability of success; see, for example, \cite{hallgrenPell,pradeep-pawel}.
  
  \subsection{Efficient quantum algorithms for problems in arithmetic geometry}
  
  We conclude the introduction with some comments on the existence of efficient quantum algorithms for certain computationally hard problems in 
  algebraic geometry.  Readers not familiar with algebraic geometry may not be aware that many interesting problems can be reduced to the abelian HSP efficiently.  
  The understanding of these reductions does require some specialized knowledge in algebraic geometry, but the necessary results are fairly standard.  As noted previously, infrastructures 
  obtained from function fields are easier to handle than general infrastructures. As shown in Theorem~7.1 of \cite{ff-tioagfoaur}, such infrastructures embed in a
  natural way into the divisor class group of degree zero, which is a finite abelian group in the case of function fields
  with finite constant field. There are polynomial time classical algorithms to do arithmetic in this group, for instance, the ``algebraic'' algorithm by
  F.~He\ss\ \cite{hessRR,diem-habil}.  Therefore, one can directly apply the standard algorithm for
  the abelian HSP \cite{cheung-mosca} to compute the period lattice.  Other important problems, such as determining discrete logarithms in 
  the infrastructure, computing the whole divisor class group and the ideal
  class group, solving the principal ideal problem, as well as computing the Zeta function, can all be treated in the same way.  The latter problem
  was solved in \cite{kedlaya-zeta} using this approach, while relying on a less efficient ``geometric'' 
  method based on the Brill-Noether algorithm to do arithmetic. 
  
  Arithmetic geometry provides a unifying understanding and treatment of problems related to global fields.  On the one hand, the discussion above shows that the algebro-geometric problems for 
  function fields with finite constant fields (i.e., function fields of curves over finite fields) can be reduced to the abelian HSP.  This presents an elegant and efficient quantum solution.  
  On the other hand, the analysis of the quantum algorithms for the corresponding number-theoretic problems is significantly more challenging.  We believe that our rigorous and improved
  treatment of the problem of computing the period lattice of non-discrete infrastructures can serve as a valuable starting point for addressing other number-theoretic problems and also finding more
  efficient quantum algorithms for them.  A first stepping stone is our new method for sampling approximations of vectors of the dual lattice, which improves the success probability by an exponential 
  factor.
  
  \section{Formal definition of an infrastructure}
  \label{sec:infra}

  An \emph{$n$-dimensional infrastructure~$\calI$} consists of 
  \begin{itemize}
  	\item a lattice~$\Lambda$ of full rank, called the \emph{period lattice}, 
  	\item a finite non-empty set $X$, an injective map $\dist : X \to \R^n / \Lambda$, and 
  	\item a set of \emph{$f$-representations}~$\fRep(\calI)$, i.e., a subset $\fRep(\calI)
  	\subseteq X \times \R^n$ with $X \times \{ 0 \} \subseteq \fRep(\calI)$ such that the function
  	\[ 
  		\Phi_\calI : \fRep(\calI) \to \R^n / \Lambda,
  		\qquad (x, t) \mapsto \dist(x) + t 
  	\] 
  	is a bijection. 
  \end{itemize}
  Such a set of $f$-representations yields a
  \emph{reduction map}~$\reduce : \R^n/\Lambda \to X$ satisfying $\reduce(\Phi_\calI(x, t)) = x$ for all
  $(x, t) \in \fRep(\calI)$, as well as a \emph{giant step} operation~$\gs : X \times X \to X$ by $\gs(x,
  y) = \reduce(\dist(x) + \dist(y))$. Note that the set of $f$-representations has a natural group
  structure using the pull-back of the group operation of $\R^n / \Lambda$ via $\Phi_\calI$: $(x, t)
  + (x', t') := \Phi_\calI^{-1}(\Phi_\calI(x, t) + \Phi_\calI(x', t'))$.
  
  Given such a set of $f$-representations, we can \emph{unroll} the infrastructure. Let $\pi : \R^n
  \to \R^n/\Lambda$ be the canonical projection, and set 
  \[
  	\hat{X} := \pi^{-1}(\dist(X)).
  \]
  This is a discrete non-empty subset of $\R^n$ satisfying $\hat{X} + \Lambda = \hat{X}$. Define
  $\hdist(\hat{x}) = \hat{x}$ for all $\hat{x} \in \hat{X}$ and 
  \[
  	\hat{V}_{\hat{x}} := \{ \hdist(\hat{x}) +
  	t \mid (\dist^{-1}(\pi(\hat{x})), t) \in \fRep(\calI) \}
  \]
  for every $\hat{x} \in \hat{X}$. Then $\R^n$ is the disjoint union of all $\hat{V}_{\hat{x}}$, $\hat{x} \in \hat{X}$, and 
  one can define $\hreduce : \R^n \to \hat{X}$ by $\hreduce(v) = \hat{x}$ if $v \in \hat{V}_{\hat{x}}$.
  
  The unrolled infrastructure is periodic with period lattice $\Lambda$ in the sense that for
  $\hat{x} \in \hat{X}$, $v \in \R^n$ and $\lambda \in \Lambda$, we have $\hat{x} + \lambda \in
  \hat{X}$, $\hat{V}_{\hat{x} + \lambda} = \hat{V}_{\hat{x}} + \lambda$, $\hreduce(v + \lambda) = \hreduce(v) +
  \lambda$ and $\hdist(\hat{x} + \lambda) = \hdist(\hat{x}) + \lambda$. Moreover, $\pi(\hat{x}) =
  \pi(\hat{y})$ for $\hat{x}, \hat{y} \in \hat{X}$ if, and only if, $\hat{y} - \hat{x} \in \Lambda$.
  
  For $s, t\in \R^n$, we write $[s, t]$ for $\{ r \in \R^n \mid s \le r \le t \}$, where ``$\le$''
  denotes the component-wise inequality on $\R^n$. We say that a subset~$U \subseteq \R^n$ is
  \emph{cornered} with \emph{corner}~$s \in \R^n$ if $s \in U$ and for every~$t \in U$, $t \in [s,
  t] \subseteq U$. In other words, $U = \bigcup_{t \in U} [s, t]$. Note that every cornered subset
  of $\R^n$ has exactly one corner, which is its minimal element with respect to $\le$. We say that
  $\calI$ is \emph{cornered} if for all $\hat{x} \in \hat{X}$, $\hat{V}_{\hat{x}}$ is cornered with
  corner~$\hat{x}$.
  
  We make the following assumptions:
  \begin{enumerate}[\ASMB1\ASME]
    \item There exists a constant $A > 0$ such that for every $\hat{x} \in \hat{X}$, \[
    \hat{V}_{\hat{x}} \subseteq \hat{x} + [0, A]^n. \]
    \item There exist constants $C, D > 0$ such that for every $r \in \R^n$, the set \[ (r + [0,
    C]^n) \cap \hat{X} \] contains at most $D$ elements.
    \item There exists a polynomial-time algorithm such that for given $k \in \N$ and $u \in \Z^n$, one
    can compute $(x, t) \in X \times 2^{-k} \Z^n$ such that
    \begin{enuma}
      \item $\norm{\hat{x} + t - 2^{-k}u}_\infty \le 2^{-k}$ for some $\hat{x} \in \hat{X}$ with
      $\dist^{-1}(\pi(\hat{x})) = x$;
      \item $\bigl(2^{-k}u + (-2^{-k}, 2^{-k})^n\bigr) \cap \hat{V}_{\hat{x}} \neq \emptyset$.
    \end{enuma}
    The running time is polynomial in $k$ and $\log \norm{u}_\infty$ when the dimension $n$ is held constant. 
  \end{enumerate}
  
  \begin{proposition}
    \label{prop:infrafromFFNF}
    Let $K$ be a global field. Then any infrastructure obtained from $K$ in the sense of
    \cite[Section~6]{ff-tioagfoaur} has $f$-representations in a natural way and is
    cornered. Moreover, it satisfies \ASM1 to \ASM3 with explicit constants $A, C, D$:
    
    If $K$ is a number field of discriminant~$\Delta$ and degree~$d = [K : \Q]$, then one can choose
    $A = \tfrac{1}{2} \log \abs{\Delta}$, $C = \log 2$ and $D = 4^d$. If $K$ is a function field of
    genus~$g$ and degree~$d = [K : k(x)]$, then one can choose $A = g + d - 1$, $C = 1 -
    \varepsilon$ for any $\varepsilon\in (0,1)$, and $D = 1$.
  \end{proposition}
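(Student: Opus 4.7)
The plan is to invoke the explicit construction of infrastructures from global fields in \cite[Section~6]{ff-tioagfoaur}: the distinguished set $X$ consists of reduced (fractional) ideals in the number-field case (respectively of reduced effective divisors in the function-field case), the map $\dist$ is induced by the Minkowski log embedding on Arakelov divisors (respectively by the degree-zero divisor class map), and the period lattice $\Lambda$ is the log-unit lattice (respectively the principal divisor lattice). The $f$-representations $(x,t)$ are assigned by a reduction procedure that moves monotonically in a fixed direction in $\R^n$, so each region $\hat{V}_{\hat{x}}$ is automatically cornered with corner $\hat{x}$. With this setup the three assumptions reduce to explicit geometric and algorithmic statements about that reduction procedure.

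For \ASM1 I would bound the diameter of $\hat{V}_{\hat{x}}$ directly. In the number-field case, Minkowski's convex-body theorem applied to a cube of side $\tfrac12\log|\Delta|$ in the log embedding produces in every ideal a nonzero element whose log-coordinates shift the ideal back into reduced shape, so reduction moves by at most $\tfrac12\log|\Delta|$ coordinatewise and $A=\tfrac12\log|\Delta|$. In the function-field case, Riemann--Roch guarantees a nonzero global section in any divisor of degree at least $g+d-1$, yielding $A=g+d-1$ by the same monotone-reduction argument.

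For \ASM2 I would count reduced points in a box of side $C$. For a number field, two reduced ideals within $\log 2$ in each log-coordinate must differ by a unit $\varepsilon$ with $\bigl|\log|\sigma_i(\varepsilon)|\bigr|\le\log 2$ at every infinite place; such units correspond to algebraic integers lying in a fixed Minkowski body, of which there are at most $4^d$, giving $C=\log 2$ and $D=4^d$. For a function field, distinct reduced divisors must differ by a nonzero principal divisor, whose degree structure combined with the choice of reduction forces some coordinate of the difference to have magnitude at least $1$; so any box of side $1-\varepsilon$ contains at most one distinguished point, yielding $D=1$.

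Verifying \ASM3 is the main obstacle. Given a lattice point $2^{-k}u$, I would first find an Arakelov divisor (or function-field divisor) whose log-coordinates approximate $2^{-k}u$ to within $2^{-k}$ at $k$-bit precision, then feed it into the baby-step and reduction routines from \cite{ff-tioagfoaur}. The delicate point is to track how the finite-precision input propagates through ideal multiplication and LLL-type reduction so that the returned pair $(x,t)\in X\times 2^{-k}\Z^n$ simultaneously satisfies both $\norm{\hat{x}+t-2^{-k}u}_\infty\le 2^{-k}$ and the nonempty-intersection condition in part~(b). For number fields, arithmetic is carried out in $\calO_K$ with controlled precision, and the exponential dependence of the reduction on the degree $d$ is absorbed into the implicit constants because $n$ is held fixed; for function fields, the corresponding analysis is essentially symbolic in the underlying polynomial ring and is comparatively routine. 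Collecting these bounds completes the verification of \ASM1 through \ASM3 with the stated constants.
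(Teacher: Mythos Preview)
Your proposal is broadly on the right track and is actually more detailed than the paper's own argument, which is only a sketch that defers almost everything to external references: corneredness and the definition of $f$-representations are read off from Definition~6.3 of \cite{ff-tioagfoaur}; \ASM1 is cited as Proposition~8.1 of \cite{ff-tioagfoaur}; \ASM2 for number fields is cited as Lemma~3.2 of \cite{buchmann-ontheperiodlength} (and declared trivial for function fields); and \ASM3 is not proved at all but explicitly deferred to an upcoming paper of the first author with Jacobson. So where you attempt to sketch the Minkowski/Riemann--Roch arguments behind \ASM1 and the precision analysis behind \ASM3, the paper simply points elsewhere.

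One genuine issue in your \ASM2 sketch for number fields: you assert that two elements of $\hat{X}$ lying in a common box of side $\log 2$ must differ by a unit. That is not how the count works. Elements of $\hat{X}$ are (log-images of) minima of the ideal $\fraka$, and two nearby ones need not come from the same reduced ideal translated by a unit; they can correspond to genuinely different minima. The correct bound (Buchmann's Lemma~3.2) counts minima of a fixed ideal in a bounded Minkowski region directly, and the $4^d$ comes from a lattice-point count in that region, not from bounding units alone. Your function-field argument for $D=1$ is fine.

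For \ASM3 you correctly identify it as the delicate part, but be aware that the paper itself does not claim to carry it out here; your sketch of ``track precision through ideal arithmetic and reduction'' is the right shape, but turning it into a proof with the stated polynomial dependence is nontrivial and is exactly what the deferred paper is meant to supply.
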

  
  \begin{proof}[Sketch of Proof.]
    Assume that the infrastructure is $\calI = (X^\fraka, \dist^\fraka, \reduce^\fraka)$ in the
    notation of \cite{ff-tioagfoaur}. Here, $\fraka$ is an ideal of the ring of integers~$\calO$ (or
    the ring of holomorphic functions in case $K$ is a function field), and $X^\fraka$ is
    essentially the set of reduced ideals equivalent to $\fraka$. If $\abs{\bullet}_1, \dots,
    \abs{\bullet}_{n+1}$ are the pairwise different absolute values of $K$, we define $\Lambda := \{
    (\log \abs{\varepsilon}_1, \dots, \log \abs{\varepsilon}_n) \mid \varepsilon \in \calO^* \}$,
    which is isomorphic to the free part of the finitely generated abelian group~$\calO^*$ of units
    of $\calO$. The definition of $f$-representations is rather technical, whence we do not repeat
    it here, but just refer to Definition~6.3 of \cite{ff-tioagfoaur}. For every $\hat{x} \in
    \hat{X}$,
    \[ 
    	\hat{V}_{\hat{x}} = \hat{x} + W(d^{-1}(\pi(\hat{x}))), \quad \text{where } W(x) = \{ t \in \R^n
        \mid (x, t) \in \fRep(\fraka) \} \text{ for } x \in X.
    \] 
    It is clear from Definition~6.3 in \cite{ff-tioagfoaur} that $W(x)$ is cornered with
    corner~$0$. Hence, $\calI$ is a cornered infrastructure. Our assumption \ASM1 follows from
    Proposition~8.1 of \cite{ff-tioagfoaur}. The second assumption \ASM2 holds trivially for
    function fields; for number fields, it follows from Lemma~3.2 in
    \cite{buchmann-ontheperiodlength}.
    
    Assumption~\ASM3 will be discussed in an upcoming paper of the first author and
    M.~J.~Jacobson,~Jr. In the case of function fields, the algorithms are of polynomial running
    time with respect to the genus of the function field as well as the size of its
    representation. In the case of number fields, the algorithms are polynomial with respect to the
    logarithm of the discriminant of the number field, but exponential in its degree~$d = [K : \Q]$,
    as one has to find shortest vectors in lattices of dimension~$d$.
  \end{proof}
  
  Note that the algorithm we plan to use for \ASM3 is exponential in $n$, but significantly more
  efficient than the algorithms that were proposed in \cite{hallgrenUnitgroup} and
  \cite{schmidt-vollmer}. These are based on \cite[Chapter~5 and 6]{thielDiss}, which essentially
  uses Buchmann's baby step algorithm \cite{genlagrange, buchmann-habil}. The latter is known for
  being practically unusable \cite{buchmannjuntgenpohst-practialGLA}. Even on modern computers,
  computing all minima of one reduced ideal can take a long time for not too large number field
  degrees, say~$[K : \Q] = 8$ (which yields $n = 7$); the first author verified this in 2010 when
  implementing that algorithm.

  Note that Schoof's Algorithm~10.7 in \cite{schoofArakelov} is also 
  mentioned in \cite{hallgrenUnitgroup} as a more efficient alternative to Buchmann's algorithm. Unfortunately, 
  Schoof's algorithm uses a different distance function from the one used by Hallgren and by us.  Therefore, 
  Schoof's algorithm cannot be applied without non-trivial modifications if one wants to obtain a provably 
  polynomial-time quantum algorithm for computing the period lattice.
  
  Observe that \ASM3 follows from the existence of two simpler algorithms.  Before we list these,
  we need to define what an ``approximate $f$-representation of error at most~$\varepsilon$'' of a point~$r
  \in \R^n$ is. This is a pair~$(x, t) \in X \times \R^n$ satisfying
  \begin{enuma}
    \item $\norm{\hat{x} + t - r}_\infty \le \varepsilon$ for some $\hat{x} \in \hat{X}$ with
    $\dist^{-1}(\pi(\hat{x})) = x$;
    \item $\bigl(r + (-\varepsilon, \varepsilon)\bigr)^n \cap \hat{V}_{\hat{x}} \neq \emptyset$,
  \end{enuma}
  Now we can describe the characteristics of the two simpler algorithms, which can be combined to obtain such an
  algorithm as described in \ASM3:
  \begin{enuma}
    \item one algorithm which, given $\ell\in\N$ and $r \in 2^{-\ell} \{ -2^\ell, -2^\ell + 1, \dots, 2^\ell \}^n\subset[-1,1]^n$, computes an approximate 
    $f$-representation $(x, t)$ of error at most $2^{-\ell}$ such that
    $\norm{\dist(x) + t - r}_\infty \le 2^{-\ell}$ in time polynomial in $\ell$;
    \item a second algorithm which, given two approximate $f$-representations of error at
    most~$2^{\ell'}$, computes an approximate $f$-representation of their sum of error at
    most~$2^{\ell'+1}$ in time polynomial in $\ell'$.
  \end{enuma}
  One can compute an approximate $f$-representation of any $r \in\R^n$ of error at most $2^{-k}$ in time polynomial in $\log \norm{r}_\infty$ and $k$.
  This is done by using a double and add technique and by calling these algorithms to obtain approximate $f$-representations of error at most
  $2^{-(k+k')}$, where $k' = O(\log \norm{r}_\infty)$. 
  
  The formal definition of the problem of computing the period lattice is as follows.  
  
  \begin{definition}
  Given $\gamma\in (0,1)$, the task is to find
  $\tilde{\lambda}_1,\ldots,\tilde{\lambda}_n\in\R^n$ such that there exists a basis $\lambda_1,\ldots,\lambda_n$ of $\Lambda$ with 
  \[
  	\| \tilde{\lambda}_j - \lambda_j \|_2 \le \gamma
  \]
  for $j=1,\ldots,n$.  We call such $\tilde{\lambda}_1,\ldots,\tilde{\lambda}_n$ a $\gamma$-approximate basis of $\Lambda$.
  \end{definition}
  
  We present a quantum algorithm with running time polynomial in $\log \det(\Lambda)$ and
  $\log(1/\gamma)$ when $A, 1/C, D$ and $1 / \lambda_1(\Lambda)$ can be bounded polynomially in
  terms of $\log \det(\Lambda)$. Here, $\lambda_1(\Lambda)$ denotes the first consecutive minimum of
  $\Lambda$, i.e., the length of a shortest non-zero vector in $\Lambda$. Note that for number
  fields, $\lambda_1(\Lambda)$ can be bounded from below by a bound depending only on $n$; see
  Satz~5.6 in \cite{buchmann-habil}.
  
  In the case of computing units of a global field, computing a $\gamma$-approximate basis of
  $\Lambda$ yields approximations of the logarithms of the absolute values of the units. These
  approximations can be refined to arbitrary precision in polynomial time. Note that one can also
  relatively efficiently recover the corresponding units themselves; since their representation is
  not of size polynomial in the genus respectively logarithm of the discriminant, explicitly
  computing them cannot be done in polynomial time. What can be done is computing a so-called
  \emph{compact representation} of a unit, which was presented for number fields in
  \cite{thielDiss,thiel-comprep} and for function fields in \cite{hallgren-eisentraeger}; one can
  modify the quantum algorithm to output such compact representations of the units and still run in
  polynomial time.
  
  Finally, we want to mention that our algorithm can be interpreted as an algorithm for solving certain
  instances of a \emph{Hidden Subgroup Problem} for the group~$G = \R^n$ provided that the group
  operation in $\fRep(\calI)$ is effective. In case the infrastructure is obtained from a global
  field as in the above proposition, the group operation is effective and is described explicitly in
  Theorem~7.3 of \cite{ff-tioagfoaur}.

  Now one can consider the group homomorphism $f : \R^n \to \fRep(\calI)$ as the composition of the
  canonical projection $\pi : \R^n \to \R^n/\Lambda$ with $\Phi_\calI^{-1}$. This map can be
  effectively computed -- ignoring rounding and approximation issues -- and it hides the
  lattice~$\Lambda$ by $\ker f = \Lambda$.

  \section{Detailed outline of the quantum algorithm and new contributions}
  \label{sec:algoutline}
  
  Let $N \in \N$ and $s \in \R^n$ be fixed. Consider the function 
  \[ 
  	f : \R^n \to X \times \Z^n,
  	\qquad v \mapsto (x, \floor{N t}) \text{ if } \Phi_\calI^{-1}\bigl(\pi(s + \tfrac{1}{N} v )\bigr) = (x, t). 
  \] If
  $f(v) = f(v')$ for $v,v' \in \Z^n$, then $v - v'$ lies close to an element of $N \Lambda$. We
  want to use the quantum computer to find such collisions.
  
  Let $\calV = \{0,\ldots,qN-1\}^n$ and $\calW = \{0,\ldots,2nqN-1\}^n$ where $q$ and $N$ are positive integers that will be fixed later. 
  Set $V = \abs{\calV}$ and $W=\abs{\calW}$.
  The input register is $\C^W=\left(\C^{2nqN}\right)^{\otimes n}$.  
  The output register is $\C^d$ with $d$ sufficiently large
  so it can store any element of the image $f(\calV)$.  
  In the following we use $f$ to denote the restriction of $f$ to $\calV$.
  We assume that we have a reversible version $U_f$ of $f$ that acts on the above input and output registers.
  
\newpage
  
  \begin{algorithm}
  \begin{enumerate}
  	\item We start by preparing the state
  	\[ 
  		\frac{1}{\sqrt{V}} \sum_{v\in \calV} \ket{v} \ket{f(v)}. 
  	\] 
  	Note that we evaluate $f$ only on the subset $\calV$ of $\calW$. 
  	\item We measure the output register and denote the outcome by $f(v)$ for some $v\in\calV$.
  	The post-measurement state is then 
  	\[ 
  		\frac{1}{\sqrt{M}} \sum_{v' \in \calM} \ket{v'} \ket{f(v)}
  	\] 
  	where $\calM := \{ v' \in \calV \mid f(v') = f(v) \}$ and $M=\abs{\calM}$.
  	\item We apply the $n$-fold tensor product of the quantum Fourier transform of size $2nqN$ on the input register
  	and obtain the state
  	\[ 
  		\frac{1}{\sqrt{M \, W}} \sum_{w \in\calW} \sum_{v \in \calM} 
  		\exp\biggl(
  		2 \pi i \, v' \cdot \frac{w}{2 n q N} 
  		\biggr) \ket{w} \ket{(f(v)}
  	\]
  	where $\cdot$ denotes the inner product on $\R^n$.
  	\item Finally, we measure the input register and denote the outcome by $w$.  
  \end{enumerate}
  \end{algorithm}
  This quantum procedure is repeated $2n+1$ many times to obtain the samples $w_1,\ldots,w_{2n+1}$.  A subsequent classical post-processing step makes it
  possible to extract an approximate basis of $\Lambda$ from these samples with a probability that
  can be bounded from below by a positive constant.
  
  \paragraph{Organization of the paper and outline of technical results}\ \medskip\\
  In Section~\ref{sec:computef}, we prove that with constant
  probability all evaluation points $v/N+s$ ($v\in\calV$) are sufficiently far away from the
  boundary of $\hat{V}_{\hat{x}}$ for all $\hat{x}\in\hat{X}$.  This is achieved by choosing the
  shift $s$ uniformly at random from a certain finite set. This ensures that we can compute $f(v)$
  correctly for all $v\in\calV$ even though we may only determine approximate $f$-representations.
  
  In Section~\ref{sec:probperiodic}, we show that the probability for post-measurement states being
  periodic states can be bounded from below by a constant. Roughly speaking, a periodic state
  corresponds to a (randomly) translated and perturbed finite portion of the lattice $N \Lambda$
  that may be missing some points.  In particular, we establish a lower bound on $M$ showing that
  not too many points are missing in the superposition.
  
  To derive the results in Sections~\ref{sec:computef} and \ref{sec:probperiodic}, it is absolutely
  indispensable to take into account that the infrastructure is cornered.  Relying only a lower
  bound on the minimal distance between two elements of $\hat{X}$ is not sufficient because the
  union of $\varepsilon$-neighborhoods of the boundaries of $\hat{V}_{\hat{x}}$ of all
  $\hat{x}\in\hat{X}$ could still fill out too much of $\R^n$.  In the one-dimensional case, the
  regions $\hat{V}_{\hat{x}}$ are intervals. In contrast to that, in the $n$-dimensional case, their
  shapes can take on much more complicated forms. This makes the analysis more difficult.  This
  problem was mentioned, but not resolved in \cite{hallgrenUnitgroup}, while in
  \cite{schmidt-vollmer,arthurDiss}, this problem was solved differently by relaxing the conditions
  of the quantum algorithm on the function~$f$.
  
  In Section~\ref{sec:fouriersampling}, we show that the last step of the above quantum procedure
  yields an approximation of an element of the dual lattice~$\Lambda^\ast = \{ \lambda^\ast \in \R^n
  \mid \forall \lambda \in \Lambda : \langle \lambda^\ast, \lambda \rangle \in \Z \}$ with a certain
  probability.  It becomes essential here that the Fourier transform is taken over the larger window
  $\calW$, while $f$ is only evaluated inside $\calV$.  This makes it possible to mitigate the
  perturbation effects.
  
  More precisely, we determine a lower bound on the probability the outcome $w$ obtain in the final
  step is contained in the set $\calR_{\lambda^\ast}$, where
  \[
  	\calR_{\lambda^\ast} := \Big\{ (w_1,\ldots,w_n) \Bigm| w_k \in \{ \floor{2nq \lambda_k^*},
        \floor{2nq\lambda^\ast_k} + 1 \} \text{ for $k=1,\ldots,n$} \Big\}
  \]
  and $\lambda^* = (\lambda_1^*, \dots, \lambda_n^*) \in \Lambda^*$.  Such elements yield good
  approximations of $\lambda^\ast$ since
  \[
  	\Big\| \frac{w}{2nq} - \lambda^* \Big\|_2 \le \frac{1}{2\sqrt{n}q} 
  \]
  for all $w\in\calR_{\lambda^\ast}$.  
  
  The works \cite{hallgrenUnitgroup} nor \cite{schmidt-vollmer,arthurDiss} consider only elements of the more restrictive form $[2nq\lambda^\ast]$, 
  where $[u]$ means that we round each coefficient of $u\in\R^n$ to the closest integer.  This is why our method improves the success probability of obtaining a single 
  good approximation by the exponential factor $2^{n-1}$.  It can be shown that at least $n+1$ samples are needed so our method provably leads to an 
  overall improvement of the success probability by the factor $2^{n^2-1}$.
    
  In Section~\ref{sec:part1}, we present lattice and group theoretic results, yielding a lower bound on the probability that $n$ lattice vectors drawn uniformly at random 
  from $L\cap [0,b)^n$ and $n+1$ lattice vectors drawn uniformly at random from $L \cap [0,b_0)^n$ generate together 
  the entire lattice $L$, where $L$ is a full-rank lattice in $\R^n$ and $b<b_0$ are 
  sufficiently large. Neither \cite{hallgrenUnitgroup} nor \cite{schmidt-vollmer,arthurDiss} provide
  an explicit and proven upper bound on the complexity of generating a lattice by drawing samples. 
  But this is a crucial result, directly affecting the success probability of the algorithm.
      
  In Section~\ref{sec:approximateDualLattice}, we specialize these lattice-theoretic results to $L:=\Lambda^\ast$ and present an explicit lower bound on the probability that 
  the $2n+1$ samples $w_1,\ldots,w_{2n+1}$ output by our quantum algorithm yield an approximate generating set for the dual lattice $\Lambda^\ast$.
  
  In Section~\ref{sec:part2}, we first present technical results based on \cite{BK:93} showing how to construct an approximate basis of $L$ from 
  an approximate generating set of $L$.  Then, we show how to recover an approximate basis of the dual lattice $L^\ast$ from the previously determined approximate basis of $L$.  
    
  Finally, in Section~\ref{sec:final}, we combine all results from the previous sections and show to find an approximate basis for the period lattice $\Lambda$.  We explain in detail 
  how to choose all parameters.  We also bound the success probability of our algorithm from below.  There is a classical method for checking whether the computed basis vectors
  are indeed close to elements of $\Lambda$. If that is the case, we have computed $\Lambda$ with a high
  probability.
  
  Unfortunately, the success probability of this algorithm decreases exponentially in the dimension $n$ of the infrastructure.
  This is a common problem of such algorithms which also applies to the algorithms described in \cite{hallgrenUnitgroup} and \cite{schmidt-vollmer} (see also 
  \cite[p.~122]{arthurDiss}).  However the success probability of our algorithm decreases less rapidly than that of the previous works.  It is better by the exponential factor
  $2^{n^2-1}$.
  
  \section{Computing the function $f$ that hides the period lattice $\Lambda$}
  \label{sec:computef}
  
  We consider a computable version~$\tilde{f}$ of $f$ and show under which conditions $f(v) =
  \tilde{f}(v)$ holds for all $v\in\calV$ with high probability. Recall that $v$ corresponds to the
  point $s+\tfrac{v}{N}$, where $s$ is a random offset. We show that if $s$ is chosen uniformly
  random at random from a certain finite set, then with high probability none of these evaluation points
  $u:=s+\tfrac{1}{N} v$ (for $v\in\calV$) falls into regions in which  
  the method \ASM3 may return a result that leads to a wrong evaluation of $f(v)$.
  
  Let $v \in \calV$ yield $f(v) = (x, \floor{N t})$ with $(x, t) = \Phi_\calI^{-1}\bigl(\pi(u)\bigr)$. Let $\hat{x}
  \in \hat{X}$ with $u \in \hat{V}_{\hat{x}}$; then $\pi(\hat{x}) = \dist(x)$ and $u - \hat{x} =
  t$. If $u$ is sufficiently far away from $\partial \hat{V}_{\hat{x}}$, then the oracle in \ASM3 returns
  the correct $x \in X$. Moreover, if $t = (t_1, \dots, t_n) \in \R^n$ has no coordinate which comes
  close to an integer multiple of $\frac{1}{N}$, then the coordinates of $N t$ are bounded away from
  integers and $\floor{N t} = \floor{N t'}$ for all $t'$ which are close enough to $t$. This ensures
  that the oracle in \ASM3 outputs an approximation~$(x, t')$ of
  $\Phi_\calI^{-1}(u) = (x, t)$ such that $(x, \floor{N t}) = (x, \floor{N t'})$.
  
  A \emph{boundary point} of $\hat{\calI}$ is a point $u \in \R^n$ such that every neighborhood of $u$
  contains points from at least two different $\hat{V}_{\hat{x}}$. Denote the set of all boundary
  points by $H$; then 
  \[ 
  	H = \bigcup_{\hat{x} \in \hat{X}} \partial \hat{V}_{\hat{x}}. 
  \] 
  For a given $\varepsilon > 0$, define the enhanced boundary
  \[ 
  	H(\varepsilon) := H + [-\varepsilon, \varepsilon]^n. 
  \] 
  Observe that $\hat{X} \subset H \subset H(\varepsilon)$ since by assumption all
  $\hat{V}_{\hat{x}}$ are cornered sets with corner $\hat{x}$.  An example of how cornered sets
  could tile the plane $\R^2$ is shown in Figure~\ref{fig:planetiling}, in which the enhanced boundary
  $H(\varepsilon)$ is highlighted. 
  
  \tikzstyle{background} = [dashed,white!67!black,fill=white!100!black]
  \tikzstyle{highlight} = [thick,fill=white!67!black]
  \tikzstyle{lines} = [black]
  \tikzstyle{fatlines} = [thick,black]
  \tikzstyle{dots} = [fill=black]
  \tikzstyle{fatdots} = [draw=black,thick,fill=black]
  
  \begin{figure}[Hh]
    \begin{center}
      \begin{tikzpicture}[y=-1cm, scale=0.5]
        \draw[background] (9.5,1) rectangle (27.5,13.5);
        \path[highlight] (10.4,1) rectangle (10.6,3.6);
        \path[highlight] (10.4,3.6) rectangle (12.6,3.4);
        \path[highlight] (12.4,4.1) rectangle (12.6,2.4);
        \path[highlight] (11.6,1) rectangle (11.4,2.6);
        \path[highlight] (11.4,2.6) rectangle (14.6,2.4);
        \path[highlight] (14.4,1) rectangle (14.6,5.1);
        \path[highlight] (12.4,4.1) rectangle (14.6,3.9);
        \path[highlight] (14.4,5.1) rectangle (15.1,4.9);
        \path[highlight] (9.5,6.1) rectangle (11.6,5.9);
        \path[highlight] (11.4,7.1) rectangle (11.6,3.4);
        \path[highlight] (11.4,7.1) rectangle (15.1,6.9);
        \path[highlight] (14.9,2.9) rectangle (15.1,10.1);
        \path[highlight] (14.4,2.9) rectangle (17.6,3.1);
        \path[highlight] (17.4,1.4) rectangle (17.6,3.6);
        \path[highlight] (15.9,1.6) rectangle (16.1,1);
        \path[highlight] (15.9,1.6) rectangle (20.6,1.4);
        \path[highlight] (20.4,1) rectangle (20.6,2.1);
        \path[highlight] (14.9,3.4) rectangle (22.6,3.6);
        \path[highlight] (20.4,2.1) rectangle (23.6,1.9);
        \path[highlight] (23.4,1) rectangle (23.6,3.6);
        \path[highlight] (22.4,1.9) rectangle (22.6,5.1);
        \path[highlight] (22.4,5.1) rectangle (24.6,4.9);
        \path[highlight] (19.4,6.1) rectangle (19.6,3.4);
        \path[highlight] (19.4,6.1) rectangle (23.1,5.9);
        \path[highlight] (23.4,3.6) rectangle (26.6,3.4);
        \path[highlight] (25.4,1) rectangle (25.6,2.1);
        \path[highlight] (25.4,2.1) rectangle (26.6,1.9);
        \path[highlight] (26.4,1) rectangle (26.6,4.6);
        \path[highlight] (26.4,4.6) rectangle (27.5,4.4);
        \path[highlight] (24.4,3.4) rectangle (24.6,7.6);
        \path[highlight] (24.4,7.6) rectangle (27.5,7.4);
        \path[highlight] (21.4,5.9) rectangle (21.6,8.6);
        \path[highlight] (21.4,8.6) rectangle (22.1,8.4);
        \path[highlight] (21.9,5.9) rectangle (22.1,10.6);
        \path[highlight] (21.9,10.6) rectangle (23.1,10.4);
        \path[highlight] (22.9,4.9) rectangle (23.1,13.1);
        \path[highlight] (22.9,13.1) rectangle (27.5,12.9);
        \path[highlight] (14.9,10.1) rectangle (22.1,9.9);
        \path[highlight] (18.4,9.9) rectangle (18.6,12.6);
        \path[highlight] (18.4,12.6) rectangle (23.1,12.4);
        \path[highlight] (13.4,6.9) rectangle (13.6,13.5);
        \path[highlight] (20.4,12.4) rectangle (20.6,13.5);
        \path[highlight] (20.9,13.5) rectangle (21.1,12.4);
        \path[highlight] (24.4,12.9) rectangle (24.6,13.5);
        \path[highlight] (26.4,13.5) rectangle (26.6,12.9);
        \draw[lines] (15,10) -- (15,3.5) -- (19.5,3.5) -- (19.5,6) -- (21.5,6) -- (21.5,8.5) -- (22,8.5) -- (22,10) -- cycle;
        \draw[lines] (18.5,12.5) -- (23,12.5) -- (23,10.5) -- (22,10.5) -- (22,10) -- (18.5,10) -- cycle;
        \draw[lines] (22,10.5) rectangle (23,6);
        \draw[lines] (21.5,8.5) rectangle (22,6);
        \draw[lines] (19.5,6) -- (23,6) -- (23,5) -- (22.5,5) -- (22.5,3.5) -- (19.5,3.5) -- cycle;
        \draw[lines] (22.5,5) -- (24.5,5) -- (24.5,3.5) -- (23.5,3.5) -- (23.5,2) -- (22.5,2) -- cycle;
        \draw[lines] (11.5,7) -- (15,7) -- (15,5) -- (14.5,5) -- (14.5,4) -- (12.5,4) -- (12.5,3.5) -- (11.5,3.5) -- cycle;
        \draw[lines] (14.5,5) rectangle (15,3);
        \draw[lines] (12.5,4) rectangle (14.5,2.5);
        \draw[lines] (15,3.5) rectangle (17.5,3);
        \draw[lines] (17.5,3.5) -- (22.5,3.5) -- (22.5,2) -- (20.5,2) -- (20.5,1.5) -- (17.5,1.5) -- cycle;
        \draw[lines] (20.5,1) -- (20.5,2) -- (23.5,2) -- (23.5,1);
        \draw[lines] (16,1) -- (16,1.5) -- (17.5,1.5) -- (17.5,3) -- (14.5,3) -- (14.5,1);
        \draw[lines] (14.5,1) -- (14.5,2.5) -- (11.5,2.5) -- (11.5,1);
        \draw[lines] (10.5,1) -- (10.5,3.5) -- (11.5,3.5) -- (11.5,6) -- (9.5,6);
        \draw[lines] (25.5,1) -- (25.5,2) -- (26.5,2);
        \draw[lines] (26.5,1) -- (26.5,4.5) -- (27.5,4.5);
        \draw[lines] (24.5,3.5) -- (26.5,3.5) -- (26.5,4.5) -- (27.5,4.5);
        \draw[lines] (24.5,3.5) -- (24.5,7.5) -- (27.5,7.5);
        \draw[lines] (13.5,13.5) -- (13.5,7);
        \draw[lines] (20.5,12.5) -- (20.5,13.5);
        \draw[lines] (21,13.5) -- (21,12.5);
        \draw[lines] (23,12.5) -- (23,13) -- (27.5,13);
        \draw[lines] (24.5,13.5) -- (24.5,13);
        \draw[lines] (26.5,13.5) -- (26.5,13);
      \end{tikzpicture}
      \caption{Demonstrating the tiling of $\R^2$ by cornered sets~$\hat{V}_{\hat{x}}$, $\hat{x} \in
        \hat{X}$. The enhanced boundary region $H(\varepsilon)$ is highlighted.}
      \label{fig:planetiling}
    \end{center}
  \end{figure}
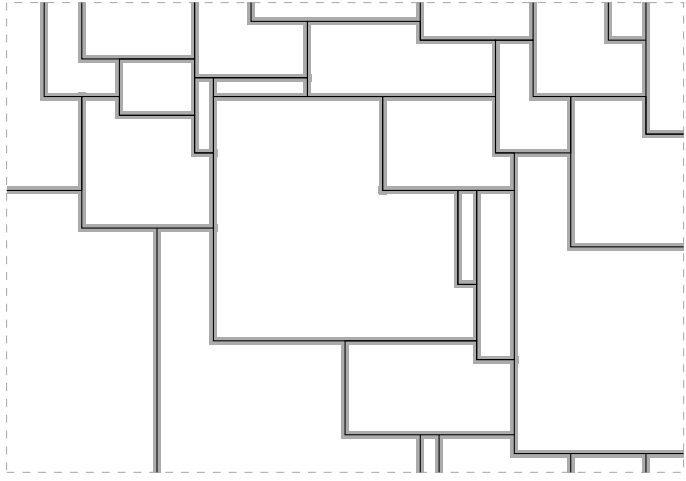
  
  If $u=s+\tfrac{v}{N} \not\in H(\varepsilon)$, then the oracle in \ASM3 can be
  used to correctly compute the $x$ part of $f(v) = (x, \floor{N t})$.
  To ensure that the $\floor{N t}$ part of $f(v) = (x, \floor{N t})$ is also correctly computed, we need $v$
  to avoid a larger set. Formally, we define 
  \[ 
  	\Hgrid(\varepsilon) := 
  	\bigcup_{\hat{x} \in \hat{X}} \Bigl( (\tfrac{1}{N} \N^n + \partial \hat{V}_{\hat{x}}) 
  	\cap
  	\overline{\hat{V}_{\hat{x}}} \Bigr) + [-\varepsilon, \varepsilon]^n. 
  \] 
  Clearly, we have $H(\varepsilon) \subseteq \Hgrid(\varepsilon)$ for all~$N \geq 1$. An example of what $\Hgrid(\varepsilon)$
  may look like is shown in Figure~\ref{fig:Hgridsketch}.
  
  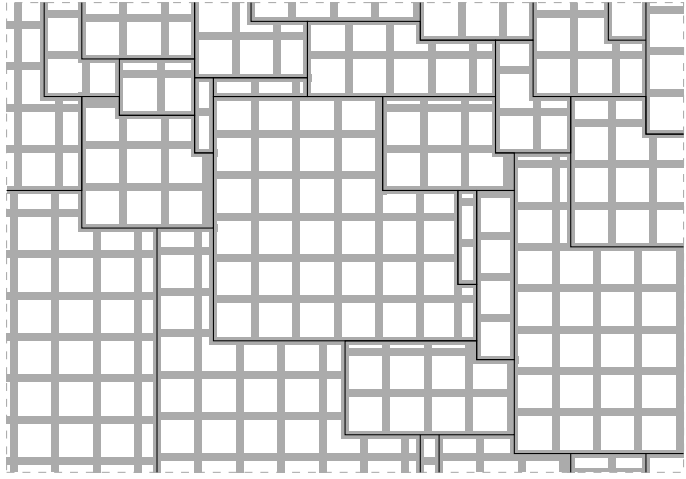
\begin{figure}[Hh]
    \begin{center}
      \begin{tikzpicture}[y=-1cm, scale=0.5]
        \draw[background] (9.5,1) rectangle (27.5,13.5);
        \path[highlight] (12.5,7) rectangle (12.7,4);
        \path[highlight] (13.6,7) rectangle (13.8,4);
        \path[highlight] (14.7,7) rectangle (14.9,5);
        \path[highlight] (11.5,6) rectangle (15,5.8);
        \path[highlight] (11.5,4.9) rectangle (14.5,4.7);
        \path[highlight] (11.5,3.8) rectangle (12.5,3.6);
        \path[highlight] (11.5,3.5) rectangle (11.7,2.5);
        \path[highlight] (10.5,2.5) rectangle (11.5,2.3);
        \path[highlight] (10.5,1.4) rectangle (11.5,1.2);
        \path[highlight] (13.5,4) rectangle (13.7,2.5);
        \path[highlight] (12.5,3) rectangle (14.4,2.8);
        \path[highlight] (14.5,4) rectangle (15,3.8);
        \path[highlight] (16,3.5) rectangle (16.2,3);
        \path[highlight] (17.1,3) rectangle (17.3,3.5);
        \path[highlight] (15.5,3) rectangle (15.7,1);
        \path[highlight] (16.6,3) rectangle (16.8,1.5);
        \path[highlight] (14.5,2) rectangle (17.5,1.8);
        \path[highlight] (17,1.5) rectangle (17.2,1);
        \path[highlight] (18.1,1.5) rectangle (18.3,1);
        \path[highlight] (19.2,1.4) rectangle (19.4,1);
        \path[highlight] (20.3,1.4) rectangle (20.5,1);
        \path[highlight] (18.5,3.5) rectangle (18.7,1.4);
        \path[highlight] (19.6,1.5) rectangle (19.8,3.5);
        \path[highlight] (20.7,3.4) rectangle (20.9,2);
        \path[highlight] (21.8,3.5) rectangle (22,2);
        \path[highlight] (17.5,2.5) rectangle (22.5,2.3);
        \path[highlight] (23.5,5) rectangle (23.7,3.5);
        \path[highlight] (22.5,4) rectangle (24.5,3.8);
        \path[highlight] (22.5,2.9) rectangle (23.5,2.7);
        \path[highlight] (21.5,2) rectangle (21.7,1);
        \path[highlight] (22.6,2) rectangle (22.8,1);
        \path[highlight] (24.5,3.5) rectangle (24.7,1);
        \path[highlight] (25.6,3.5) rectangle (25.8,2);
        \path[highlight] (23.5,2.5) rectangle (26.5,2.3);
        \path[highlight] (23.5,1.4) rectangle (25.5,1.2);
        \path[highlight] (26.5,3.5) rectangle (27.5,3.3);
        \path[highlight] (26.5,2.4) rectangle (27.5,2.2);
        \path[highlight] (26.5,1.3) rectangle (27.5,1.1);
        \path[highlight] (25.5,7.5) rectangle (25.7,3.5);
        \path[highlight] (26.6,7.5) rectangle (26.8,4.4);
        \path[highlight] (24.5,6.5) rectangle (27.5,6.3);
        \path[highlight] (24.5,5.4) rectangle (27.5,5.2);
        \path[highlight] (24.5,4.3) rectangle (26.5,4.1);
        \path[highlight] (20.5,6) rectangle (20.7,3.5);
        \path[highlight] (21.6,6) rectangle (21.8,3.5);
        \path[highlight] (22.7,6) rectangle (22.9,5);
        \path[highlight] (19.5,5) rectangle (22.5,4.8);
        \path[highlight] (19.5,3.9) rectangle (22.5,3.7);
        \path[highlight] (21.5,7.5) rectangle (22,7.3);
        \path[highlight] (21.5,6.4) rectangle (22,6.2);
        \path[highlight] (22,9.5) rectangle (23,9.3);
        \path[highlight] (22,8.4) rectangle (23,8.2);
        \path[highlight] (22,7.3) rectangle (23,7.1);
        \path[highlight] (23,6) rectangle (22,6.2);
        \path[highlight] (15,9) rectangle (22,8.8);
        \path[highlight] (15,7.9) rectangle (21.5,7.7);
        \path[highlight] (15,6.8) rectangle (21.5,6.6);
        \path[highlight] (15,5.5) rectangle (19.5,5.7);
        \path[highlight] (15,4.4) rectangle (19.5,4.6);
        \path[highlight] (16,10) rectangle (16.2,3.5);
        \path[highlight] (17.1,10) rectangle (17.3,3.5);
        \path[highlight] (18.2,9.9) rectangle (18.4,3.5);
        \path[highlight] (19.3,10) rectangle (19.5,3.5);
        \path[highlight] (20.4,10) rectangle (20.6,6);
        \path[highlight] (21.4,8.5) rectangle (21.6,10);
        \path[highlight] (19.5,12.5) rectangle (19.7,10);
        \path[highlight] (20.6,12.5) rectangle (20.8,10);
        \path[highlight] (21.7,12.5) rectangle (21.9,10);
        \path[highlight] (22.8,12.5) rectangle (23,10.5);
        \path[highlight] (18.5,11.5) rectangle (23,11.3);
        \path[highlight] (18.5,10.4) rectangle (22,10.2);
        \path[highlight] (24,13) rectangle (24.2,5);
        \path[highlight] (25.1,13) rectangle (25.3,7.5);
        \path[highlight] (26.2,13) rectangle (26.4,7.5);
        \path[highlight] (27.3,13) rectangle (27.5,7.5);
        \path[highlight] (23,12) rectangle (27.5,11.8);
        \path[highlight] (23,10.9) rectangle (27.5,10.7);
        \path[highlight] (23,9.8) rectangle (27.5,9.6);
        \path[highlight] (23,8.5) rectangle (27.5,8.7);
        \path[highlight] (23,7.6) rectangle (24.5,7.4);
        \path[highlight] (24.4,7.6) rectangle (27.5,7.5);
        \path[highlight] (23,6.5) rectangle (24.5,6.3);
        \path[highlight] (23,5.4) rectangle (24.5,5.2);
        \path[highlight] (25.5,13.5) rectangle (25.7,13);
        \path[highlight] (22,12.6) rectangle (22.2,13.5);
        \path[highlight] (23.1,13.5) rectangle (23.3,13);
        \path[highlight] (24.2,13.5) rectangle (24.4,13);
        \path[highlight] (21,13) rectangle (24.5,13.2);
        \path[highlight] (24.5,13.4) rectangle (26.5,13.5);
        \path[highlight] (20.5,13.1) rectangle (21,13.3);
        \path[highlight] (13.5,7.5) rectangle (15,7.7);
        \path[highlight] (15,8.8) rectangle (13.5,8.6);
        \path[highlight] (15,9.9) rectangle (13.5,9.7);
        \path[highlight] (13.5,11) rectangle (18.5,10.8);
        \path[highlight] (13.5,11.9) rectangle (18.5,12.1);
        \path[highlight] (13.5,13) rectangle (20.5,13.2);
        \path[highlight] (14.5,7) rectangle (14.7,13.5);
        \path[highlight] (15.6,10) rectangle (15.8,13.5);
        \path[highlight] (16.7,10) rectangle (16.9,13.5);
        \path[highlight] (18,10) rectangle (17.8,13.5);
        \path[highlight] (18.9,13.5) rectangle (19.1,12.5);
        \path[highlight] (20,13.5) rectangle (20.2,12.5);
        \path[highlight] (12.5,2.5) rectangle (12.7,1);
        \path[highlight] (13.6,2.5) rectangle (13.8,1);
        \path[highlight] (11.5,1.5) rectangle (14.5,1.3);
        \path[highlight] (10.8,6) rectangle (11,3.5);
        \path[highlight] (9.9,6) rectangle (9.7,1);
        \path[highlight] (11.5,5) rectangle (9.5,4.8);
        \path[highlight] (11.5,3.9) rectangle (9.5,3.7);
        \path[highlight] (9.5,2.6) rectangle (10.5,2.8);
        \path[highlight] (9.5,1.5) rectangle (10.5,1.7);
        \path[highlight] (12.9,7) rectangle (13.1,13.5);
        \path[highlight] (12,7) rectangle (11.8,13.5);
        \path[highlight] (10.9,6) rectangle (10.7,13.5);
        \path[highlight] (9.8,6) rectangle (9.6,13.5);
        \path[highlight] (11.5,6.7) rectangle (9.5,6.5);
        \path[highlight] (9.5,7.6) rectangle (13.5,7.8);
        \path[highlight] (9.5,8.9) rectangle (13.5,8.7);
        \path[highlight] (9.5,10) rectangle (13.5,9.8);
        \path[highlight] (13.6,10.9) rectangle (9.5,11.1);
        \path[highlight] (13.5,12) rectangle (9.5,12.2);
        \path[highlight] (13.5,13.1) rectangle (9.5,13.3);
        \path[highlight] (10.4,1) rectangle (10.6,3.6);
        \path[highlight] (10.4,3.6) rectangle (12.6,3.4);
        \path[highlight] (12.4,4.1) rectangle (12.6,2.4);
        \path[highlight] (11.6,1) rectangle (11.4,2.6);
        \path[highlight] (11.4,2.6) rectangle (14.6,2.4);
        \path[highlight] (14.4,1) rectangle (14.6,5.1);
        \path[highlight] (12.4,4.1) rectangle (14.6,3.9);
        \path[highlight] (14.4,5.1) rectangle (15.1,4.9);
        \path[highlight] (9.5,6.1) rectangle (11.6,5.9);
        \path[highlight] (11.4,7.1) rectangle (11.6,3.4);
        \path[highlight] (11.4,7.1) rectangle (15.1,6.9);
        \path[highlight] (14.9,2.9) rectangle (15.1,10.1);
        \path[highlight] (14.4,2.9) rectangle (17.6,3.1);
        \path[highlight] (17.4,1.4) rectangle (17.6,3.6);
        \path[highlight] (15.9,1.6) rectangle (16.1,1);
        \path[highlight] (15.9,1.6) rectangle (20.6,1.4);
        \path[highlight] (20.4,1) rectangle (20.6,2.1);
        \path[highlight] (14.9,3.4) rectangle (22.6,3.6);
        \path[highlight] (20.4,2.1) rectangle (23.6,1.9);
        \path[highlight] (23.4,1) rectangle (23.6,3.6);
        \path[highlight] (22.4,1.9) rectangle (22.6,5.1);
        \path[highlight] (22.4,5.1) rectangle (24.6,4.9);
        \path[highlight] (19.4,6.1) rectangle (19.6,3.4);
        \path[highlight] (19.4,6.1) rectangle (23.1,5.9);
        \path[highlight] (23.4,3.6) rectangle (26.6,3.4);
        \path[highlight] (25.4,1) rectangle (25.6,2.1);
        \path[highlight] (25.4,2.1) rectangle (26.6,1.9);
        \path[highlight] (26.4,1) rectangle (26.6,4.6);
        \path[highlight] (26.4,4.6) rectangle (27.5,4.4);
        \path[highlight] (24.4,3.4) rectangle (24.6,7.6);
        \path[highlight] (24.4,7.6) rectangle (27.5,7.4);
        \path[highlight] (21.4,5.9) rectangle (21.6,8.6);
        \path[highlight] (21.4,8.6) rectangle (22.1,8.4);
        \path[highlight] (21.9,5.9) rectangle (22.1,10.6);
        \path[highlight] (21.9,10.6) rectangle (23.1,10.4);
        \path[highlight] (22.9,4.9) rectangle (23.1,13.1);
        \path[highlight] (22.9,13.1) rectangle (27.5,12.9);
        \path[highlight] (14.9,10.1) rectangle (22.1,9.9);
        \path[highlight] (18.4,9.9) rectangle (18.6,12.6);
        \path[highlight] (18.4,12.6) rectangle (23.1,12.4);
        \path[highlight] (13.4,6.9) rectangle (13.6,13.5);
        \path[highlight] (20.4,12.4) rectangle (20.6,13.5);
        \path[highlight] (20.9,13.5) rectangle (21.1,12.4);
        \path[highlight] (24.4,12.9) rectangle (24.6,13.5);
        \path[highlight] (26.4,13.5) rectangle (26.6,12.9);
        \draw[lines] (15,10) -- (15,3.5) -- (19.5,3.5) -- (19.5,6) -- (21.5,6) -- (21.5,8.5) -- (22,8.5) -- (22,10) -- cycle;
        \draw[lines] (18.5,12.5) -- (23,12.5) -- (23,10.5) -- (22,10.5) -- (22,10) -- (18.5,10) -- cycle;
        \draw[lines] (22,10.5) rectangle (23,6);
        \draw[lines] (21.5,8.5) rectangle (22,6);
        \draw[lines] (19.5,6) -- (23,6) -- (23,5) -- (22.5,5) -- (22.5,3.5) -- (19.5,3.5) -- cycle;
        \draw[lines] (22.5,5) -- (24.5,5) -- (24.5,3.5) -- (23.5,3.5) -- (23.5,2) -- (22.5,2) -- cycle;
        \draw[lines] (11.5,7) -- (15,7) -- (15,5) -- (14.5,5) -- (14.5,4) -- (12.5,4) -- (12.5,3.5) -- (11.5,3.5) -- cycle;
        \draw[lines] (14.5,5) rectangle (15,3);
        \draw[lines] (12.5,4) rectangle (14.5,2.5);
        \draw[lines] (15,3.5) rectangle (17.5,3);
        \draw[lines] (17.5,3.5) -- (22.5,3.5) -- (22.5,2) -- (20.5,2) -- (20.5,1.5) -- (17.5,1.5) -- cycle;
        \draw[lines] (20.5,1) -- (20.5,2) -- (23.5,2) -- (23.5,1);
        \draw[lines] (16,1) -- (16,1.5) -- (17.5,1.5) -- (17.5,3) -- (14.5,3) -- (14.5,1);
        \draw[lines] (14.5,1) -- (14.5,2.5) -- (11.5,2.5) -- (11.5,1);
        \draw[lines] (10.5,1) -- (10.5,3.5) -- (11.5,3.5) -- (11.5,6) -- (9.5,6);
        \draw[lines] (25.5,1) -- (25.5,2) -- (26.5,2);
        \draw[lines] (26.5,1) -- (26.5,4.5) -- (27.5,4.5);
        \draw[lines] (24.5,3.5) -- (26.5,3.5) -- (26.5,4.5) -- (27.5,4.5);
        \draw[lines] (24.5,3.5) -- (24.5,7.5) -- (27.5,7.5);
        \draw[lines] (13.5,13.5) -- (13.5,7);
        \draw[lines] (20.5,12.5) -- (20.5,13.5);
        \draw[lines] (21,13.5) -- (21,12.5);
        \draw[lines] (23,12.5) -- (23,13) -- (27.5,13);
        \draw[lines] (24.5,13.5) -- (24.5,13);
        \draw[lines] (26.5,13.5) -- (26.5,13);
      \end{tikzpicture}
      \caption{The set $H(\epsilon)$ from Figure~\ref{fig:planetiling} is depicted in black.  The corresponding set $\Hgrid(\varepsilon)$ is highlighted in gray.}
      \label{fig:Hgridsketch}
    \end{center}
  \end{figure}
  
  \begin{lemma}
    Let $L, N, q \in \N$ with $L, N, q \ge 1$ and $\varepsilon$ with
    $0 < \varepsilon \le \frac{1}{2 N L}$ be given.  Let 
    \[
    	S := \tfrac{1}{N L} \{0,\ldots,L-1 \}^n.
    \]
    For $s \in S$, consider the shifted grid 
    \[ 
    	G(s) := \{ s + \tfrac{1}{N} v \mid v \in \calV \} \,.
    \]
    Assume that for some $s\in S$ we have $G(s) \cap \Hgrid(\varepsilon) = \emptyset$. 
    Then, this implies the following two conditions:
    \begin{enumerate}
      \item For every $v \in \calV$, there is exactly one $\hat{x} \in \hat{X}$ with $\hat{V}_{\hat{x}} \cap
        (s + \frac{1}{N} v + (-\varepsilon, \varepsilon)^n) \neq \emptyset$.
      \item Let 
      	$T:=\{ t \in \R^n \mid \exists v\in\calV : \Phi_\calI^{-1}\bigl( s + \tfrac{1}{N} v \bigr) = (x, t) \}$. 
      	Then, we have $T\cap \bigl( [-\varepsilon,\varepsilon]^n + \tfrac{1}{N} \N^n \bigr) = \emptyset$.
    \end{enumerate}
    These conditions show that we can compute $f$ correctly using \ASM3 if the precision~$2^{-k}$
    used there is at most $\frac{\varepsilon}{2}$: the first condition ensures that the $x$ part of
    $f(v) = (x, \floor{N t})$ can be computed exactly, and the second condition ensures that
    $\floor{N t}$ is exact.
  \end{lemma}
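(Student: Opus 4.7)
The plan is to prove both conclusions by contradiction, each time producing a point of $G(s) \cap \Hgrid(\varepsilon)$ out of a purported failure. A preliminary observation is that $H(\varepsilon) \subseteq \Hgrid(\varepsilon)$ (take $m = 0$ in the defining union, using $\partial \hat{V}_{\hat{x}} \subseteq \overline{\hat{V}_{\hat{x}}}$), so the hypothesis also yields $G(s) \cap H(\varepsilon) = \emptyset$. A second elementary fact I rely on is that for a cornered region the corner $\hat{x}$ is itself a boundary point: by \ASM1, $\hat{V}_{\hat{x}} \subseteq \hat{x} + [0,A]^n$, so every neighborhood of $\hat{x}$ contains points of some $\hat{V}_{\hat{y}}$ with $\hat{y} \neq \hat{x}$, placing $\hat{x} \in \partial \hat{V}_{\hat{x}} \subseteq H$.

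For condition (1), fix $v \in \calV$, set $u := s + \tfrac{1}{N} v$, and let $\hat{x}$ be the unique index with $u \in \hat{V}_{\hat{x}}$. If some $\hat{x}' \neq \hat{x}$ also satisfied $\hat{V}_{\hat{x}'} \cap (u + (-\varepsilon,\varepsilon)^n) \neq \emptyset$, I would pick $p$ in the intersection and set $\lambda^* := \sup\{\lambda \in [0,1] \mid (1-\lambda) u + \lambda p \in \hat{V}_{\hat{x}}\}$. The point $h := (1-\lambda^*) u + \lambda^* p$ lies in $u + (-\varepsilon,\varepsilon)^n$ by convexity, and is a boundary point because every neighborhood of it contains both points of $\hat{V}_{\hat{x}}$ (from $\lambda < \lambda^*$) and points outside $\hat{V}_{\hat{x}}$ (from $\lambda > \lambda^*$). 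Hence $u \in H(\varepsilon)$, contradicting the hypothesis.

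For condition (2), suppose there exist $v \in \calV$ and $\hat{x}$ with $u := s + \tfrac{1}{N} v = \hat{x} + t \in \hat{V}_{\hat{x}}$ and $t = \tfrac{1}{N} m + e$ for some $m \in \N^n$, $e \in [-\varepsilon,\varepsilon]^n$. I would examine the auxiliary grid point $\hat{y} := \hat{x} + \tfrac{1}{N} m$, which satisfies $\|u - \hat{y}\|_\infty = \|e\|_\infty \le \varepsilon$. If $\hat{y} \in \hat{V}_{\hat{x}}$, then since $\hat{x} \in \partial \hat{V}_{\hat{x}}$ and $m \in \N^n$ one has $\hat{y} \in (\partial \hat{V}_{\hat{x}} + \tfrac{1}{N} \N^n) \cap \overline{\hat{V}_{\hat{x}}}$, and therefore $u \in \hat{y} + [-\varepsilon,\varepsilon]^n \subseteq \Hgrid(\varepsilon)$. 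Otherwise $\hat{y} \notin \hat{V}_{\hat{x}}$, and the same sup-argument applied to the segment from $u$ to $\hat{y}$ yields $h \in H$ with $\|h - u\|_\infty \le \|\hat{y} - u\|_\infty \le \varepsilon$, so $u \in H(\varepsilon) \subseteq \Hgrid(\varepsilon)$. Either case contradicts $G(s) \cap \Hgrid(\varepsilon) = \emptyset$.

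The substantive point -- and the reason $\Hgrid(\varepsilon)$ must be defined using the entire translated boundary rather than only translated corners -- is the branching in condition (2): when the rounded grid point $\hat{x} + \tfrac{1}{N} m$ escapes $\hat{V}_{\hat{x}}$, one cannot locate a ``corner plus $\tfrac{1}{N} \N^n$-vector'' inside the closure of the correct region near $u$, and must instead detect the failure through an ordinary boundary crossing captured by $H(\varepsilon)$. Once this dichotomy is in place, together with the fact that cornered sets have their corner on the boundary, the rest of the argument is routine convexity and topological bookkeeping.
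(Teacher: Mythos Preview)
Your proof is correct and follows the same contrapositive strategy as the paper. One simplification you overlooked: once part~(1) is established, you know $u + (-\varepsilon,\varepsilon)^n \subseteq \hat{V}_{\hat{x}}$, so $\hat{y} = u - e \in u + [-\varepsilon,\varepsilon]^n \subseteq \overline{\hat{V}_{\hat{x}}}$ automatically. This renders the second branch of your dichotomy in~(2) vacuous, and is exactly how the paper dispatches~(2) in a single line without case analysis. Consequently, your closing editorial remark is slightly off target: the full translated boundary in the definition of $\Hgrid(\varepsilon)$ is needed so that $H(\varepsilon) \subseteq \Hgrid(\varepsilon)$ (allowing the single hypothesis to cover condition~(1)), not to support the branching you introduced for~(2).
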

  
  \begin{proof}
    Observe that $G(s) \cap \Hgrid(\varepsilon) = \emptyset$ implies 
    $G(s) \cap H(\varepsilon) = \emptyset$.  We show that the latter implies the
    first condition of the lemma. The more general $G(s) \cap \Hgrid(\varepsilon) = \emptyset$ is needed to
    eliminate some sporadic cases in the second condition.
    \begin{enumerate}
      \item Since $\bigcup_{\hat{x} \in \hat{X}} \hat{V}_{\hat{x}} = \R^n$ there must be at least
        one such $\hat{x}$. Let $\hat{x} \in \hat{X}$ be one such element. In the case that $s +
        \frac{1}{N} v + (-\varepsilon, \varepsilon)^n$ is not completely contained in
        $\hat{V}_{\hat{x}}$, the translated open disc $s + \frac{1}{N} v + (-\varepsilon,
        \varepsilon)^n$ must contain some $y \in \partial \hat{V}_{\hat{x}}$. But this implies that
        $G(s) \ni s + \frac{1}{N} v \in y + [-\varepsilon, \varepsilon]^n \subseteq H(\varepsilon)$,
        contradicting $G(s) \cap H(\varepsilon) = \emptyset$. Thus $s + \frac{1}{N} v +
        (-\varepsilon, \varepsilon)^n \subseteq \hat{V}_{\hat{x}}$ and we are done.
      \item Assume that $t \in T$ can be written as $t = \frac{1}{N} w + e$ with $w \in \N^n$ and $e
        \in [-\varepsilon, \varepsilon]^n$, i.e., $t \in T \cap ([-\varepsilon, \varepsilon]^n
        + \frac{1}{N} \N^n)$. As $t \in T$ there exists some $u \in G(s)$ with $\Phi_\calI^{-1}\bigl(\pi(u)\bigr) = (x,
        t)$ for $x \in X$. Let $\hat{x} \in \hat{X}$ with $u \in \hat{V}_{\hat{x}}$; then $u =
        \hat{x} + \frac{1}{N} w + e$. But this yields $u \in (\frac{1}{N} \N^n + \partial
        \hat{V}_{\hat{x}}) \cap \overline{\hat{V}_{\hat{x}}} + [-\varepsilon, \varepsilon]^n$ and
        thus $u \in \Hgrid(\varepsilon)$. Hence, $u \in G(s) \cap \Hgrid(\varepsilon)$. \qedhere
    \end{enumerate}
  \end{proof}
  
  We now determine a lower bound on the probability that the desired condition $G(s) \cap \Hgrid(\varepsilon)=\emptyset$ holds 
  when $s$ is chosen uniformly at random in $S$ and $L$ is sufficiently large.
    
  \begin{proposition}
    \label{prop:s-avoidborder}
    Let $q,N\in\N$ with $q,N\ge 1$ and $p \in (0, 1)$ be given.  Choose $L$ and $\varepsilon$ such that
    \[ 
    	L \ge \frac{2 n D (q + A + C + 2)^n}{(1 - p) C^n} \;\; \text{and} \;\; \varepsilon \le \tfrac{1}{2NL}.
    \]
    If we select $s\in S$ uniformly at random, then 
    \[
    	\Pr\bigl( G(s) \cap \Hgrid(\varepsilon) = \emptyset \bigr) \ge p.
    \]
  \end{proposition}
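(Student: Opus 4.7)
\noindent The plan is to apply Markov's inequality to the bad event and then count, cell by cell, the offsets $s\in S$ contributing to it. The decisive observation is that $\tfrac{1}{N}\N^n$-translates preserve residues modulo $\tfrac{1}{N}\Z^n$, so whether $s$ is bad for a cell $\hat{x}$ depends only on the position of $s$ in the fundamental domain $[0,\tfrac{1}{N})^n$ relative to the residue image of $\partial\hat{V}_{\hat{x}}$.

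By the union bound,
\[
\Pr\!\bigl[G(s)\cap\Hgrid(\varepsilon)\neq\emptyset\bigr]\ \le\ \frac{1}{L^n}\sum_{\hat{x}\in\hat{X}}\bigl|\{s\in S:s+\tfrac{v}{N}\in\Hgrid_{\hat{x}}(\varepsilon)\text{ for some }v\in\calV\}\bigr|,
\]
where $\Hgrid_{\hat{x}}(\varepsilon)$ denotes the $\hat{x}$-summand in the definition of $\Hgrid$. By \ASM1, $\hat{V}_{\hat{x}}\subseteq\hat{x}+[0,A]^n$, so only $\hat{x}\in[-A-\varepsilon,q+\varepsilon]^n$ contribute, and by \ASM2 (tiling by cubes of side $C$) there are at most $D\lceil(q+A+2\varepsilon)/C\rceil^n\le D(q+A+C+2)^n/C^n$ such cells.

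The main technical step is a per-cell bound of $2nL^{n-1}$. An offset $s\in S$ is bad for cell $\hat{x}$ iff $s\bmod\tfrac{1}{N}\Z^n$ lies within $\ell_\infty$-distance $\varepsilon$ of the residue of some $z\in\partial\hat{V}_{\hat{x}}$ satisfying $\tfrac{k}{N}+z\in\overline{\hat{V}_{\hat{x}}}$ for some $k\in\N^n$. Combining the cornered structure, the containment $\hat{V}_{\hat{x}}\subseteq\hat{x}+[0,A]^n$, and this feasibility constraint, I would argue that the relevant residues are confined to at most $n$ axis-parallel codimension-$1$ slices of the fundamental cell --- one per coordinate direction, arising from the inner face $\{y_i=\hat{x}_i\}\cap\hat{V}_{\hat{x}}$ and its $\tfrac{1}{N}\N^n$-translates. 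Since $\varepsilon\le\tfrac{1}{2NL}$, each $\varepsilon$-thickened slice contains at most $2L^{n-1}$ elements of $S$, summing to $2nL^{n-1}$. Multiplying by the cell count and using the hypothesis on $L$ then yields $\Pr\le 1-p$, as desired.

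The hard part is the per-cell residue analysis. A direct estimate via $\mathcal{H}^{n-1}(\partial\hat{V}_{\hat{x}})$ is too weak: staircase-type cornered boundaries can have $(n-1)$-dimensional surface area scaling with $(NA)^{n-1}$, which would pollute the required $L$ by precisely that factor. The proof must exploit both the $\tfrac{1}{N}\N^n$-periodicity of $\Hgrid$ --- which collapses all shifted translates of a boundary face onto a single torus residue class --- and the cornered property, in particular that the constraint $\tfrac{k}{N}+z\in\overline{\hat{V}_{\hat{x}}}$ with $k\in\N^n$ forces $k_i=0$ whenever $z$ lies on an ``outer'' boundary face perpendicular to $e_i$, preventing the outer faces from contributing independent residue slices beyond the inner ones. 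A clean formalization likely proceeds through an auxiliary overapproximation $\Hbound(\varepsilon)\supseteq\Hgrid(\varepsilon)$ built from $n$ axis-parallel thickened slabs per cell, for which the per-cell bound is immediate.
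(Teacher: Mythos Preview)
Your overall structure, the modular reduction to residues in $[0,\tfrac{1}{N})^n$, and the cell count via \ASM1--\ASM2 are all correct and match the paper. The gap is in the per-cell bound.

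Your claim that for a fixed cell $\hat x$ at most $2nL^{n-1}$ offsets are bad is false, and the reasoning you give does not establish it. The feasibility condition ``$\exists\,k\in\N^n:\tfrac{k}{N}+z\in\overline{\hat V_{\hat x}}$'' is vacuous as a constraint on $z$: every $z\in\partial\hat V_{\hat x}$ already satisfies it with $k=0$, since $\partial\hat V_{\hat x}\subseteq\overline{\hat V_{\hat x}}$. Hence the residue set you must control is all of $\partial\hat V_{\hat x}\bmod\tfrac{1}{N}\Z^n$, unrestricted. For a cornered cell whose outer boundary is a staircase with many steps at heights generic relative to $\tfrac{1}{N}\Z$, those residues occupy far more than $n$ codimension-$1$ slices of the fundamental domain, and the number of bad offsets for that single cell is of order $(\text{number of steps})\cdot L^{n-1}$, not $2nL^{n-1}$. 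Your observation that ``$k_i=0$ is forced on outer faces'' is true but beside the point: $k=0$ is precisely what makes every outer-face point contribute its own residue.

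The bound $2nL^{n-1}$ per cell \emph{is} valid, but only for the \emph{principal} faces --- the $n$ thin slabs $H'(\hat x,\varepsilon,i)=\{\hat x+t:|t_i|\le\varepsilon,\ -\varepsilon\le t_j\le A+\varepsilon\}$ through the corner --- and what makes the argument close is the \emph{global} fact that these suffice: because the cornered cells tile $\R^n$, every boundary point lies on a principal face of \emph{some} cell (an outer face of $\hat V_{\hat x}$ is simultaneously an inner face of the adjacent $\hat V_{\hat y}$). Thus $H(\varepsilon)\subseteq\bigcup_{\hat y\in\hat X}H'(\hat y,\varepsilon)$, and since after your modular reduction $\Hgrid$ and $H$ have the same residues in $[0,\tfrac1N)^n$, this covering is enough. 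One then sums the bound $2nL^{n-1}$ for $H'(\hat y,\varepsilon)$ --- not for $\partial\hat V_{\hat y}$ --- over the relevant $\hat y$. This is exactly the ``auxiliary overapproximation built from $n$ axis-parallel slabs per cell'' that you gesture at in your last sentence; the piece you are missing is that the containment $\Hgrid(\varepsilon)\subseteq\bigcup_{\hat y}H'(\hat y,\varepsilon)+\tfrac1N\Z^n$ is a global statement requiring reassignment of outer faces to neighboring cells, not a per-cell inclusion.
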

  
  The main idea behind the proof of this proposition is a follows: while $\partial
  \hat{V}_{\hat{x}}$ for a single $\hat{x}$ can be difficult to describe, the union of all $\partial
  \hat{V}_{\hat{x}}$, where $\hat{x}$ ranges over all $\hat{x} \in \hat{X}$, has a much simpler
  structure. For instance, in the case of $n = 2$, i.e., in the plane, it suffices to consider only two
  faces of $\partial \hat{V}_{\hat{x}}$, namely, the ones incident with $\hat{x}$. Let us
  call these two faces the \emph{principal boundaries} of $\hat{V}_{\hat{x}}$. Every boundary point
  is an element of a principal boundary of at least one $\hat{V}_{\hat{x}}$. The principal
  boundaries of some $\hat{V}_{\hat{x}}$ from Figure~\ref{fig:planetiling} are shown in
  Figure~\ref{fig:principalborders}; note that we capped off the ends of the principal boundaries to
  make it possible to distinguish between different principal boundaries. The corners of the sets
  are marked by large dots, and the principal boundaries by thick lines. The proof works by covering
  the principal boundaries by larger sets which are known to cover them -- for this, we need
  assumption~\ASM1.
  
  \begin{figure}[Hh]
    \begin{center}
      \begin{tikzpicture}[y=-1cm, scale=0.5]
        \draw[background] (14.2,3.2) rectangle (24,11.2);
        \path[highlight] (14.9,3.4) rectangle (22.6,3.6);
        \path[highlight] (19.4,6.1) rectangle (19.6,3.4);
        \path[highlight] (21.4,5.9) rectangle (21.6,8.6);
        \path[highlight] (21.4,8.6) rectangle (22.1,8.4);
        \path[highlight] (21.9,5.9) rectangle (22.1,10.6);
        \path[highlight] (21.9,10.6) rectangle (23.1,10.4);
        \path[highlight] (14.9,10.1) rectangle (22.1,9.9);
        \path[highlight] (14.2,7.1) rectangle (15.1,6.9);
        \path[highlight] (14.2,4.1) rectangle (14.6,3.9);
        \path[highlight] (14.4,3.2) rectangle (14.6,5.1);
        \path[highlight] (14.9,3.2) rectangle (15.1,10.1);
        \path[highlight] (17.4,3.2) rectangle (17.6,3.6);
        \path[highlight] (22.4,3.2) rectangle (22.6,5.1);
        \path[highlight] (22.4,5.1) rectangle (24,4.9);
        \path[highlight] (22.9,4.9) rectangle (23.1,11.2);
        \path[highlight] (18.4,9.9) rectangle (18.6,11.2);
        \path[highlight] (14.4,5.1) rectangle (15.1,4.9);
        \path[highlight] (19.4,6.1) rectangle (23.1,5.9);
        \path[fatdots] (15,10) circle (0.1cm);
        \path[fatdots] (14.5,5) circle (0.1cm);
        \path[fatdots] (15,3.5) circle (0.1cm);
        \path[fatdots] (17.5,3.5) circle (0.1cm);
        \path[fatdots] (19.5,6) circle (0.1cm);
        \path[fatdots] (21.5,8.5) circle (0.1cm);
        \path[fatdots] (22.5,5) circle (0.1cm);
        \path[fatdots] (22,10.5) circle (0.1cm);
        \draw[fatlines] (19.5,3.6) -- (19.5,6) -- (23,6);
        \draw[fatlines] (21.9,8.5) -- (21.5,8.5) -- (21.5,6.1);
        \draw[fatlines] (23,10.5) -- (22,10.5) -- (22,6.1);
        \draw[fatlines] (24,5) -- (22.5,5) -- (22.5,3.2);
        \draw[fatlines] (15,3.2) -- (15,3.5) -- (17.3,3.5);
        \draw[fatlines] (14.5,3.2) -- (14.5,5) -- (14.9,5);
        \draw[fatlines] (21.9,10) -- (15,10) -- (15,3.7);
        \draw[fatlines] (17.5,3.2) -- (17.5,3.5) -- (22.4,3.5);
      \end{tikzpicture}
      \caption{Example showing the principal boundaries of some of the cornered sets of $H$ from Figure~~\ref{fig:planetiling}.  The principal boundaries are depicted in black.  We capped them in the 	
      figure to make clear to which corners they belong.}
      \label{fig:principalborders}
    \end{center}
  \end{figure}
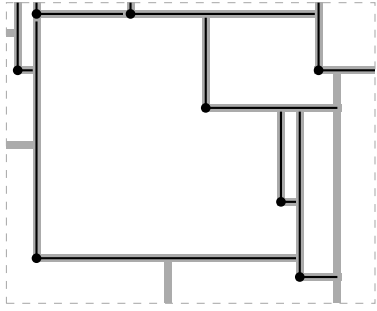
  
  \begin{proof}[Proof of Proposition~\ref{prop:s-avoidborder}.]
    Define
    \begin{eqnarray*}
    	\hat{X}' & := & \hat{X} \cap [-A - \varepsilon, q + 1 + \varepsilon]^n \\
    	F(s)     & := & (s + \frac{1}{N} \Z^n) \cap (\bigcup_{\hat{x} \in \hat{X}'}
                      \partial \hat{V}_{\hat{x}} + [-\varepsilon, \varepsilon]^n).
    \end{eqnarray*}
    We first show that $F(s) = \emptyset$ implies the desired condition $G(s) \cap \Hgrid(\varepsilon)=\emptyset$ 
    and then bound the number of $s$ in $S$ for which it may be the case that $F(s) \neq \emptyset$.
    
    We prove the first part by considering the contraposition of the implication.  
    Assume that some $u \in G(s) \cap \Hgrid(\varepsilon)$ exists. Then there
    exists some $\hat{x} \in \hat{X}$ and some $e \in [-\varepsilon, \varepsilon]^n$ such that $u -
    e \in (\tfrac{1}{N} \N^n + \partial \hat{V}_{\hat{x}}) \cap \overline{\hat{V}_{\hat{x}}}$, and
    $u = s + \frac{1}{N} v$ with $v \in \calV = \Z^n \cap [0, q N - 1]^n$. In particular, $u \in [0,
      q - \frac{1}{NL}]^n$ and hence $u - e \in [-\varepsilon, q + \varepsilon]^n$.
    
    We have $\overline{\hat{V}_{\hat{x}}} \subseteq \hat{x} + [0, A]^n$ since 
    $\hat{V}_{\hat{x}}$ is cornered with corner $\hat{x}$ and \ASM1 holds. This implies 
    $\hat{x} \in u - e - [0,A]^n \subseteq [-\varepsilon - A, q + \varepsilon]^n$ and 
    hence $\hat{x} \in \hat{X}'$. As $u - e \in \tfrac{1}{N} \N^n + \partial \hat{V}_{\hat{x}}$, 
    we can write $u - e = b + \frac{1}{N}w$ with $b \in \partial \hat{V}_{\hat{x}}$ and $w \in \N^n$. 
    But this yields that $b + e\in F(s)$ and hence $F(s) \neq \emptyset$.  
    
    We now bound the number of $s \in S$ with $F(s) \neq \emptyset$. For $\hat{x} \in
    \hat{X}$, define
    \[ 
    	H'(\hat{x}, \varepsilon, i) := \{ \hat{x} + (t_1, \dots, t_n) \mid 
    	-\varepsilon \le t_j \le A + \varepsilon, \; -\varepsilon \le t_i \le \varepsilon \} 
    \] 
    and $H'(\hat{x}, \varepsilon) := \bigcup_{i=1}^n H'(\hat{x}, \varepsilon, i)$. This set
    $H'(\hat{x}, \varepsilon)$ covers the enhanced principal boundaries of $\hat{V}_{\hat{x}}$.  The
    observation on which this proof is based (see Figure~\ref{fig:principalborders}) can now be
    expressed by
    \[
    	H(\varepsilon) \subseteq \bigcup_{\hat{x} \in \hat{X}} H'(\hat{x}, \varepsilon), 
    \] 
    which implies 
    \[
    	F(s) \subseteq \bigcup_{\hat{x} \in \hat{X}'} H'(\hat{x}, \varepsilon) \cap (s +
        \tfrac{1}{N} \Z^n).
    \]
    We count the number of $s$ for which it may be the case that $H'(\hat{x}, \varepsilon) \cap (s +
    \frac{1}{N} \Z^n) \neq \emptyset$ for a fixed $\hat{x}$, and then multiply this by an upper bound
    on the the number of elements in $\hat{X}'$. This allows us to obtain the formula from the theorem statement.
    
    To obtain a upper bound on the cardinality $|\hat{X}'|$, we use \ASM2. Since $\hat{X}'$ is contained in at
    most $\bigl( \frac{q + A + 1 + 2 \varepsilon}{C} + 1 \bigr)^n$ blocks of size $C$, $|\hat{X}'|
    \le D \cdot \bigl( \frac{q + A + 1 + 2 \varepsilon}{C} + 1 \bigr)^n$ by \ASM2.
    
    Now let $\hat{x} \in \R^n$ be arbitrary. As $\varepsilon \le \frac{1}{2 N L}$, there are at most
    $2 L^{n-1}$ choices for $s \in S$ with $H'(\hat{x}, \varepsilon, i) \cap (s + \frac{1}{N} \Z^n)
    \neq \emptyset$. This shows that there are at most 
    \[ 
    	2 L^{n-1} \cdot n \cdot D \cdot \left(
    		\frac{q + A + 1 + 2 \varepsilon}{C} + 1 \right)^n 
    \] 
    bad choices for $s \in S$, while $\abs{S} = L^n$. 
    This, together with $\varepsilon \le \frac{1}{2 N L}$, yields that the probability for
    $G(s) \cap H(\varepsilon) = \emptyset$ is at least 
    \[ 1 - \frac{1}{L} \cdot 2 n D \left( \frac{q
      + A + 2 + C}{C} \right)^n. \qedhere 
    \]
  \end{proof}
  
  \begin{corollary}
    \label{cor:computefcorrectly}
    Let $N, q\in\N$ with $q,N\ge 1$ be given.  Choose $L$ and $\varepsilon$ such that
    \begin{equation}\tag{I}\label{eq:I}
    	L \ge \frac{4 n D (q + A + C + 2)^n}{C^n} \mbox{ and } \varepsilon\le\tfrac{1}{2NL}.
    \end{equation}
    If we select $s \in S$ uniformly random, then
    \[
    	\Pr\bigl( G(s) \cap \Hgrid(\varepsilon) = \emptyset \bigr) \ge \frac{1}{2}.
    \]
    This implies that we can compute $f(v)$ correctly
    for all $v\in \calV$ and thus prepare the state
    \[
    	\frac{1}{\sqrt{V}} \sum_{v\in\calV} |v\> |f(v)\>
    \]
    in step 1 with probability greater or equal to $\tfrac{1}{2}$.  
  \end{corollary}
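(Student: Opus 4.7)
The plan is to derive this corollary as a direct specialization of Proposition~\ref{prop:s-avoidborder} followed by an application of the preceding lemma. First I would instantiate Proposition~\ref{prop:s-avoidborder} with the choice $p = \tfrac{1}{2}$. Substituting $p = \tfrac{1}{2}$ into the required lower bound on $L$ gives
\[
\frac{2 n D (q + A + C + 2)^n}{(1 - \tfrac{1}{2}) C^n} = \frac{4 n D (q + A + C + 2)^n}{C^n},
\]
which is exactly the hypothesis \eqref{eq:I}. Combined with the condition $\varepsilon \le \tfrac{1}{2NL}$, the hypotheses of Proposition~\ref{prop:s-avoidborder} are satisfied, and we conclude $\Pr(G(s) \cap \Hgrid(\varepsilon) = \emptyset) \ge \tfrac{1}{2}$.

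Next, I would translate this probabilistic geometric statement into a statement about the success of step~1 of the quantum algorithm. By the lemma preceding Proposition~\ref{prop:s-avoidborder}, whenever $G(s) \cap \Hgrid(\varepsilon) = \emptyset$ holds, the oracle in assumption~\ASM3, when invoked with precision $2^{-k} \le \tfrac{\varepsilon}{2}$, returns the correct $x$ part and the correct $\floor{N t}$ part of $f(v) = (x, \floor{Nt})$ for every $v \in \calV$. In that event, the reversible implementation $U_f$ evaluates $f$ correctly on every element of the window, so preparing $\frac{1}{\sqrt{V}} \sum_{v \in \calV} \ket{v}\ket{f(v)}$ succeeds.

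Finally, I would remark that one should take $k$ large enough so that $2^{-k} \le \tfrac{\varepsilon}{2}$; since the running time in \ASM3 is polynomial in $k$, this costs only a polynomial overhead. The overall probability bound $\tfrac{1}{2}$ then follows, and there is no real obstacle here — the corollary is essentially a bookkeeping consequence of the proposition combined with the preceding lemma, with $p=\tfrac{1}{2}$ chosen to make the constant in the bound on $L$ explicit.
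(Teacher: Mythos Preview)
Your proposal is correct and matches the paper's approach exactly: the corollary is stated without proof in the paper, and your derivation—specializing Proposition~\ref{prop:s-avoidborder} with $p=\tfrac{1}{2}$ to obtain the bound on $L$, then invoking the preceding lemma to translate $G(s)\cap\Hgrid(\varepsilon)=\emptyset$ into correct computability of $f$—is precisely the intended argument.
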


  \section{Preparing periodic states}
  \label{sec:probperiodic}
  
  The original function $\Phi_\calI^{-1} \circ \pi : \R^n \to \R^n/\Lambda \to \fRep(\calI)$ is
  perfectly periodic with period lattice~$\Lambda$: if $u \in \R^n$ maps to $(x, t) \in
  \fRep(\calI)$, then $u + \lambda$ will also map to $(x, t)$ for all $\lambda \in \Lambda$. 
  
  Due to precision issues we have to work with the function~$f : \Z^n \to \fRep(\calI)$ defined by $v \mapsto (x, \floor{N t})$ if
  $(\Phi_\calI^{-1} \circ \pi)(s + \frac{1}{N} v) = (x, t)$. As $N \lambda$ will most certainly not
  have integral coordinates for $\lambda \in \Lambda$, we cannot directly obtain the collision
  $f(v) = f(v + N \lambda)$. And, if we round the coordinates of $N \lambda$ down or up to the nearest integers,
  it might happen that $f(v + \lfloor N \lambda \rceil)$ yields an $f$-representation~$(x', \floor{N t'})$ with
  $x \neq x'$ -- no matter to which integers we round the coordinates of $N \lambda$.
  
  The first proposition of this section establishes a lower bound on the fraction of grid points for which this
  problem does not occur. For these grid points, the corresponding $f$-representation $(x, t)$ is
  sufficiently far away from the boundaries, meaning that we remain in the same (translated) region~$\hat{V}_{\hat{x}}$ when adding 
  a suitably rounded version of $N \lambda$. 
  
  Similarly to the argument used in the
  proof of Proposition~\ref{prop:s-avoidborder} in the previous section, we estimate the number of
  grid points lying in regions that are too close to the principal boundaries.  The union of all such regions is denoted by $\Hbound$. An example for $n=2$ is shown in
  Figure~\ref{fig:hatHfigure} with $\Hbound$ highlighted.
  
  \begin{proposition}
    Assume that $s \in S$ with $H(\varepsilon) \cap G(s) = \emptyset$ in the notation of the previous
    section. Consider $\Hbound := (-\tfrac{1}{N}, 0]^n + H$. Then \[ \frac{|G(s) \setminus
    \Hbound|}{\abs{G(s)}} \ge 1 - \frac{1}{N} \cdot \frac{n D (q + 1 + A + C)^n (A + 2/N)^{n-1}}{(C
    q)^n}. \]
  \end{proposition}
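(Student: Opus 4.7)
The plan is to mirror the strategy of Proposition~\ref{prop:s-avoidborder}: I will cover the boundary $H$ by its coordinate-aligned principal boundaries, restrict attention to a bounded set $\hat{X}'$ of relevant corners using \ASM1 and \ASM2, and count the bad grid points face by face. A grid point $u = s + \tfrac{1}{N} v \in G(s)$ lies in $\Hbound$ exactly when the half-open forward cube $u + [0, \tfrac{1}{N})^n$ meets $H$, so by the principal-boundary covering $H \subseteq \bigcup_{\hat{x} \in \hat{X}} H'(\hat{x}, 0)$ already established in the proof of Proposition~\ref{prop:s-avoidborder}, it suffices to count the $u \in G(s)$ for which this forward cube meets some face $H'(\hat{x}, 0, i)$.

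First I will localise the relevant corners. Since $G(s) \subseteq [0, q)^n$ and $\overline{\hat{V}_{\hat{x}}} \subseteq \hat{x} + [0, A]^n$ by \ASM1, only corners lying in $[-A, q + \tfrac{1}{N}]^n$ can contribute a bad face; denote this set by $\hat{X}'$. Applying \ASM2 exactly as in Proposition~\ref{prop:s-avoidborder} gives $|\hat{X}'| \le D \bigl((q + A + 1 + C)/C\bigr)^n$ (absorbing $\tfrac{1}{N} \le 1$). Next, for fixed $\hat{x} \in \hat{X}'$ and fixed index~$i$, I will observe that a grid point $u$ is bad with respect to the face $H'(\hat{x}, 0, i)$ if and only if $u_i \in (\hat{x}_i - \tfrac{1}{N}, \hat{x}_i]$ and $u_j \in (\hat{x}_j - \tfrac{1}{N}, \hat{x}_j + A]$ for every $j \ne i$. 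The first condition, being a half-open interval of length $\tfrac{1}{N}$, admits at most one value in the $i$-th coordinate of $s + \tfrac{1}{N} \calV$, while each of the remaining $n-1$ conditions is a half-open interval of length $A + \tfrac{1}{N}$ and admits at most $N(A + \tfrac{2}{N})$ grid values. Hence each face contributes at most $N^{n-1}(A + \tfrac{2}{N})^{n-1}$ bad grid points.

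Summing over the $n$ principal faces per corner and over the at most $|\hat{X}'|$ relevant corners, then dividing by $|G(s)| = (qN)^n$, will yield the claimed bound. The main obstacle I expect is the careful bookkeeping of half-open versus closed intervals and the precise choice of window defining $\hat{X}'$, so that all constants absorb cleanly into the factors $(q + 1 + A + C)^n$ and $(A + 2/N)^{n-1}$. The standing hypothesis $H(\varepsilon) \cap G(s) = \emptyset$ is not strictly needed for this counting (since $H \subseteq \Hbound$ anyway), but it guarantees that the ``good'' grid points $G(s) \setminus \Hbound$ inherit the boundary-avoidance properties of Section~\ref{sec:computef} required by the subsequent periodicity analysis.
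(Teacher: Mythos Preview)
Your proposal is correct and follows essentially the same route as the paper: the paper defines the slabs $H''(\hat{x},i)=\{\hat{x}+(t_1,\dots,t_n)\mid -\tfrac{1}{N}<t_j\le A,\ -\tfrac{1}{N}<t_i\le 0\}$, shows $G(s)\cap\Hbound\subseteq\bigcup_{\hat{x}\in\hat{X}''}\bigcup_i H''(\hat{x},i)$ with $\hat{X}''=\hat{X}\cap[-A,q+1]^n$, and then counts grid points in each $H''(\hat{x},i)$ by $\lceil NA+1\rceil^{n-1}$, which is exactly your per-face count after your forward-cube reformulation. Your slightly tighter window $[-A,q+\tfrac{1}{N}]^n$ and direct interval counting are cosmetic variations of the same argument.
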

  
  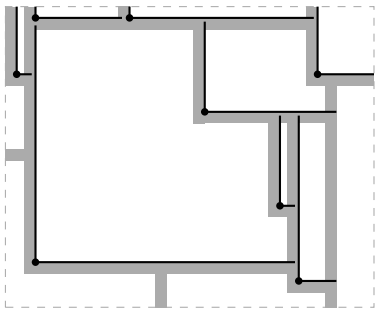
\begin{figure}[Hh]
    \begin{center}
      \newcommand{\mythickness}{0.3}
      \begin{tikzpicture}[y=-1cm, scale=0.5]
        \draw[background] (14.2,3.2) rectangle (24,11.2);
        \path[highlight] (15-\mythickness,3.5) rectangle (22.5,3.5+\mythickness);
        \path[highlight] (14.5-\mythickness,5+\mythickness) rectangle (15,5);
        \path[highlight] (14.5-\mythickness,3.2) rectangle (14.5,5+\mythickness);
        \path[highlight] (17.5-\mythickness,3.2) rectangle (17.5,3.5+\mythickness);
        \path[highlight] (19.5-\mythickness,6+\mythickness) rectangle (19.5,3.5);
        \path[highlight] (19.5-\mythickness,6+\mythickness) rectangle (23,6);
        \path[highlight] (22.5-\mythickness,3.2) rectangle (22.5,5+\mythickness);
        \path[highlight] (23-\mythickness,5) rectangle (23,11.2);
        \path[highlight] (22.5-\mythickness,5+\mythickness) rectangle (24,5);
        \path[highlight] (21.5-\mythickness,6) rectangle (21.5,8.5+\mythickness);
        \path[highlight] (21.5-\mythickness,8.5+\mythickness) rectangle (22,8.5);
        \path[highlight] (22-\mythickness,6) rectangle (22,10.5+\mythickness);
        \path[highlight] (22-\mythickness,10.5+\mythickness) rectangle (23,10.5);
        \path[highlight] (18.5-\mythickness,10) rectangle (18.5,11.2);
        \path[highlight] (15-\mythickness,3.2) rectangle (15,10+\mythickness);
        \path[highlight] (15-\mythickness,10+\mythickness) rectangle (22,10);
        \path[highlight] (14.2,7+\mythickness) rectangle (15,7);
        \path[dots] (15,10) circle (0.1cm);
        \path[dots] (14.5,5) circle (0.1cm);
        \path[dots] (15,3.5) circle (0.1cm);
        \path[dots] (19.5,6) circle (0.1cm);
        \path[dots] (21.5,8.5) circle (0.1cm);
        \path[dots] (22.5,5) circle (0.1cm);
        \path[dots] (22,10.5) circle (0.1cm);
        \path[dots] (17.5,3.5) circle (0.1cm);
        \draw[fatlines] (21.9,8.5) -- (21.5,8.5) -- (21.5,6.1);
        \draw[fatlines] (23,10.5) -- (22,10.5) -- (22,6.1);
        \draw[fatlines] (24,5) -- (22.5,5) -- (22.5,3.2);
        \draw[fatlines] (15,3.2) -- (15,3.5) -- (17.3,3.5);
        \draw[fatlines] (21.9,10) -- (15,10) -- (15,3.7);
        \draw[fatlines] (17.5,3.2) -- (17.5,3.5) -- (22.4,3.5);
        \draw[fatlines] (19.5,3.6) -- (19.5,6) -- (23,6);
        \draw[fatlines] (14.5,3.2) -- (14.5,5) -- (14.9,5);
      \end{tikzpicture}
      \caption{The region $\Hbound$ is highlighted. For some of the cornered sets, the principal
        boundaries are shown.}
      \label{fig:hatHfigure}
    \end{center}
  \end{figure}
  
  \begin{proof}
    Clearly $\abs{G(s)} = (N q)^n$. The cardinality of $G(s) \cap \Hbound$ can be estimated by
    counting the number of cubes of the form $(-\frac{1}{N}, 0]^n$ needed to cover $\Hbound \cap [0,
    q]^n$. For $\hat{x} \in \hat{X}$, define
    \[ 
    	H''(\hat{x}, i) := \{ \hat{x} + (t_1, \dots, t_n) \mid -\tfrac{1}{N} < t_j \le A, \;
    		-\tfrac{1}{N} < t_i \le 0 \} 
    \]
    and $\hat{X}'' := \hat{X} \cap [-A, q + 1]^n$. Then
    \[ 
    	\Hbound           \subseteq \bigcup_{\hat{x} \in \hat{X}}   \bigcup_{i=1}^n H''(\hat{x}, i) \quad \text{and} \quad 
    	G(s) \cap \Hbound \subseteq \bigcup_{\hat{x} \in \hat{X}''} \bigcup_{i=1}^n H''(\hat{x}, i). 
    \]
    Now $H''(\hat{x}, i)$ can be covered by $\ceil{N A + 1}^{n-1}$ such cubes, whence the total
    number of cubes needed is less or equal to
    \[ 
    	|\hat{X''}| \cdot n \cdot \ceil{N A + 1}^{n-1}. 
    \]
    As above, $\hat{X}''$ is contained in at most $(\frac{q + 1 + A}{C} + 1)^n$ blocks of size~$C$,
    whence
    \[ 
    	\bigl|\hat{X}'' \bigr| \le \frac{D}{C^n} (q + 1 + A + C)^n. 
    \]
    Therefore,
    \[ 
    	\frac{|G(s) \cap \Hbound|}{\abs{G(s)}} \le \frac{\frac{D}{C^n} (q + 1 + A + C)^n \cdot n
      \cdot \ceil{N A + 1}^{n-1}}{(N q)^n}. \qedhere 
    \]
  \end{proof}
  
  We now give an explicit lower bound on $N$ guaranteeing that the
  fraction of grid points not contained in $\Hbound$ is sufficiently large.
  
  \begin{corollary}
    \label{corr:insideprob12}
    We may assume w.l.o.g.~that $C \le 1$.  Choose $q$ and $N$ such that
    \begin{equation}\tag{II}\label{eq:II}
    	q \ge 9 \max\{ 1, A \}, \;\; \text{and} \;\; N \ge \max\left\{ \frac{4}{A}, \;
        \frac{8 (n + 1) n \cdot 2^n D A^{n-1}}{3 C^n} \right\}.
    \end{equation}
    If $s \in S$ is such that $\Hgrid(\varepsilon) \cap G(s) = \emptyset$, then 
    \[ 
    	\frac{1}{N} \le \frac{A}{4} \qquad \text{and} \qquad \frac{|G(s) \setminus
      \Hbound|}{\abs{G(s)}} \ge 1 - \frac{1}{4 (n + 1)}. 
     \]
  \end{corollary}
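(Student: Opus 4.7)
The first claim $\tfrac{1}{N}\le \tfrac{A}{4}$ is immediate from $N\ge 4/A$. For the second claim, the plan is to start from the bound
\[
  \frac{|G(s)\setminus \Hbound|}{|G(s)|} \ge 1 - \frac{1}{N}\cdot \frac{nD(q+1+A+C)^n (A+2/N)^{n-1}}{(Cq)^n}
\]
established in the preceding proposition, and show that the error term is at most $\tfrac{1}{4(n+1)}$ under hypotheses \eqref{eq:II} together with $C\le 1$.

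The key simplifications come from the two standing assumptions. First, since $q\ge 9\max\{1,A\}$, we have $A\le q/9$ and $1\le q/9$, so $2\le 2q/9$; combined with $C\le 1$, this gives
\[
  q+1+A+C \;\le\; q+A+2 \;\le\; q + \tfrac{q}{9} + \tfrac{2q}{9} \;=\; \tfrac{4q}{3}.
\]
Second, $N\ge 4/A$ yields $2/N\le A/2$, hence $A+2/N\le 3A/2$. Plugging both bounds into the error term makes the factors of $q$ cancel, leaving
\[
  \frac{1}{N}\cdot \frac{nD(4q/3)^n (3A/2)^{n-1}}{(Cq)^n} \;=\; \frac{nD\, A^{n-1}}{C^n\, N}\cdot \Bigl(\tfrac{4}{3}\Bigr)^{n}\Bigl(\tfrac{3}{2}\Bigr)^{n-1}.
\]

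The routine identity $(4/3)^n(3/2)^{n-1}=2\cdot 2^n/3$ then reduces the error term to
\[
  \frac{2\cdot 2^n\, nD\, A^{n-1}}{3\, C^n\, N}.
\]
Requiring this to be at most $\tfrac{1}{4(n+1)}$ rearranges precisely to the lower bound $N\ge \tfrac{8(n+1)n\cdot 2^n D A^{n-1}}{3 C^n}$ supplied by \eqref{eq:II}, which completes the proof. There is no real obstacle here beyond bookkeeping: the only thing to watch is to keep the numerical constants $9$, $4/3$, $3/2$, and $2$ straight so that the final threshold on $N$ matches \eqref{eq:II} exactly, and to invoke $C\le 1$ at the one place it is needed (absorbing the additive $C$ into the bound $q+A+2\le 4q/3$).
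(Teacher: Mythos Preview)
Your proof is correct and follows essentially the same approach as the paper's: both start from the preceding proposition's bound, use $N\ge 4/A$ to replace $A+2/N$ by $3A/2$, use $q\ge 9\max\{1,A\}$ together with $C\le 1$ to bound $(q+1+A+C)/q$ by $4/3$, and then simplify $(4/3)^n(3/2)^{n-1}=2\cdot 2^n/3$ to arrive at exactly the threshold on $N$ given in \eqref{eq:II}. The only cosmetic difference is that the paper first factors out $q^n$ and bounds $1+1/q+A/q+C/q\le 4/3$, whereas you bound $q+1+A+C\le 4q/3$ directly; the arithmetic is identical.
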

  
  Note that we can always decrease $C$ without invalidating assumption~\ASM2. Moreover, the
  recommended choice of $C$ in Proposition~\ref{prop:infrafromFFNF} for infrastructures obtained
  from function fields or number fields satisfies $C \le 1$.
  
  \begin{proof}
    Using $N \ge \frac{4}{A}$, the complement probability can be bounded by
    \begin{align*}
      & \frac{1}{N} \cdot \frac{n D (q + 1 + A + C)^n (A + 2/N)^{n-1}}{(C q)^n} \\
      {}\le{} & \frac{1}{N} \cdot \frac{n D (1 + 1/q + A/q + C/q)^n (A + A/2)^{n-1}}{C^n}.
    \end{align*}
    As $q \ge 9 \max\{ 1, A \}$, we have $1/q + A/q \le \frac{2}{9}$, and as $C \le 1$, we have $C/q
    \le \frac{1}{9}$. Therefore, $1 + 1/q + A/q + C/q \le 1 + \frac{1}{3} = \frac{4}{3}$, whence $(1
    + 1/q + A/q + C/q)^n (1 + \frac{1}{2})^{n-1} \le 2^n \cdot \frac{2}{3}$. Finally, the choice of
    $N$ ensures that the complement probability is bounded by $\frac{1}{4 (n + 1)}$ from above.
  \end{proof}
  
  Let $\calF\subset X \times \N$ be the set of rounded $f$-representations. More precisely, it is defined by
  \[
  	(x,k) \in \calF \quad\text{iff}\quad \text{there exists $v\in\calV$ with $f(v)=(x,k)$ and $s+\frac{1}{N} v \notin\Hbound$.}
  \]
  
  \begin{lemma}\label{lem:awayfromborder}
  	Choose $q$ and $N$ according to (II).  Assume $s \in S$ is such that $G(s) \cap H(\varepsilon) = \emptyset$. 
    Let $(x,k)$ be the measurement outcome obtained in step 2 of the quantum algorithm.
    Then, 
    \[
    	\Pr\bigl( (x,k)\in\calF \bigr) \ge 1 - \frac{1}{4(n+1)}.
    \]
  \end{lemma}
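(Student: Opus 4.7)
The plan is to reduce the lemma to a direct counting argument using Corollary~\ref{corr:insideprob12}. The key observation is that the post-measurement distribution on the output register is entirely determined by the sizes of preimages of $f$ on $\calV$: measuring the output register of the state $\tfrac{1}{\sqrt{V}}\sum_{v\in\calV}\ket{v}\ket{f(v)}$ yields the outcome $(x,k)$ with probability $|f^{-1}(x,k)|/V$, where $f^{-1}(x,k)$ denotes $\{v\in\calV : f(v)=(x,k)\}$.

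First I introduce the set of ``bad'' input points
\[
    B := \{v\in\calV : s + \tfrac{1}{N}v \in \Hbound\}.
\]
Since $v\mapsto s+\tfrac{1}{N}v$ is a bijection between $\calV$ and $G(s)$, we have $|B|/|\calV| = |G(s)\cap \Hbound|/|G(s)|$. Because $G(s)\cap H(\varepsilon)=\emptyset$ is assumed and $(q,N)$ satisfy \eqref{eq:II}, Corollary~\ref{corr:insideprob12} gives $|G(s)\setminus\Hbound|/|G(s)|\ge 1-\tfrac{1}{4(n+1)}$, hence
\[
    \frac{|B|}{|\calV|}\;\le\;\frac{1}{4(n+1)}.
\]

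Next I translate the definition of $\calF$ into a statement about preimages. By construction, $(x,k)\notin\calF$ means that no preimage $v$ of $(x,k)$ under $f$ satisfies $s+\tfrac{1}{N}v\notin\Hbound$, i.e., $f^{-1}(x,k)\subseteq B$. Since the sets $f^{-1}(x,k)$ are pairwise disjoint as $(x,k)$ varies, summing yields
\[
    \Pr\bigl((x,k)\notin\calF\bigr)
    = \sum_{(x,k)\notin\calF}\frac{|f^{-1}(x,k)|}{V}
    \le \frac{\bigl|\bigcup_{(x,k)\notin\calF} f^{-1}(x,k)\bigr|}{V}
    \le \frac{|B|}{|\calV|} \le \frac{1}{4(n+1)},
\]
from which the claimed bound $\Pr((x,k)\in\calF)\ge 1-\tfrac{1}{4(n+1)}$ follows by taking complements.

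There is essentially no obstacle here beyond correctly bookkeeping the two uses of the bijection $\calV\leftrightarrow G(s)$: the geometric bound on $|G(s)\cap\Hbound|$ from Corollary~\ref{corr:insideprob12} lives on the grid side, whereas the measurement probability lives on the $\calV$ side. Once the definition of $\calF$ is unpacked as ``$f^{-1}(x,k)\not\subseteq B$'', the argument is a one-line union bound and requires no further geometric input.
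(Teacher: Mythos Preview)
Your proof is correct and takes essentially the same approach as the paper: the paper defines the complementary set $\calA=\calV\setminus B$ and observes $\calA\subseteq\bigcup_{(x',t')\in\calF}f^{-1}(x',t')$, which is the contrapositive of your observation that $(x,k)\notin\calF$ implies $f^{-1}(x,k)\subseteq B$, and then invokes Corollary~\ref{corr:insideprob12} in the same way.
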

  
  \begin{proof}
  	For a fixed pair~$(x', k')$, the probability that this pair is sampled is
    $\frac{1}{\abs{\calV}}\abs{f^{-1}(x', k')}$. Let $\calA$ be the set of elements~$v \in \calV$ with
    $s + \frac{1}{N} v \not\in \Hbound$; then 
    \[ \calA \subseteq \bigcup_{(x', t') \in \calF}
    f^{-1}(x', t'), 
    \] 
    whence the probability we want to estimate can be bounded from below by
    $\frac{1}{\abs{\calV}} \abs{\calA}$. But this quantity equals $\frac{|G(s) \setminus
      \Hbound|}{|G(s)|}$, and by Corollary~\ref{corr:insideprob12} it can be bounded from below by
    $1 - \frac{1}{4 (n + 1)}$. 
  \end{proof}
  
  The next proposition makes precise statements on the periodicity of grid elements outside $\Hbound$. 
  First, we show that if $f(v) = f(v')$, then $\frac{1}{N} (v' - v)$ yields an approximation 
  of some element $\lambda \in \Lambda$. Second, we show that for every $\lambda \in \Lambda$ such that 
  $v+N \lambda$ stays within the boundaries of the grid there exists a unique $v'$ with $f(v) = f(v')$ and $\frac{1}{N} (v' - v) \approx \lambda$. Finally,
  we estimate the number of collisions for one specific $v$, i.e., the numbers of $v'$
  in the grid such that $f(v) = f(v')$.
  
  \begin{proposition}
    \label{prop:Mestimate}
    Choose $q$ and $N$ such that
    \begin{align*}
      N & \ge \frac{2 \sqrt{n}}{\lambda_1(\Lambda)} \tag{III}\label{eq:III}\\
      q & >   2n\nu(\Lambda) + \frac{3n}{N}\,. \tag{IV}\label{eq:IV}
    \end{align*}
    Assume that $s\in S$ is such that $G(s) \cap \Hgrid(\frac{1}{2 N L}) = \emptyset$.  Let $v\in\calV$ be such that
    $f(v)$ is equal to the measurement outcome.  Assume that $s + \frac{1}{N} v \not\in \Hbound$.  Let 
    $\calM = \{ v' \in \calV \mid f(v') = f(v) \}$ and $M=\abs{\calM}$.
    \begin{enumi}
      \item Let $v' \in \calM$. We have $\norm{(v - v') - N
      \lambda}_\infty < 1 - \frac{1}{L}$ for a unique $\lambda \in \Lambda$.
      \item Let $\lambda \in \Lambda$ such that $v + N \lambda \in [1,q N - 2]^n$. Then, there exists
      a unique $v' \in \calM$ satisfying $\norm{(v - v') - N \lambda}_\infty < 1 - \frac{1}{L}$.
      \item We have $M \ge M_\ell$, where 
      \begin{eqnarray*}
      	M_\ell 
      	& = & 
      	\frac{q^n}{\det(\Lambda)} \left(
      		1 - \frac{3n}{qN} - \frac{2n\nu(\Lambda)}{q} 
      	\right).
      \end{eqnarray*}
    \end{enumi}
  \end{proposition}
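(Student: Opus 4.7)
The plan is to write $u = s + v/N$ and $(x, t) = \Phi_\calI^{-1}(\pi(u))$ with $u = \hat{x} + t$ for some $\hat{x} \in \hat{X}$, so $f(v) = (x, k)$ with $k = \lfloor Nt \rfloor$. First I would strengthen the earlier $T$-lemma to a coordinate-wise statement: exploiting that, by cornered-ness of $W := \hat{V}_{\hat{x}} - \hat{x}$, the ``$i$-bottom face'' $\{\tau \in W : \tau_i = 0\}$ lies in $\partial W$, I can witness that any $t''$ with $\{N t''_i\}$ within $1/(2L)$ of $0$ or $1$ forces the corresponding $u''$ into $\Hgrid(1/(2NL))$ (by writing $u'' - e = \hat{x}'' + (t''_1,\dots,0,\dots,t''_n) + (0,\dots,k/N,\dots,0)$ with $|e| \le 1/(2NL)$, and using the cornered-ness to keep everything in $\overline{\hat{V}_{\hat{x}''}}$). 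Hence the hypothesis gives $\{N t''_i\} \in (1/(2L), 1 - 1/(2L))$ for every coordinate and every $v'' \in \calV$.

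For (i), I would pick $v' \in \calM$, write $u' = \hat{x}' + t'$, and note that equality of the $x$-components of $f(v) = f(v')$ forces $\lambda := \hat{x}' - \hat{x} \in \Lambda$. Subtracting the two defining relations and multiplying by $N$ gives $(v' - v) - N\lambda = N(t' - t)$; since $\lfloor Nt \rfloor = \lfloor Nt' \rfloor = k$ and all fractional parts of $Nt, Nt'$ lie in $(1/(2L), 1 - 1/(2L))$, each coordinate of $N(t' - t)$ has absolute value strictly less than $1 - 1/L$. Uniqueness follows from (III): two candidate $\lambda$'s would differ by a nonzero lattice vector of $L^2$-norm less than $2\sqrt{n}/N \le \lambda_1(\Lambda)$, a contradiction.

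For (ii), given $\lambda$ with $v + N\lambda \in [1, qN - 2]^n$, I would define $v'$ componentwise by $(v' - v)_i := \lfloor N\lambda_i \rfloor$ if $\{N\lambda_i\} \le \phi_i := \{N t_i\}$ and $\lceil N\lambda_i \rceil$ otherwise. The hypothesis on $v + N\lambda$ places $v' \in \calV$. Setting $w := (v' - v) - N\lambda$ and $\tau := t + w/N$, a case split confirms $\phi_i + w_i \in [0, 1)$, so $\lfloor N\tau \rfloor = k$. To see $\tau \in W$ (and hence $s + v'/N = (\hat{x} + \lambda) + \tau \in \hat{V}_{\hat{x} + \lambda}$), I would rewrite the assumption $s + v/N \notin \Hbound$ as $t + [0, 1/N)^n \subset W$; define $a$ by $a_i := t_i + \max(w_i, 0)/N$ so that $a \in t + [0, 1/N)^n \subset W$; verify $0 \le \tau \le a$ componentwise (the lower bound uses $\phi_i \ge \{N\lambda_i\}$ in the $w_i \le 0$ branch, giving $\tau_i \ge k_i/N \ge 0$); and invoke cornered-ness to conclude $\tau \in [0, a] \subset W$. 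Then $f(v') = (x, k) = f(v)$. Uniqueness: any other $v'' \in \calM$ matching the same $\lambda$ would give $Nt''_i - Nt'_i = (v'' - v')_i$, an integer of absolute value less than $1$ by the shared floor, hence zero.

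For (iii), part (ii) gives $M \ge |\{\lambda \in \Lambda : v + N\lambda \in [1, qN - 2]^n\}|$, i.e., the number of lattice points in a box of side $q - 3/N$. The standard covering-radius/Voronoi bound — the inner box shrunk by $\nu(\Lambda)$ in $L^2$ is covered by the Voronoi cells of lattice points lying in the original box, each of volume $\det(\Lambda)$ — yields $M \ge (q - 3/N - 2\nu(\Lambda))^n / \det(\Lambda)$; condition (IV) ensures positivity of the base, and Bernoulli's inequality $(1 - x)^n \ge 1 - nx$ then produces $M \ge M_\ell$. The main obstacle will be the geometric step in (ii): the $\Hbound$ hypothesis is one-sided, controlling only the positive-direction safety cube above $u$, while $w$ may have mixed-sign components; the intermediate point $a$ combined with cornered-ness of $W$ is what bridges this gap.
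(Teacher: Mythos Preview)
Your proposal is correct and follows essentially the same route as the paper: part~(i) via the identity $N(t'-t)=(v'-v)-N\lambda$ together with the fact that the fractional parts of $Nt,Nt'$ avoid a $\tfrac{1}{2L}$-neighbourhood of integers; part~(ii) by the same componentwise rounding of $N\lambda$ (your $w$ is the paper's $e$), with the $\Hbound$ hypothesis supplying the positive-direction safety cube and cornered-ness closing the negative directions; part~(iii) by the identical Voronoi/covering-radius count followed by Bernoulli. Your treatment is in fact more explicit than the paper's in two places --- you spell out why $\tau\in W$ via the intermediate point $a$, and you isolate the coordinate-wise strengthening of the $T$-lemma that the paper uses tacitly.

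One small wrinkle to watch in that strengthening: when $\{Nt''_i\}$ is within $\tfrac{1}{2L}$ of $1$ (rather than $0$), the nudged point $\hat{x}''+(t''_1,\dots,\lceil Nt''_i\rceil/N,\dots,t''_n)$ need not lie in $\overline{\hat{V}_{\hat{x}''}}$ by cornered-ness alone, since you are increasing the $i$-th coordinate. The fix is immediate: if that nudged point leaves $\overline{\hat{V}_{\hat{x}''}}$, then $u''$ is within $\tfrac{1}{2NL}$ of $\partial\hat{V}_{\hat{x}''}$, hence $u''\in H(\tfrac{1}{2NL})\subseteq\Hgrid(\tfrac{1}{2NL})$ anyway.
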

  
  \begin{proof}\hfill
    \begin{enumi}
      \item Let $(\Phi_\calI^{-1} \circ \pi)(s + \tfrac{1}{N} v) = (x, t)$ and 
      $(\Phi_{\calI}^{-1} \circ \pi)(s + \tfrac{1}{N} v') = (x', t')$; 
      then $f(v) = (x, \floor{N t})$ and $f(v') = (x', \floor{N t'})$. 
      Note that in $\R^n/\Lambda$, we have
      $\dist(x) + t   = s + \tfrac{1}{N} v$ and 
      $\dist(x') + t' = s + \tfrac{1}{N} v'$, 
      whence $\dist(x) + t - (\dist(x') + t') = \frac{1}{N} (v - v')$.
      
      We have $f(v) = f(v')$. Therefore, $x = x'$ and $\floor{N t} = \floor{N t'}$, which
      yields $\norm{t - t'}_\infty < \frac{1}{N}$. By the assumption that $G(s) \cap
      \Hgrid(\frac{1}{2 N L}) = \emptyset$, we have that the coefficients and $N t$ and $N t'$ are
      bounded away from an integer by at least $\frac{1}{2 L}$ (compare
      Corollary~\ref{cor:computefcorrectly}~2), whence we actually have $\norm{t - t'}_\infty <
      \frac{1}{N} - \frac{1}{N L}$.
      
      Now $t - t' = \dist(x) + t - (\dist(x') + t') = \frac{1}{N} (v - v')$ in $\R^n / \Lambda$,
      whence there exists some $\lambda \in \Lambda$ such that $v - v' = N (t - t') + N \lambda$.
      \item Let $(\Phi_\calI^{-1} \circ \pi)(s + \tfrac{1}{N} v) = (x, t)$; then $f(v) = (x,
      \floor{N t})$ and $\dist(x) + t = s + \tfrac{1}{N} v$.  Set $u := v + N \lambda$; then
      $\dist(x) + t = s + \frac{1}{N} u$ as an element of $\R^n / \Lambda$, whence $(x, t) =
      (\Phi_\calI^{-1} \circ \pi)(s + \frac{1}{N} u)$.
      
      There are at most two choices for each coordinate of the vector $e \in (-1, 1)^n$ such that $u
      + e$ has only integral coefficients. For each coordinate, there is exactly one choice if only
      $0$ can be chosen; otherwise, there exists one choice~$a \in (-1, 0)$ and the other is $1 +
      a$. Hence, there exists a unique $e \in (-1, 1)^n$ such that $\floor{N t} = \floor{N t + e}$
      and $v' := u + e \in \Z^n$.
      
      Clearly, $t + \frac{1}{N} e \ge 0$. First, $(x, t + \frac{1}{N} e) \in \fRep(\calI)$ since $s
      + \frac{1}{N} v \not\in \Hbound$. Second, $\dist(x) + (t + \frac{1}{N} e) = s + \frac{1}{N}
      v'$ implies $f(v') = (x, \floor{N t + e}) = (x, \floor{N t}) = f(v)$. Third, $v' \in \calV$
      since $u \in [1, q N - 2]^n$.
      
      It remains to show that $v'$ is unique. Assume that $v', v'' \in \calV$ satisfy $f(v') =
      f(v'')$, $\| (v - v') - N \lambda \|_\infty < 1 - \tfrac{1}{L}$, and $\| (v - v'') - N \lambda
      \|_\infty < 1 - \tfrac{1}{L}$.
      
      By (i) of this proposition, the condition $f(v)=f(v')$ implies that there exists some
      $\lambda' \in \Lambda$ with $\norm{(v' - v'') - N \lambda'}_\infty < 1$. By the triangle
      inequality, the two above conditions on the norms imply that $\norm{v' - v''}_\infty <
      2$. Since $v' - v'' \in \Z^n$, this yields $\norm{v' - v''}_\infty \le 1$.
      
      By applying the triangle inequality again and dividing by $N$, we conclude that
      $\norm{\lambda'}_\infty < \frac{2}{N}$.  Now, if $v' \neq v''$, then $\norm{(v' - v'') - N
      \lambda'}_\infty < 1$ would imply that $\lambda' \neq 0$. Then, $0 < \norm{\lambda'}_2 <
      \sqrt{n} \cdot \frac{2}{N}$ would hold.  But, this would violate $\lambda_1(\Lambda) \ge
      \frac{2 \sqrt{n}}{N}$, which follows from (III).  Therefore, we must have $v' = v''$ and,
      thus, $v'$ is unique.
      \item Using (ii), we see that a lower bound $M_\ell$ on $M$ is given by the cardinality of $N \Lambda
      \cap (-v + [1, q N - 2]^n)$. Let $\nu(N\Lambda)$ be the covering radius of $N\Lambda$.  
      Let $\lambda\in N\Lambda$.  If $\lambda\in (-v + [1+\nu(N\Lambda), q N - 2 - \nu(N\Lambda)]^n)$, then the
      Voronoi cell $\hat{V}_{N\Lambda}(\lambda)$ of $\lambda$ is entirely contained in $\bigl(-v + [1, q N
        - 2]^n\bigr)$.  As the volume of $\hat{V}_{N\Lambda}(\lambda)$ is $\det(N \Lambda)$, this
      yields the lower bound
      \begin{eqnarray*}
      	\frac{\bigl(qN-3-2\nu(N\Lambda)\bigr)^n}{\det(N\Lambda)} & \ge &
      	\frac{q^n}{\det(\Lambda)} \left(
      		1 - \frac{3n}{qN} - \frac{2n\nu(\Lambda)}{q} 
      	\right),
      \end{eqnarray*}
      which is greater than $0$ provided that assumption~(III) holds.
      \qedhere
    \end{enumi}
  \end{proof}
  
  \section{Sampling approximations of vectors of the dual lattice $\Lambda^*$}
  \label{sec:fouriersampling}
  
  \subsection{Sampling in dimension greater than one}
	
	We present here our new method of sampling approximation of the vectors of the dual lattice $\Lambda^\ast$, which improves the success probability of the overall algorithm by at least the exponential
	factor $2^{n^2-1}$. 
	
	We determine the probability that the quantum algorithm outputs a $w\in\calW$ such that 
  $\frac{1}{2 n q} w$ is sufficiently close to some $\lambda^* \in \Lambda^*$.
  We have to impose certain conditions on $w$ to be able to show that the probability of observing a good approximation is bounded away from $0$.  For $\lambda^*\in\Lambda^*$, let 
	\[
  	\calR_{\lambda^*} = \Big\{(w_1,\ldots,w_n) \Bigm| w_k \in \{ \floor{2nq \lambda^*_k},
        \floor{2nq \lambda^*_k} + 1 \} \mbox{ for $k=1,\ldots,n$} \Big\}.
  \]  
  Observe that for all $w\in\calR_{\lambda^*}$, we have
  \[
  	\norm{\frac{w}{2nq} - \lambda^*}_2 \le \frac{1}{2\sqrt{n}q}.
  \]
  The following proposition gives a lower bound on the probability of observing elements of $\calR_{\lambda^*}$ provided that 
  $\calR_{\lambda^*}\subset [0,2nq\kappa N]^n$, where $\kappa\in (0,1)$.
  
  In the remainder of this section, we make the two following assumptions:
  \begin{enumi}
  	\item the random shift $s\in S$ is such that $G(s)\cap\Hgrid(\tfrac{1}{2NL})=\emptyset$ and
  	\item all measurement outcomes $f(v)$ are such that $s+\tfrac{v}{N}\not\in\Hbound$.
  \end{enumi}
  The relevant results can be stated in a more direct way if we do not have to include these two assumptions in the formulation of the propositions.
  Note that we can estimate the probabilities that they are satisfied with the help of Corollary~\ref{cor:computefcorrectly} and
  Lemma~\ref{lem:awayfromborder}.  These will be included in the final analysis of the algorithm.
  
	\begin{proposition}
    \label{prop:lbprobone}
    Choose $q$ and $N$ according to (III) and (IV).  Choose $\kappa$ 
    such that 
    \begin{equation}\tag{V}\label{eq:V}
    	\kappa < \frac{1}{8n} - \frac{1}{4 n q N}.
    \end{equation}
    Then, for all $\lambda^\ast\in\Lambda^\ast$ with $\calR_{\lambda^*}\subset [0,2qn\kappa N]^n$, we have the lower bound
  	\begin{eqnarray*}
  		\Pr\bigl( \calR_{\lambda^*} \bigr) 
  		& = &
  		\sum_{w\in\calR_{\lambda^*}} \Pr(w) \\
  		& = &
  		\sum_{w\in\calR_{\lambda^*}} 
  		\left|
  			\frac{1}{\sqrt{M W}} 
  			\sum_{v' \in \calM} 
  			\exp\biggl(2 \pi i \, v' \cdot \frac{w}{2 n q N} \biggr) 
  		\right|^2 \\
  		& \ge & 
  		\frac{2^{n-1} M_\ell}{W} \cos^2\Bigl( \pi \big(\tfrac{1}{4} + \tfrac{1}{2qN} + 2 \kappa n \big) \Bigr).
  	\end{eqnarray*}
  \end{proposition}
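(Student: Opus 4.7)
The plan is to unpack the amplitude
$a_w = \frac{1}{\sqrt{MW}}\sum_{v'\in\calM} e^{2\pi i v'\cdot w/(2nqN)}$
using the structural description of $\calM$ supplied by Proposition~\ref{prop:Mestimate}, estimate each $|a_w|^2$ for $w\in\calR_{\lambda^*}$ individually, and then sum the $2^n$ contributions.

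First I would fix a reference $v\in\calM$ (the one yielding the post-measurement $f(v)$) and use Proposition~\ref{prop:Mestimate}(i)--(iii) to identify $\calM$ with a subset $\Lambda_0\subseteq\Lambda$ of cardinality $M\ge M_\ell$ via $\lambda\mapsto v'_\lambda := v+N\lambda+\xi_\lambda$, with $\|\xi_\lambda\|_\infty<1$. Writing $w = 2nq\lambda^*+\eta$ where $\eta_k = w_k-2nq\lambda^*_k\in(-1,1]$ for each $w\in\calR_{\lambda^*}$, the substitution into $a_w$ yields a product of three phases per $\lambda$: one that only depends on $v$ and $w$, the term $e^{2\pi i\lambda\cdot\lambda^*}$, and a residual $e^{2\pi i\phi_\lambda(w)}$ with
\[
\phi_\lambda(w) \;=\; \frac{\lambda\cdot\eta}{2nq} \;+\; \frac{\xi_\lambda\cdot w}{2nqN}.
\]
The key cancellation is that $\lambda\cdot\lambda^*\in\Z$ (because $\lambda\in\Lambda$, $\lambda^*\in\Lambda^*$), so $e^{2\pi i\lambda\cdot\lambda^*}=1$ and only $\phi_\lambda(w)$ varies with $\lambda$ among the summed terms.

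Next I would show that $|\phi_\lambda(w)|$ is uniformly small for all $\lambda\in\Lambda_0$ and all $w\in\calR_{\lambda^*}$, arriving at the specific bound
$\pi|\phi_\lambda(w)| \le \pi\bigl(\tfrac14+\tfrac{1}{2qN}+2\kappa n\bigr)=:\Delta$.
The first piece comes from $\|\lambda\|_\infty\lesssim q/2$ after centring the lattice window inside $\calM$, combined with $\|\eta\|_\infty\le 1$ and condition (IV); the second piece comes from $\|\xi_\lambda\|_\infty<1$, the hypothesis $\calR_{\lambda^*}\subseteq[0,2nq\kappa N]^n$, and condition (V). Once this uniform phase bound is established, all summands $e^{2\pi i\phi_\lambda(w)}$ lie in the arc $[-\Delta,\Delta]$, so projecting onto the direction of the common phase gives $\bigl|\sum_{\lambda\in\Lambda_0}e^{2\pi i\phi_\lambda(w)}\bigr|\ge M\cos\Delta$ and hence $|a_w|^2\ge (M/W)\cos^2\Delta$. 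Summing this lower bound over the $2^n$ elements of $\calR_{\lambda^*}$ and invoking $M\ge M_\ell$ produces $\frac{2^n M_\ell}{W}\cos^2\Delta$; a factor of two is lost in converting the real-part bound into the modulus bound (or equivalently in the centring step), leaving the claimed $\frac{2^{n-1}M_\ell}{W}\cos^2\Delta$.

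The main obstacle will be Step~3, the \emph{uniform} phase control. The bound $|\lambda\cdot\eta|/(2nq)\le 1/4+1/(2qN)$ is not immediate from the raw inclusion $\calM\subseteq\calV=[0,qN-1]^n$ (which only gives $\le 1/2$); one has to shift the reference $v$ so that $\Lambda_0$ is as balanced as possible around the origin, and then invoke (IV) to exclude boundary lattice points. A subsidiary obstacle is tracking the error introduced by the rounding $w_0=\lfloor 2nq\lambda^*\rfloor$ coordinate-by-coordinate, since the two possible values of each $w_k$ shift $\eta_k$ by a full unit, and one must verify that both choices fall inside the same $[-\Delta,\Delta]$ window after the centring. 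Once this step is carried through, the remaining algebraic manipulations -- applying $\cos$ monotonicity on $[0,\pi/2]$ and substituting $M\ge M_\ell$ from Proposition~\ref{prop:Mestimate}(iii) -- are routine.
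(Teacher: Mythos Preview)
Your overall setup—the decomposition $v'=v+N\lambda+\xi_\lambda$, the substitution $w=2nq\lambda^*+\eta$, and the cancellation $e^{2\pi i\lambda\cdot\lambda^*}=1$—matches the paper. The gap is in how you control the phase and, consequently, where the factor $2^{n-1}$ really comes from.

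Your uniform bound $|\phi_\lambda(w)|\le \tfrac14+\tfrac{1}{2qN}+2\kappa n$ for \emph{all} $w\in\calR_{\lambda^*}$, obtained via $\|\lambda\|_\infty\lesssim q/2$ and $\|\eta\|_\infty\le 1$, is too weak to be useful: the summands are $e^{2\pi i\phi_\lambda(w)}$, so the angular half-width is $2\pi|\phi_\lambda(w)|\le \pi/2+\pi(\text{small})$, and the resulting real-part bound is $\cos(\pi/2+\cdots)<0$. To reach $\cos\bigl(\pi(\tfrac14+\tfrac{1}{2qN}+2\kappa n)\bigr)$ you would need the \emph{range} of $\phi_\lambda(w)$ (equivalently, $|\phi_\lambda(w)|$ after centring to be at most $\tfrac18+\cdots$), which simply fails for roughly half of the $w$'s. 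Concretely, writing $p_k=2nq\lambda^*_k-\lfloor 2nq\lambda^*_k\rfloor$ and indexing $w$ by the subset $A=\{k:w_k=\lfloor 2nq\lambda^*_k\rfloor\}$, the phase range as $v'$ varies is $L_A+\text{(small)}$ with $L_A=\tfrac{1}{2n}\bigl(\sum_{k\in A}p_k+\sum_{\ell\notin A}(1-p_\ell)\bigr)=\|\eta\|_1/(2n)$, and $L_A$ can be as large as $\tfrac12$.

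The paper's key step, which your proposal is missing, is the complementarity identity $L_A+L_{\bar A}=\tfrac12$, forcing $\min\{L_A,L_{\bar A}\}\le\tfrac14$. Hence at least $2^{n-1}$ of the $2^n$ elements $w\in\calR_{\lambda^*}$ satisfy $L_A\le\tfrac14$, and for \emph{those} $w$ the phase range is $\le\tfrac14+\tfrac{1}{2qN}+2\kappa n$, yielding $|a_w|^2\ge(M/W)\cos^2\bigl(\pi(\tfrac14+\cdots)\bigr)$. Summing only over these good $w$'s gives the factor $2^{n-1}$. So the $2^{n-1}$ is not a loss ``in converting the real-part bound into the modulus bound'' (there is no such loss—the modulus dominates the real part), nor from the centring; it is exactly the count of $w$'s for which the phase-spread is provably below the critical threshold.
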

		
	\begin{proof}
	  Let $v'$ be an arbitrary but fixed element of $\calM$. Proposition~\ref{prop:Mestimate} (i) shows that $\norm{v' - v - N
    \lambda}_\infty < 1 - \frac{1}{L}$ for some $\lambda \in \Lambda$ since condition (III) is satisfied.
    Define the error terms $e_1(v') = v' - v - N \lambda$ and $e_2(w) = \frac{w}{2 n q N} - \frac{\lambda^\ast}{N}$ for $w\in\calR_{\lambda^*}$.
    Both error types arise because both the rescaled lattice $N\Lambda$ and the dual lattice $\Lambda^*$ are not necessarily integral.
    
    To be able to show that the probability of observing a $w\in\calR_{\lambda^*}$ is bounded away from zero by a constant, 
    we have (i) to carry out the Fourier transform over a larger window and (ii) to disregard $w$ whose infinity-norm is too large.  These two 
    measures makes it possible to mitigate the effects of the first and second errors, respectively.  Unfortunately, both measures are also responsible for the
    exponentially decreasing success probability with increasing dimension $n$.
    
    To understand the effects of these error terms, we expand the inner product $v' \cdot \tfrac{y}{2 n q N} $ as follows
    \begin{eqnarray*}
      v' \cdot \frac{w}{2 n q N} 
      & = & 
      \bigl(v + N \lambda + e_1(v')\bigr) \cdot \frac{w}{2 n q N} \\
      & = &
      (v + N \lambda) \cdot \frac{w}{2 n q N} + e_1(v') \cdot \frac{w}{2 n q N} \\
      & = &
      (v + N \lambda) \cdot \frac{\lambda^*}{N} + (v + N \lambda) \cdot e_2(w) + e_1(v') \cdot \frac{w}{2 n q N} \\
      & = &
      v \cdot \frac{\lambda^*}{N} + \lambda \cdot \lambda^* + (v + N \lambda) \cdot e_2(w) + e_1(v') \cdot \frac{w}{2 n q N}. \\
    \end{eqnarray*}
    Since $v \cdot \tfrac{\lambda^*}{N}$ is constant and $\lambda\cdot\lambda^*\in\Z$, we only have to consider the inner products 
    $e_1(v') \cdot \tfrac{w}{2 n q N}$ and $(v + N \lambda) \cdot e_2(w)$.
  	
  	Using the upper bound $\norm{e_1(v')}_\infty \le 1 - \frac{1}{L}$,   	
    the absolute value of the first error term is seen to be bounded from above by
    \[
    	\Bigl|e_1(v') \cdot \frac{w}{2 n q N}\Bigr| \le \frac{n}{2 n q N} \norm{e_1(v')}_\infty
    	\norm{w}_\infty \le \frac{1}{2 q N} (1 - \tfrac{1}{L}) 2 n q \kappa N < \kappa n.
    \]  	
  	To bound the norm of the second error term, we set 
  	\[
  		p_k = 2nq\lambda^*_k - \floor{2nq\lambda^*_k}
  	\]
  	for $k=1,\ldots,n$.  In words, the values $p_k$ correspond to
		the errors caused by rounding down the coefficients of $2nq\lambda^*$ to the nearest integer.  Set 
		\[
			A       = \{ k \, : \, w_k = \floor{2nq\lambda^*_k} \} \quad\mbox{and}\quad 
			\bar{A} = \{ \ell \, : \, w_\ell = \floor{2nq\lambda^*_\ell} + 1 \}.
		\]
		Observe that for $k\in A$ the $k$th coefficient of the error vector $e_2(w):=\tfrac{w}{2nqN} - \tfrac{\lambda^*}{N}$ is equal to 
		$\tfrac{-p_k}{2nqN}$ and for $\ell\in\bar{A}$ the $\ell$th coefficient is equal to $\tfrac{1-p_\ell}{2nqN}$.
		Set 
		\[
			L_A = \frac{1}{2n} \biggl( \sum_{k\in A} p_k + \sum_{\ell\in \bar{A}} (1 - p_\ell) \biggr),
		\]
		which is equal to $qN \| e_2(w) \|_1$.
	
		Since $v+ N \lambda \in [-1+\tfrac{1}{L},qN-\tfrac{1}{L}]^n\subset (-1,qN)^n$, we have
		\begin{eqnarray*}
			- \frac{1}{2n} \sum_{k\in A} p_k - \frac{1}{2nqN} \sum_{\ell\in\bar{A}} (1-p_\ell)
			& \le & 
			(v + N \lambda) \cdot e_2(w) \\
			\frac{1}{2n} \sum_{\ell\in \bar{A}} (1-p_\ell) + \frac{1}{2nqN} \sum_{k\in A} p_k 
			& \ge &
			(v + N \lambda) \cdot e_2(w) 
		\end{eqnarray*}
		Therefore, the sum $(v + N \lambda) \cdot e_2(w) + e_1(v') \cdot \frac{w}{2 n q N}$ of both error terms ranges over an interval of length at most
		\[
			L_A + \frac{1}{2qN} + 2\kappa n.
		\]
		Clearly, the identity
		\[
			L_{\bar{A}} = \frac{1}{2} - L_A
		\]
		holds for all $A\subseteq\{1,\ldots,n\}$.  This simple fact implies the crucial inequality 
		\[
			\min\{ L_A, L_{\bar{A}} \} \le \frac{1}{4}.
		\]
		The latter holds because otherwise we would have $L_A>\tfrac{1}{4}$ and $L_{\bar{A}}=\tfrac{1}{2} - L_A > \tfrac{1}{4}$, which would lead
		to the contradiction $\tfrac{1}{2} > \tfrac{1}{2}$.  
	
		In the remainder of the proof, without loss of generality $A$ always denotes a subset of $\{1,\ldots,n\}$ with
		$L_A\le\frac{1}{4}$.  
	
		Let $A$ be such subset and $w$ the corresponding approximation of $2nqN\lambda^*$.
		This means that the sum we want to estimate can be written as
    \[ 
    	\sum_{v'\in\calM} \exp(2 \pi i (\alpha + \beta_{v'})) =
    	\exp(2 \pi i \alpha) \sum_{v'\in\calM} \exp(2 \pi i \beta_{v'}) 
    \] 
    with $\alpha, \beta_{v'} \in \R$ and
    $-\tfrac{1}{2} L_{\mathrm{phase}} \le \beta_{v'} \le \tfrac{1}{2} L_{\mathrm{phase}}$, where $L_{\mathrm{phase}}=L_A+\tfrac{1}{2qN}+2\kappa n$.
    Hence, the real part of every term $\exp(2 \pi i \beta_{v'})$ is $\cos(2 \pi \beta_{v'}) \ge \cos(\pi L_{\mathrm{phase}})$ since $L_{\mathrm{phase}} <
    \tfrac{1}{2}$ due to $L_A\le\tfrac{1}{4}$ and the special choice of $\kappa$ in (IV).
    
    This implies that the absolute value of the sum is bounded from below by $M \cos(\pi L_{\mathrm{phase}})$ for this 
    particular $w$.  Finally, we obtain the desired claim
		\begin{eqnarray*}
			\Pr(\calR_{\lambda^*}) 
			& \ge &
			\frac{M}{W} \sum_{A \, : \, L_A\le \tfrac{1}{4}} \cos^2\Bigl( \pi \bigl( L_A + \frac{1}{2qN} + 2\kappa n \bigr) \Bigr) \\
			& \ge &
			\frac{2^{n-1} M}{W} \cos^2\Bigl( \pi \bigl( \frac{1}{4} + \frac{1}{2qN} + 2\kappa n \bigr) \Bigr)
		\end{eqnarray*}
		by noting that there are at least $2^{n-1}$ subsets $A$ with $L_A\le\frac{1}{4}$.
		\end{proof}

		\subsection{Sampling in dimension one}
		\begin{remark}[One-dimensional infrastructures]
			In the special case of one-dimensional infrastructures, it is better to work
                        with the sets
			\[
				\calR_{\lambda^\ast} = \{ w \mid w= [2q\lambda^\ast] \}
			\]
			for $\lambda^\ast\in\Lambda^\ast$.  This is because we may then choose a slightly larger $\kappa$.  The upper bound can be increased to
			\[
				\kappa < \frac{1}{8} - \frac{1}{8 q N},
			\]
			which leads to the higher lower bound on the success probability
			\[
				\Pr\bigl( \calR_{\lambda^\ast} \bigr) \ge \frac{M}{W} \cos^2\Bigl( \pi \bigl( \frac{1}{4} + \frac{1}{4qN} + 2\kappa n \bigr) \Bigr).
			\]
			This bound is established by using the same arguments as in the proof of the above proposition and by observing that the upper bound on $|e_2(w)|$ is reduced by a factor 
			of $2$.  The latter statement is due to the fact that 
			for all $\lambda^\ast\in\Lambda^\ast$, we have the better approximation
			\[
				\Bigl| \frac{w}{2q} - \lambda^\ast \Bigr| \le \frac{1}{4q},
			\]
			where $w\in\calR_{\lambda^\ast}$.
		\end{remark}
	
 	\section{Lattice theoretic tools -- Part 1}\label{sec:part1}

  \subsection{Lattices of dimension greater than one}  
  
  We now show how to obtain a generating set of a full-rank lattice $L$ in $\R^n$ by first sampling $n$ lattice vectors that are contained in the window $[0,b)^n$ and then
  $n+1$ lattice vectors that are contained in the larger window $[0,b_0)^n$.  If we chose $b$ to be a sufficiently larger than the covering radius of $L$, then 
  the first $n$ lattice vectors generate a full-rank sublattice $L_0$ of $L$ with probability greater or equal to $\tfrac{1}{4}$ (Subsection~\ref{subsec:fullRank}).
  Once we have such sublattice $L_0$, the next $n+1$ lattice vectors that we sample from the larger window $[0,b_0)$ generate together with the first $n$ vectors 
  the entire lattice $L$ with probability greater or equal to $\hat{\zeta}-\frac{1}{4} \ge 0.184$, where $\hat{\zeta}$ is a certain constant (Subsection~\ref{subsec:entireLattice}).
  
  Our current proof requires that we use two windows.  We think that it is possible to prove a similar result, while relying only on one window.  
  
  Note that these results will be used with $L = \Lambda^*$ throughout the rest of the paper.
  
  \subsubsection{Probability of generating a full-rank sublattice $L_0$ of $L$}\label{subsec:fullRank}
  
  Let $L$ be a lattice in $\R^n$ of full rank. For $\lambda \in L$, let $V_L(\lambda)$ be its (open)
  Voronoi cell. We know that $V_L(\lambda)$ is contained in an open sphere of radius $\nu(L)$
  centered around $\lambda$, where $\nu(L)$ is the covering radius of $L$, and that the volume of
  $V_L(\lambda)$ is $\det (L)$. Moreover, if $\lambda \neq \lambda'$, $V_L(\lambda) \cap
  V_L(\lambda') = \emptyset$, and $\bigcup_{\lambda \in L} \overline{V_L(\lambda)} = \R^n$.
  
  \begin{lemma}
    If $b > 2 \nu(L)$. Then 
    \[ 
    	\frac{(b - 2\nu(L))^n}{\det(L)} \le \abs{L \cap [0, b)^n} \le \frac{(b + 2\nu(L))^n}{\det(L)}. 
    \]
  \end{lemma}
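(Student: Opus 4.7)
The plan is to exploit the Voronoi tiling of $\R^n$ by $L$. Recall that the open Voronoi cells $V_L(\lambda)$ are pairwise disjoint and each has volume $\det(L)$, their closures cover $\R^n$, and each $V_L(\lambda)$ sits inside the open $\ell_2$-ball of radius $\nu(L)$ around $\lambda$, which in turn is contained in the cube $\lambda + [-\nu(L), \nu(L)]^n$. Counting $|L \cap [0,b)^n|$ amounts to comparing the total volume of the Voronoi cells attached to these lattice points with the volume of a suitable axis-aligned cube.

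For the upper bound, I would argue that for each $\lambda \in L \cap [0,b)^n$ the cell $V_L(\lambda)$ lies inside $\lambda + [-\nu(L), \nu(L)]^n \subseteq [-\nu(L), b+\nu(L)]^n$. Since the cells are pairwise disjoint and each has volume $\det(L)$, summing gives
\[
  |L \cap [0,b)^n|\cdot \det(L) \;\le\; \mathrm{vol}\bigl([-\nu(L), b+\nu(L)]^n\bigr) \;=\; (b+2\nu(L))^n.
\]

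For the lower bound, use that $b>2\nu(L)$ to ensure the shrunken cube $[\nu(L), b-\nu(L))^n$ is non-empty. Any point $x$ in this shrunken cube lies in some closed Voronoi cell $\overline{V_L(\lambda)}$, and then $\|\lambda - x\|_2 \le \nu(L)$ forces $\lambda \in x + [-\nu(L), \nu(L)]^n \subseteq [0,b)^n$. Therefore the shrunken cube is contained in $\bigcup_{\lambda \in L \cap [0,b)^n}\overline{V_L(\lambda)}$, and taking volumes (boundaries have measure zero and do not affect the estimate) yields
\[
  (b-2\nu(L))^n \;\le\; |L \cap [0,b)^n|\cdot \det(L).
\]

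There is no real obstacle here; this is a standard volume-packing argument. The only care needed is at the boundary of $[0,b)^n$: using the inclusion in the \emph{closed} enlarged cube for the upper bound and the inclusion of the \emph{half-open} shrunken cube in a union of \emph{closed} cells for the lower bound is enough, since the measure-zero boundaries do not disturb the volume comparisons.
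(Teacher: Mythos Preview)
Your proof is correct and follows essentially the same Voronoi-cell volume argument as the paper: for the upper bound you pack the disjoint cells $V_L(\lambda)$ with $\lambda\in L\cap[0,b)^n$ into the enlarged cube, and for the lower bound you cover the shrunken cube $[\nu(L),b-\nu(L))^n$ by closed cells whose centers must lie in $[0,b)^n$. Your write-up is in fact a bit more careful than the paper's about the open/closed and half-open boundary bookkeeping, but the core idea is identical.
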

  
  \begin{proof}
    If $\lambda \in L$ satisfies $V_L(\lambda) \cap [\nu, b - \nu)^n \neq \emptyset$,
    then we must have $\lambda \in [0, b)^n$. Therefore, $ (b - 2
    \nu)^n / \det(L) \le \abs{L \cap [0, b)^n}$.
    
    If $\lambda \in L \cap [0,b)^n$, then we must have $V_L(\lambda) \subseteq [-\nu, b + \nu)^n$.  
    Therefore, $\abs{L \cap [0, b)^n} \le (b + 2\nu)^n$.
  \end{proof}
  
  \begin{lemma}
    Let $b > 0$ and $H$ be a $k$-dimensional hyperplane, $1 \le k < n$. Then \[ \abs{L \cap H \cap
    [0, b)^n} \le \frac{n^{k/2} (b + 2 \nu(L))^k (2 \nu(L))^{n-k}}{\det(L)}. \]
  \end{lemma}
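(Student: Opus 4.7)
The plan is to follow the same Voronoi-cell packing argument used in the preceding lemma. For each lattice point $\lambda \in L \cap H \cap [0, b)^n$, its open Voronoi cell $V_L(\lambda)$ has $n$-volume $\det(L)$, is contained in the open ball of radius $\nu := \nu(L)$ about $\lambda$, and the cells for distinct $\lambda$ are pairwise disjoint. Since $\lambda \in H$, the cell lies inside the tube $H + B(0,\nu)$; since $\lambda \in [0, b)^n$, the cell also lies in $[-\nu, b+\nu]^n$. Summing the volumes of all such cells therefore gives
\[
	\abs{L \cap H \cap [0, b)^n} \cdot \det(L) \le \volume\bigl((H + B(0,\nu)) \cap [-\nu, b + \nu]^n\bigr).
\]

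To estimate the right-hand side, I would apply Fubini along the orthogonal decomposition $\R^n = H_0 \oplus H_0^\perp$, where $H_0$ is the $k$-dimensional linear subspace parallel to $H$. For each point in $H$, the fiber of the tube perpendicular to $H$ is contained in the open $\nu$-ball in $H_0^\perp$; its $(n-k)$-volume is at most $(2\nu)^{n-k}$ by inscribing that ball in an axis-aligned $(n-k)$-cube of side $2\nu$. The base over which we integrate is at most $H \cap [-\nu, b + \nu]^n$, a subset of the $k$-dimensional affine subspace $H$ whose Euclidean diameter is bounded by the diameter $\sqrt{n}(b + 2\nu)$ of the ambient $n$-cube. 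Hence it fits inside a $k$-cube of that side length, and its $k$-volume is at most $n^{k/2}(b + 2\nu)^k$. Multiplying the base and fiber bounds and dividing by $\det(L)$ yields the claimed inequality.

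The only non-routine step is the $k$-dimensional volume bound on $H \cap [-\nu, b + \nu]^n$: one has to observe that any subset of a $k$-dimensional affine subspace of Euclidean diameter $D$ fits in an axis-aligned $k$-cube of side $D$ (take coordinate-wise extents in any orthonormal basis of the parallel linear subspace), so its $k$-volume is at most $D^k$. This is the one place where constants need to be tracked carefully, and the factor $n^{k/2}$ in the statement matches exactly what falls out of taking $D = \sqrt{n}(b + 2\nu)$; everything else in the argument is a routine packing estimate.
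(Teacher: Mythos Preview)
Your argument is essentially the paper's: pack the Voronoi cells into $X = (H + B_\nu(0)) \cap [-\nu, b+\nu]^n$, split the volume as base times fiber along $H_0 \oplus H_0^\perp$, and bound the $k$-volume of the base by $n^{k/2}(b+2\nu)^k$; the paper obtains this last bound by showing that the extent of the cube along each vector of an orthonormal basis $b_1,\dots,b_k$ of $H_0$ is at most $\norm{b_i}_1(b+2\nu)\le\sqrt{n}(b+2\nu)$, which is exactly your diameter argument phrased coordinate-wise.

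One small correction: the base of your Fubini integral is the orthogonal projection of $X$ (hence of the cube $[-\nu,b+\nu]^n$) onto $H$, which is in general strictly larger than $H \cap [-\nu, b+\nu]^n$. This does not affect the bound, since orthogonal projection is $1$-Lipschitz and therefore the projection of the cube onto $H$ still has diameter at most $\sqrt{n}(b+2\nu)$; your $k$-volume estimate $n^{k/2}(b+2\nu)^k$ thus goes through unchanged. (The paper is equally informal at this step, and its coordinate-wise bound likewise applies to every parallel slice of the cube, not just to $H\cap[-\nu,b+\nu)^n$.)
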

  
  \begin{proof}
    Let $\lambda \in L \cap H \cap [0, b)^n$. Then $V_L(\lambda) \subseteq X := [-\nu, b +
    \nu)^n \cap (H + B_\nu(0))$, where $B_\nu(0)$ is a sphere of radius~$\nu$ centered around
    0. Therefore, $\abs{L \cap H \cap [0, b)^n} \le \volume(X) / \det(L)$, and we have to
    estimate $\volume(X)$.
    
    Clearly, if $\volume_k(Y)$ denotes the $k$-dimensional volume of $Y := H \cap [-\nu, b +
    \nu)^n$, we have that $\volume(X) \le \volume_k(Y) \cdot (2 \nu)^{n - k}$. (In fact, we can
    replace $(2 \nu)^{n - k}$ by the volume of an $(n - k)$-dimensional sphere of radius~$\nu$.)
    
    Let $b_1, \dots, b_k$ be an orthonormal basis of $H$. Set $T := \{ (x_1, \dots, x_k) \in \R^k
    \mid \sum_{i=1}^k x_i b_i \in [-\nu, b + \nu)^n \}$; then $\volume(T) = \volume_k(Y)$. A point
    $y \in Y$ corresponds to $(\< y, b_1 \>, \dots, \< y, b_k \>) \in T$. Write $b_i = (b_{i1},
    \dots, b_{in})$ and $y = (y_1, \dots, y_n) \in [-\nu, b + \nu)^n$, set $A_{ij} := b + \nu$ if
    $b_{ij} \ge 0$ and $A_{ij} := \nu$ if $b_{ij} < 0$. Then 
    \[ 
    	\sum_{j=1}^n |b_{ij}| (A_{ij} - (b + 2\nu)) \le \< y, b_i \> = 
    	\sum_{j=1}^n y_j b_{ij} \le \sum_{j=1}^n |b_{ij}| A_{ij}, 
    \] 
    implying that $\< y, b_i \>$ ranges over an interval of length $\norm{b_i}_1 (b + 2 \nu) \le \sqrt{n} (b + 2
    \nu)$. Therefore, 
    \[ 
    	\volume(T) \le n^{k/2} (b + 2 \nu)^k. 
    \]
  \end{proof}
  
  \begin{corollary}\label{cor:probGenFullRankSubLattice}
    Assume that $b \ge \max\{ 8 n - 2, n^{(n-1)/2} 2^{n+1} - 2 \} \cdot \nu(L)$. Let
    \begin{align*}
      X :={} & (L \cap [0, b)^n)^n \\
      \text{and} \quad Y :={} & \{ (y_1, \dots, y_n) \in X \mid \mspan_\R(y_1, \dots, y_n) =
      \R^n \}.
    \end{align*}
    Then \[ \abs{Y} > 0.289 \abs{X} > \frac{1}{4} \abs{X}. \]
  \end{corollary}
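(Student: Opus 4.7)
The plan is to bound $\abs{Y}/\abs{X}$ from below by a sequential conditioning argument, at each step using the two preceding lemmas to control the fraction of lattice points in $L \cap [0,b)^n$ that lie in a proper subspace.

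First I would view $(y_1,\ldots,y_n) \in X$ as being drawn uniformly and independently from $L \cap [0,b)^n$, and set $V_{i-1} := \mspan_\R(y_1,\ldots,y_{i-1})$ with $V_0 := \{0\}$. Since $\mspan_\R(y_1,\ldots,y_n) = \R^n$ whenever $y_i \notin V_{i-1}$ for every $i$, we have $\abs{Y}/\abs{X} \ge \prod_{i=1}^n (1 - r_i)$, where $r_i$ is the worst-case conditional probability (over prior choices of $y_1,\ldots,y_{i-1}$) that $y_i \in V_{i-1}$. Combining the lower bound on $\abs{L\cap [0,b)^n}$ from the first lemma with the upper bound on $\abs{L \cap V_{i-1} \cap [0,b)^n}$ from the second lemma (extended trivially to $i=1$, where $V_0 = \{0\}$ contains at most one lattice point), this yields
\[
	r_i \;\le\; \frac{n^{(i-1)/2}\,(b+2\nu(L))^{i-1}\,(2\nu(L))^{n-i+1}}{(b-2\nu(L))^n}.
\]

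Next I would parametrize by setting $c := 2\nu(L)/(b-2\nu(L))$ and $d := (b+2\nu(L))/(b-2\nu(L)) = 1 + 2c$, so that $r_i \le n^{(i-1)/2} d^{i-1} c^{n-i+1}$. The ratio $r_{i+1}/r_i = \sqrt{n}\,d/c$ exceeds $1$ in our regime, so $(r_i)$ is increasing and dominated by $r_n = n^{(n-1)/2} d^{n-1} c$; a geometric series estimate then gives $\sum_{i=1}^n r_i \le r_n/(1 - c/(\sqrt{n}\,d))$. Applying the standard inequality $\prod(1 - r_i) \ge 1 - \sum r_i$ for $r_i \in [0,1]$, the claim $\abs{Y}/\abs{X} \ge 0.289$ reduces to showing $\sum_i r_i \le 0.711$ under the two hypotheses on $b$.

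These two hypotheses play complementary roles: $b \ge (8n-2)\nu(L)$ forces $c \le 1/(4n-2)$ and hence $d \le 1 + 1/(2n-1)$, controlling the base ratio $d$, while $b \ge (n^{(n-1)/2} 2^{n+1} - 2)\nu(L)$ forces the product $n^{(n-1)/2} c$ to be of order $2^{-n}$, which makes the dominant term $r_n$ of order $(d/2)^{n-1}$. The main obstacle will be purely numerical: for $n \ge 3$ the sum $\sum_i r_i$ decays geometrically like $2^{-n}$ and the estimate is very comfortable, but in the borderline case $n = 2$ the factor $d^{n-1}$ cannot be ignored, and one must verify directly that the two hypotheses on $b$ are chosen tight enough to push $\sum_i r_i$ below $0.711$ (substituting $c \le 1/6$ and $d \le 4/3$ gives $r_2 \le 2\sqrt{2}/9 \approx 0.314$ and $r_1 \le 1/36$, so $\sum_i r_i < 0.35$ and $\abs{Y}/\abs{X} > 0.65 > 0.289$).
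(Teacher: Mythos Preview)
Your sequential conditioning setup and use of the two lemmas match the paper exactly; the divergence is in how you pass from the individual bounds $r_i$ to the final constant. The paper does not bound $\sum_i r_i$. Instead it proves the clean pointwise inequality $P_k \le 2^{-k}$ (in your notation, $r_{k+1}\le 2^{-k}$) directly from the hypothesis $j+2 \ge n^{(n-1)/2}2^{n+1}$, by reducing to $n^{(n-1)/2}2^n \le (j+2)\bigl(\tfrac{j-2}{j+2}\bigr)^n$ and invoking Bernoulli's inequality (this is where $j\ge 8n-2$ enters). Then the bound $\prod_{k=1}^{n-1}(1-2^{-k}) \ge \prod_{k=1}^{\infty}(1-2^{-k}) > 0.289$ follows from Euler's Pentagonal Number Theorem. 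This has two advantages over your route: it is uniform in $n$ with no case distinction, and it produces the explicit factor $\prod_{i=1}^{n-1}(1-2^{-i})$, which the paper actually uses later to sharpen the success probability for small $n$.

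Your route via $\prod(1-r_i)\ge 1-\sum r_i$ also works, and in fact gives a \emph{better} constant than $0.289$ for each fixed $n$ (e.g.\ $>0.65$ for $n=2$, $>0.8$ for $n=3$), but your proposal leaves the $n\ge 3$ case as a heuristic (``decays geometrically like $2^{-n}$ and the estimate is very comfortable''). To make this rigorous you would still need to bound $r_n = n^{(n-1)/2}d^{n-1}c$ uniformly: using $j+2\ge n^{(n-1)/2}2^{n+1}$ one gets $n^{(n-1)/2}c \lesssim 2^{-n}$ and $d^{n-1}=(1+2c)^{n-1}\le e^{2c(n-1)}$ bounded, so $r_n = O(2^{-n})$ and the geometric tail is negligible. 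This is a genuine proof, just messier than the paper's, and it does not recover the product formula $\prod_{i=1}^{n-1}(1-2^{-i})$ that feeds into the later analysis.
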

  
  Note that $\max\{ 8 n - 2, n^{(n-1)/2} 2^{n+1} - 2 \} = n^{(n-1)/2} 2^{n+1} - 2$ unless $n \le 2$,
  in which case the maximum is $8 n - 2$.
  
  The proof of this corollary is similar to the proof of the first part of Satz~2.4.23 in
  \cite{arthurDiss}. Note that the proof in \cite{arthurDiss} is not correct: the quantity
  $\frac{\abs{M_i \cap \calB}}{\abs{M_{i-1} \cap \calB}}$ in the proof can be $> \frac{1}{2}$; for
  example, consider $r = 3$, $M = \Z^3$, $n > 0$ arbitrary (in \cite{arthurDiss}, $n \nu(M)$ is what
  we denote by $b$, i.e., $\calB = [0, n \nu(M))^n$), $x_1 = (1, n \nu(M) - 1, -1)$, $x_2 = (0, 1, n
  \nu(M) - 1)$, $x_3 = (0, 0, 1)$; then $M_1 \cap \calB$ contains three elements, while $M_2 \cap
  \calB$ contains five elements. The problem is that $\det(M_i)$ cannot be bounded in terms of
  $\nu(M)$ and $\det(M_{i-1})$, as it was claimed in that proof. We proceed differently by
  considering the quantity $\frac{\abs{M_i \cap \calB}}{\abs{M \cap \calB}}$ directly, and our bound
  on the minimal size of $\calB$ is in fact better than the bound given in \cite{arthurDiss}.
  
  Also, note that for specific small~$n$, one can obtain better bounds of $\abs{Y}$ in term of
  $\abs{X}$. As the proof will show, a lower bound on $\abs{Y}$ is given by $\abs{X} \cdot
  \prod_{i=1}^{n-1} (1 - 2^{-i})$. The following table gives explicit values for this factor for
  small values of $n$, rounded down to a precision of $10^{-3}$:
  \begin{center}
    \begin{tabular}{c|ccccc}
      $n$ & $2$ & $3$ & $4$ & $5$ & $6$ \\\hline
      $\prod_{i=1}^{n-1} (1 - 2^{-i})$ & 0.500 & 0.375 & 0.328 & 0.307 & 0.298 \\
    \end{tabular}
  \end{center}
  
  \begin{proof}
    Assume that $y_1, \dots, y_k \in X$ are linearly independent. We have to compute the probability
    that $y_{k+1} \in X$ is not contained in the hyperplane generated by $y_1, \dots, y_k$, which is
    of dimension~$k$. Write $b = j \nu(L)$ with $j\ge n^{(n-1)/2} 2^{n+1} - 2$.  By the above
    lemmata, the probability that $y_{k+1}$ is in a $k$-dimensional hyperplane is bounded from above
    by
    \begin{eqnarray*}
      P_k & := & \frac{n^{k/2} (b + 2 \nu)^k (2 \nu)^{n-k}}{\det(L)} \cdot \frac{\det(L)}{(b - 2 \nu)^n} \\
          & =  & \frac{n^{k/2} (b + 2 \nu)^k (2 \nu)^{n-k}}{(b - 2 \nu)^n} = n^{k/2} \frac{(j + 2)^k 2^{n - k}}{(j - 2)^n}.
    \end{eqnarray*}
    We now prove that $P_k \le 2^{-k}$ holds, which is equivalent to
    \begin{align*}
    	n^{k/2} (j + 2)^k 2^n  \le (j - 2)^n.
    \end{align*}
    Clearly, the left-hand side is maximal for $k=n-1$, giving the strictest condition
    \begin{align*}
    	n^{(n-1)/2} 2^n \le (j + 2) \biggl( \frac{j - 2}{j + 2} \biggr)^n\,.
    \end{align*}
    The right-hand side is bounded from below by $(j+2)/2$ provided that $j\ge 8n-2$ (this follows from Bernoulli's inequality).
  	Hence, the above condition is satisfied for $j\ge n^{(n-1)/2} 2^{n+1} - 2$.

    To conclude the proof, note that the probability we look for is therefore bounded from below
    by 
    \[ 
    	\prod_{i=1}^{n-1} (1 - 2^{-i}) \ge \prod_{i=1}^\infty (1 - 2^{-i}) > 0.289 > \frac{1}{4}, 
    \] 
    where the last two inequalities follows by Euler's Pentagon Number Theorem.
  \end{proof}
 
	\subsubsection{Probability of generating finite abelian groups}

	\begin{proposition}\label{prop:genFiniteAbelianGroup}
  	Let $G$ be a finite abelian group known to be generated by~$n$ elements. Then the probability that
  	$n + 1$ elements drawn uniformly at random from $G$ generate $G$ is at least 
  	\begin{align*}
  		\hat{\zeta} := \prod_{i=2}^\infty \zeta(i)^{-1} \ge 0.434\,,
  	\end{align*}
  	where $\zeta$ denotes the Riemann zeta function.
	\end{proposition}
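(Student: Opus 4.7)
The plan is to reduce the problem to a computation over prime fields using the structure theorem for finite abelian groups. By the primary decomposition, $G = \bigoplus_p G_p$ where $G_p$ is the $p$-primary component, so the uniform distribution on $G$ is the product of the uniform distributions on each $G_p$. Hence the event that $n+1$ random samples generate $G$ factors as the intersection of the independent events that their projections generate each $G_p$. Because any generating set of $G$ projects onto generating sets of the quotients, every $G_p$ is also generated by at most $n$ elements.

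For a fixed prime $p$, I would invoke the Burnside basis theorem: a subset of a finite abelian $p$-group $G_p$ generates $G_p$ if and only if its image in the $\F_p$-vector space $G_p / p G_p$ spans it. Setting $k_p := \dim_{\F_p}(G_p/pG_p) \le n$, the desired per-prime probability equals the probability that $n+1$ uniform vectors in $\F_p^{k_p}$ span the space. A direct count of $(n+1) \times k_p$ matrices of full column rank over $\F_p$ yields
\[
  \prod_{i=0}^{k_p-1}\bigl(1 - p^{\,i - (n+1)}\bigr) \;\ge\; \prod_{j=2}^{n+1}\bigl(1 - p^{-j}\bigr) \;\ge\; \prod_{j=2}^{\infty}\bigl(1 - p^{-j}\bigr),
\]
where the first inequality uses $k_p \le n$ so that the exponents $-j$ range only over $j \ge 2$.

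Next I would take the product of these bounds over all primes and interchange the resulting double product (justified by absolute convergence of $\sum_{p}\sum_{j \ge 2} p^{-j}$), and then apply the Euler product $\zeta(s)^{-1} = \prod_p (1 - p^{-s})$ valid for real $s > 1$:
\[
  \prod_p \prod_{j=2}^{\infty}\bigl(1 - p^{-j}\bigr) \;=\; \prod_{j=2}^{\infty}\prod_p \bigl(1 - p^{-j}\bigr) \;=\; \prod_{j=2}^{\infty} \zeta(j)^{-1} \;=\; \hat\zeta.
\]
The numerical lower bound $\hat\zeta \ge 0.434$ follows by computing the first few factors $\zeta(2)^{-1}, \zeta(3)^{-1},\ldots$ explicitly and using $\zeta(j) \le 1 + 2^{2-j}$ for $j$ large to control the tail.

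The main obstacle is the careful formulation of the Burnside/Nakayama reduction so that it applies uniformly across all primes simultaneously, including the (infinitely many) primes $p$ for which $G_p$ is trivial — in which case the per-prime probability is $1$ and the stated lower bound is vacuous. One also has to be attentive that the lower bound $\prod_{j=2}^{\infty}(1-p^{-j})$ is sharp only when $k_p = n$, and conservative otherwise, so the proof must really use $k_p \le n$ rather than an exact value. The interchange of the two infinite products and the numerical estimate are routine, and the Burnside step is standard for finite abelian $p$-groups, since $pG_p$ coincides with the Frattini subgroup $\Phi(G_p)$.
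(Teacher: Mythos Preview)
Your proposal is correct and follows essentially the same route as the paper: primary decomposition to reduce to $p$-groups, then the observation that generating a finite abelian $p$-group is equivalent to spanning $G_p/pG_p \cong \F_p^{k_p}$ (the paper cites \cite{pomerance-generate} for the resulting formula $\prod_{i=1}^{r}(1-p^{-(n+1-r+i)})$, which is exactly your $\prod_{i=0}^{k_p-1}(1-p^{i-(n+1)})$), followed by the bound via $k_p\le n$ and the Euler product. The only difference is cosmetic: you supply the Burnside/Frattini reduction and the rank count explicitly, whereas the paper imports them from Pomerance.
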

        
        Note that for small~$n$, better lower bounds on the probability can be obtained. If $G$ can
        be created by $n$~elements, then a better lower bound is $\prod_{i=2}^{n+1} \zeta(i)^{-1}$;
        this is always larger than $\hat{\zeta}$. The following table gives explicit values for this
        product for small values of $n$, rounded down to a precision of $10^{-3}$:
        \begin{center}
          \begin{tabular}{c|ccccc}
            $n$ & $2$ & $3$ & $4$ & $5$ & $6$ \\\hline
            $\prod_{i=2}^{n+1} \zeta(i)^{-1}$ & 0.505 & 0.467 & 0.450 & 0.442 & 0.439 \\
          \end{tabular}
        \end{center}
        
	\begin{proof}
  	Let $p_1, \dots, p_k$ be the prime divisors of $|G|$, and let $G_i$ be the $p_i$-Sylow subgroup of
  	$G$. Then $G = G_1 \oplus \dots \oplus G_k$. Let $(g_1, \dots, g_{n+1}) \in G^{n+1}$ be $n+1$ elements
  	of $G$; then we can write $g_i = (g_{i1}, \dots, g_{ik}) \in G_1 \times \dots \times G_k$. Now \[
  	G = \langle g_1, \dots, g_{n+1} \rangle \Longleftrightarrow \forall j : G_j = \langle g_{1j},
  	\dots, g_{n+1,j} \rangle. \] Hence, it suffices to bound the probability for abelian $p$-groups.
  
  	In the proof of the theorem in \cite{pomerance-generate}, it is shown that the probability that $n +
        1$ elements in an abelian $p$-group of $p$-rank~$r$ generate the group is \[ \prod_{i=1}^r
        (1 - p^{-((n + 1 - r) + i)}) \ge \prod_{i=2}^{n+1} (1 - p^{-i}). \] We know that $r \le n$,
        since $G$ is generated by $n$~elements.
  
  	Therefore, the probability that $n$ elements of an arbitrary finite abelian group~$G$ which can be
  	generated by $n$ elements generate the group is at least 
  	\[ 
  		\prod_p \prod_{i=2}^{n+1} (1 - p^{-i}) =
  		\prod_{i=2}^{n+1} \prod_p (1 - p^{-i}) = 
  		\biggl( \prod_{i=2}^{n+1} \zeta(i) \biggr)^{-1}
  	\]
        using the Euler product representation of the Riemann zeta function.  Now \[
        \prod_{i=2}^{n+1} \zeta(i) \le \prod_{i=2}^{\infty} \zeta(i) = \hat{\zeta}^{-1}. \] The
        product $\prod_{i=2}^\infty \zeta(i)$ is well-known in group theory \cite{Sloane}.
	\end{proof}

        Note that it is essential for our proof to work that we use $n + 1$ elements instead of $n$,
        since if we choose just~$n$ elements randomly, the final product would include
        $\zeta(1)^{-1} = 0$ and the probability would drop down to zero. However, a different
        approach can result in a non-zero probability for $n$ elements, but this probability will
        not be constant anymore, but depend on $n$ or $\abs{G}$. For example, if $p_1, \dots, p_k$
        are distinct primes and $G = \prod_{i=1}^k \F_{p_i}^n \cong (\Z/p_1 \cdots p_k\Z)^n$, then
        $G$ can be generated by $n$ elements, but the probability that $n$~random elements from $G$
        generates~$G$ is exactly $\prod_{i=1}^k \prod_{j=1}^n (1 - p_i^j)$, which goes to zero if $k
        \to \infty$ for exactly the above reasons. Hence, any non-trivial bound of the probability
        must take $n$ or $p_1, \dots, p_k$ into account.

\subsubsection{Probability of generating the entire lattice $L$}\label{subsec:entireLattice}

\begin{lemma}[Sampling almost uniformly at random from $L/L_0$]\label{lem:samplingQuotient}
	Let $L_0$ be an arbitrary full-rank sublattice of $L$.  Assume that $b_0>2\nu(L_0)$ and 
	we can sample uniformly at random from
	\begin{align*}
	L \cap [0,b_0)^n\,.
	\end{align*}
	Denote the sample by $\lambda$.  Then, $\lambda + L_0$ is distributed almost uniformly at random over the quotient group $L/L_0$.
	More precisely, the total variation distance between the uniform distribution is at most
	\begin{align*}
		1 - \frac{(b_0-2\nu(L_0))^n}{(b_0+2\nu(L))^n}\,.
	\end{align*}
\end{lemma}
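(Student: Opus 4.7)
The plan is to estimate, for each coset $c \in L/L_0$, the probability $P(c)$ that $\lambda + L_0 = c$, show that it is uniformly bounded below by a constant multiple $r$ of the uniform probability $U(c) = 1/[L:L_0]$, and convert that pointwise multiplicative bound into a total variation bound via the identity
\[
\mathrm{TV}(P, U) \;=\; 1 - \sum_{c \in L/L_0} \min\bigl(P(c), U(c)\bigr).
\]

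For the pointwise bound, fix any representative $\lambda_0 \in L$ of $c$. Then $c \cap [0,b_0)^n = \lambda_0 + \bigl(L_0 \cap (-\lambda_0 + [0,b_0)^n)\bigr)$, so
\[
P(c) \;=\; \frac{\bigl|L_0 \cap (-\lambda_0 + [0,b_0)^n)\bigr|}{\bigl|L \cap [0,b_0)^n\bigr|}.
\]
I would estimate both numerator and denominator by the first lemma of this subsection. Applied to the sublattice $L_0$ and the translated cube $-\lambda_0 + [0,b_0)^n$, it lower-bounds the numerator by $(b_0 - 2\nu(L_0))^n / \det(L_0)$; applied to $L$ and $[0,b_0)^n$, it upper-bounds the denominator by $(b_0 + 2\nu(L))^n / \det(L)$. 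The hypothesis $b_0 > 2\nu(L_0)$ keeps the numerator bound positive. The proof of that counting lemma extends verbatim to any axis-aligned cube because it depends only on Voronoi cells having diameter at most $2\nu$, and translation invariance of the Voronoi tiling handles the shift.

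Combining these estimates and using the index formula $[L : L_0] = \det(L_0) / \det(L)$, one obtains $P(c) \ge r \cdot U(c)$ uniformly in $c$, where $r := (b_0 - 2\nu(L_0))^n / (b_0 + 2\nu(L))^n$. Since $L_0 \subseteq L$ forces $\nu(L_0) \ge \nu(L)$, the ratio satisfies $r \le 1$, so $\min(P(c), U(c)) \ge r \cdot U(c)$ holds for every coset. Summing gives $\sum_c \min(P(c), U(c)) \ge r$, and the TV identity yields $\mathrm{TV}(P, U) \le 1 - r$, which is exactly the claimed bound. The only real subtlety, and the single point I would spell out carefully, is that the counting lemma is being invoked on a cube translated by $-\lambda_0$ where $\lambda_0$ need not belong to $L_0$; but since both the Voronoi-cell lower bound and the Voronoi-cell upper bound argue only from inclusions of cells in translates of the cube, the argument is manifestly translation invariant and no modification is required.
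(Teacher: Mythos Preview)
Your proof is correct and arrives at the same bound, but it is organized rather differently from the paper's argument and is in fact cleaner. The paper builds an explicit fundamental domain $V$ for $L_0$ from the Voronoi cell (carefully adjusting the boundary so translates partition $\R^n$), argues that each $L_0$-translate of $V$ contains exactly one representative of every coset of $L/L_0$, counts at least $\ell_V = (b_0-2\nu(L_0))^n/\det(L_0)$ full translates inside the cube, and then tracks the leftover points explicitly as deviations $d_j$ in the formula $\tilde{p}_j = (\ell_V + d_j)/(m\ell_V + d)$ before computing the total variation distance by hand. You bypass all of this by applying the counting lemma directly to the intersection of each coset with the cube, obtaining the uniform multiplicative lower bound $P(c)\ge r\,U(c)$, and then invoking the identity $\mathrm{TV}(P,U)=1-\sum_c \min(P(c),U(c))$. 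Your route avoids the fundamental-domain construction and the measure-zero boundary bookkeeping, at the cost of needing the (easy) observation that the counting lemma is translation-invariant and the check $\nu(L_0)\ge\nu(L)$ to ensure $r\le 1$; both points you have addressed. The paper's approach, on the other hand, makes the structure of the deviation from uniformity more visible, which could in principle be useful if one wanted finer information than the TV bound.
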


\begin{proof}
  Let again $V_{L_0}(\lambda_0)$ denote the open Voronoi cell of the lattice $L_0$
  centered around $\lambda_0$. First note that $V_{L_0}(\lambda_0) = \lambda_0 +
  V_{L_0}(0)$ and $\overline{V_{L_0}(\lambda_0)} = \lambda_0 +
  \overline{V_{L_0}(0)}$. Now, as $\bigcup_{\lambda_0 \in L_0} (\lambda_0 +
  \overline{V_{L_0}(0)}) = \R^n$ and two translates of $V_{L_0}(0)$ by different
  elements of $L_0$ do not intersect, there exists a set $V$ with $V_{L_0}(0) \subseteq
  V \subseteq \overline{V_{L_0}(0)}$ satisfying \[ \bigcup_{\lambda_0 \in L_0}
  (\lambda_0 + V) = \R^n \quad \text{and} \quad \forall \lambda_0 \in L_0 \setminus \{ 0 \} :
  (\lambda_0 + V) \cap V = \emptyset. \] Note that $\volume(V) = \volume(V_{L_0}(0)) = \det
  (L_0)$.

  Every translate of $V$ contains the same number of elements from $L$, and $\abs{V \cap
    L}$ equals
  \begin{align*}
    m = \det(L_0)/\det(L);
  \end{align*}
  this can be shown using asymptotic arguments similarly to the proof that any elementary
  parallelepiped of $L_0$ contains exactly $m$ elements of $L$ (see
  e.g.\ \cite{barvinok-notes}).
  
	For all $\lambda \in L \cap V$, the vectors $\lambda-\lambda_0$ form a transversal for $L/L_0$.
	
	As $V \subseteq \overline{B_{\nu(L_0)}(0)}$, there are at least
	\begin{align*}
		\ell_V = \frac{(b_0 - 2\nu(L_0))^n}{\det(L_0)}
	\end{align*}
	translates of $V$ that are contained inside the window $[0,b_0]^n$.
	
	There are at most
	\begin{align*}
		u_P = \frac{(b_0 + 2\nu(L))^n}{\det(L)} 
	\end{align*}
	points of $L$ inside $[0,b_0]^n$.  
	
	Let $d_{\max}= \lfloor u_p - m \ell_V \rfloor$ be the maximal possible deviation in the number of points of $L$ inside $[0,b_0]^n$ from the lower bound $m \ell_V$.  Let
	$d\in\{0,\ldots,d_{\max}\}$ be the actual deviation.
	
	Ideally, we would have the uniform distribution $p_j = 1/m$ on $L/L_0$.  But we only have the almost uniform distribution which necessarily has the form
	\begin{align*}
		\tilde{p}_j = \frac{\ell_V + d_j}{m \ell_V + d}
	\end{align*}
	for $j=1,\ldots,m$, where $d_1,\ldots,d_m$ are integers with $0\le d_j \le d$ and $\sum_{j=1}^m d_j = d$.  The total variation distance can be bounded as follows
	\begin{eqnarray*}
		\frac{1}{2} \sum_{j=1}^m | p_j - \tilde{p}_j | 
		& = & 
		\frac{1}{2} \sum_{j=1}^m \left| \frac{1}{m} - \frac{\ell_V + d_j}{m \ell_V + d} \right| \\
		& = &
		\frac{1}{2m} \sum_{j=1}^m \left| \frac{d - m d_j }{m \ell_V + d} \right| \\
		& \le & 
		\frac{1}{2m} \sum_{j=1}^m \frac{d + m d_j }{m \ell_V + d} \\
		& = & 
		\frac{d}{m \ell_V + d} \\
		& \le & 
		\frac{d_{\max}}{m \ell_V + d_{\max}} \le \frac{u_p - m \ell_V}{m \ell_V + u_p - m \ell_V} = 1 - \frac{m\ell_V}{u_P}\,.
	\end{eqnarray*}
	We have 
	\begin{align*}
		1 - \frac{m \ell_V}{u_P} = 1 - \frac{(b_0-2\nu(L_0))^n}{(b_0+2\nu(L))^n} \,.
	\end{align*}
        
        Note that so far, we have considered $[0, b_0]^n$ instead of $[0, b_0)^n$. As $L$ is
          discrete, there exists some $2 \nu(L_0) < b_0' < b_0$ with $[0, b_0']^n \cap L
          = [0, b_0)^n$. Applying the result above to $[0, b_0']^n$ and then using that $x \mapsto 1
            - \frac{(x - 2 \nu(L_0))^n}{(x + 2 \nu(L))^n}$ is increasing yields the
            stated claim for $[0, b_0)^n$.
  
\end{proof}

\begin{proposition}\label{prop:generateWholeLattice}
	Assume that $b\ge \max\{ 8 n - 2, n^{(n-1)/2} 2^{n+1} - 2 \} \cdot \nu(L)$ and $b_0\ge 8n^2(n+1)b$.  Let $Y$ be as in 
	Corollary~\ref{cor:probGenFullRankSubLattice} and $(y_1,\ldots,y_n)\in Y$.  Let 
	\begin{eqnarray*}
	  X_0 & := & \bigl(L \cap [0,b_0)^n \bigr)^{n+1} \\
		Z   & =  & \{(z_1,\ldots,z_{n+1})\in X_0^{n+1} \mid \mspan_\Z \{y_1,\ldots,y_n,z_1,\ldots,z_{n+1} \} = L \}.
	\end{eqnarray*}
	Then
	\begin{align*}
		\abs{Z} \ge \left(\hat{\zeta}-\frac{1}{4} \right)	\abs{X_0} \ge 0.184 \abs{X_0}.
	\end{align*}
\end{proposition}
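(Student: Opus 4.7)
The plan is to reduce the claim to Proposition~\ref{prop:genFiniteAbelianGroup} by passing to the quotient $L/L_0$, where $L_0 := \mspan_\Z\{y_1,\ldots,y_n\}$. Since $(y_1,\ldots,y_n)\in Y$ are $\R$-linearly independent, they form a $\Z$-basis of the full-rank sublattice $L_0\subseteq L$, so $L/L_0$ is a finite abelian group. Because $L$ has rank $n$ over $\Z$, the quotient $L/L_0$ has rank at most $n$, hence is generated by at most $n$ elements. The key elementary observation is the equivalence that $\mspan_\Z\{y_1,\ldots,y_n,z_1,\ldots,z_{n+1}\}=L$ holds if and only if the cosets $z_i+L_0$ generate $L/L_0$; it therefore suffices to lower bound the probability that the random cosets generate.

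Next I would bound the covering radii needed to invoke Lemma~\ref{lem:samplingQuotient}. From $y_i\in[0,b)^n$ we get $\|y_i\|_2\le\sqrt{n}\,b$, and the standard Gram--Schmidt-based bound $\nu(L_0)\le\tfrac{1}{2}\sqrt{\sum_i \|y_i^\ast\|_2^2}$, together with $\|y_i^\ast\|_2\le\|y_i\|_2$, yields $\nu(L_0)\le\tfrac{n}{2}b$. Since $L_0\subseteq L$, also $\nu(L)\le\nu(L_0)\le\tfrac{n}{2}b$. In particular $b_0>2\nu(L_0)$, so Lemma~\ref{lem:samplingQuotient} applies.

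Then I would use that lemma to control the total variation distance between the distribution of the coset of a single sample $z_i$ and the uniform distribution on $L/L_0$:
\[
1 - \frac{(b_0-2\nu(L_0))^n}{(b_0+2\nu(L))^n} \;\le\; 1 - \left(\frac{b_0-nb}{b_0+nb}\right)^n \;\le\; \frac{2n^2 b}{b_0} \;\le\; \frac{1}{4(n+1)},
\]
where the middle step uses $(1-x)^n\ge 1-nx$ and the last uses the hypothesis $b_0\ge 8n^2(n+1)b$. Since $z_1,\ldots,z_{n+1}$ are independent, the joint TV distance between the induced distribution on $(L/L_0)^{n+1}$ and the uniform product distribution is at most $(n+1)$ times this, hence at most $\tfrac{1}{4}$.

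Finally I would combine with Proposition~\ref{prop:genFiniteAbelianGroup}: under the uniform distribution the $n+1$ random cosets generate $L/L_0$ with probability at least $\hat\zeta$, and the joint TV bound transfers this to probability at least $\hat\zeta-\tfrac{1}{4}\ge 0.184$ under the actual distribution, yielding the claimed lower bound on $|Z|/|X_0|$. The hardest part will be making the covering-radius bound tight enough: the naive triangle inequality $\nu(L_0)\le\tfrac{1}{2}\sum_i\|y_i\|_2$ only gives $\nu(L_0)\le\tfrac{n^{3/2}}{2}b$, which would force a hypothesis of the form $b_0=\Omega(n^{5/2}(n+1)b)$; the Gram--Schmidt refinement, together with the fact that the $y_i$ are a genuine $\Z$-basis (not merely a generating set) of $L_0$, is what allows the stated hypothesis $b_0\ge 8n^2(n+1)b$ to close the argument.
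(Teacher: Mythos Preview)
Your proof is correct and follows essentially the same approach as the paper: reduce to the quotient $L/L_0$, bound $\nu(L_0)\le \tfrac{nb}{2}$, invoke Lemma~\ref{lem:samplingQuotient} to get a per-sample TV bound of $\tfrac{1}{4(n+1)}$, sum over the $n+1$ independent samples, and finish with Proposition~\ref{prop:genFiniteAbelianGroup}. The only cosmetic difference is your route to the covering-radius bound: you use the Gram--Schmidt/Babai inequality $\nu(L_0)\le\tfrac{1}{2}\sqrt{\sum_i\|y_i^\ast\|_2^2}$, whereas the paper uses $\nu(L_0)\le\tfrac{\sqrt{n}}{2}\lambda_n(L_0)$ together with $\lambda_n(L_0)\le\max_i\|y_i\|_2$; both arrive at $\nu(L_0)\le\tfrac{nb}{2}$.
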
  

\begin{proof}
  Let $L_0$ be the full-rank sublattice generated by $y_1,\ldots,y_n$.  We have the following simple bound on the covering radius
	\begin{align*}
		\nu(L_0) \le \frac{\sqrt{n}}{2} \lambda_n(L_0) \le \frac{\sqrt{n}}{2} \max_{i=1,\ldots,n} \| y_i \|_\infty \le
		\frac{\sqrt{n}}{2} \sqrt{n} b = \frac{nb}{2}
	\end{align*}
	since the $y_i$ are linearly independent and the longest vector in $[0,b)^n$ is shorter than $\sqrt{n}b$.
  
  Let $z_i$ be uniformly distributed in $L\cap [0,b_0)^n$.  Then, Lemma~\ref{lem:samplingQuotient} implies that $z_i + L_0$ (for $i=n+1,\ldots,2n+1$) 
  are distributed almost uniformly at random from $L/L_0$.  The total variation distance from the uniform distribution is bounded from above as follows
  
	\begin{eqnarray*}
  	1 - \frac{(b_0-2\nu(L_0))^n}{(b_0+2\nu(L))^n} 
  	& \le &
		1 - \frac{(b_0-2\nu(L_0))^n}{(b_0+2\nu(L_0))^n} \\
		& = & 
		1 - \left( 1 - \frac{4\nu(L_0)}{b_0+2\nu(L_0)} \right)^n \\
		& \le &
		1 - \left( 1 - n \, \frac{4\nu(L_0)}{b_0+2\nu(L_0)} \right) \\ 
		& \le & 
		\frac{4n \nu(L_0)}{b_0} \le \frac{2 n^2 b}{b_0} \le \frac{1}{4(n+1)}\,.
	\end{eqnarray*}

	Consider now the uniform probability distribution on the $(n+1)$-fold direct product of
        $L/L_0$ and the probability distribution that arises from sampling almost uniformly at
        random on each of the components as above.  Then the total variation between these two
        distributions is bound from above by $(n+1) \cdot \frac{1}{4(n+1)} = \tfrac{1}{4}$.  This is
        because total variation distance is additive under composition provided that the components
        are independent (see e.g.\ \cite[Subsection 1.3 ``Statistical distance'' in Chapter
        7]{micciancio-goldwasser} for more information total variation distance).

	Clearly, the abelian group $L/L_0$ can be generated with only $n$ generators.  Hence, Proposition~\ref{prop:genFiniteAbelianGroup} 
	implies that $n+1$ samples (provided that they are
	distributed uniformly at random over the group) form a generating set with probability greater or equal to $\hat{\zeta}$.  Due to the deviation from 
	the uniform distribution on the $(n+1)$-fold direct product
	of $L/L_0$ this probability may decrease. However it is at least $\hat{\zeta}-1/4$ since the total variation distance is at most $1/4$.  The claim follows 
	now by translating the lower bound on the probability to a lower bound on the fraction of elements with the desired property.
	\end{proof}

	\begin{remark}
		The purpose of this proposition is similar to that of Satz~2.4.23 in \cite{arthurDiss}.  We emphasize that
		our bound on the success probability is constant, whereas the bound presented in Satz~2.4.23 decreases
		exponentially fast with the dimension $n$.  The first part of proof of Satz~2.4.23 (concerning the
		generation of a full-rank sublattice) is unfortunately not correct, but can be corrected as we have
		shown in our proof of Corollary~\ref{cor:probGenFullRankSubLattice}.  The idea behind the second part is completely
		different from our proof and cannot be used to prove a constant success probability.  Perhaps it could be used 
		to prove that only $2 n$ random elements (as opposed to $2 n + 1$ elements) are needed to guarantee a non-zero 
		success probability.
	\end{remark}

	Note that in \cite{hallgrenUnitgroup}, neither a bound is given on how many lattice elements have to be
	sampled nor the probability is estimated with which the lattice is generated.

	\begin{lemma}\label{lem:chooseWindowSizes}
	Assume
	\begin{align*}
		b   & \ge \max\{ 8 n - 2, n^{(n-1)/2} 2^{n+1} - 2 \} \cdot \frac{n}{2\lambda_1(\Lambda)} \;\;\; \text{and} \\
		b_0 & \ge 8 n^2 (n + 1) b. 
	\end{align*}
	Define
  \begin{align*}
 		X   &:= (\Lambda^\ast \cap [0, b)^n)^n \\	
  	Y   &:= \{ (\lambda^*_1, \dots, \lambda^*_n) \in X \mid \mspan_\R(\lambda^*_1, \dots, \lambda^*_n) = \R^n \}.
  \end{align*}
  For each $(\lambda^\ast_1,\ldots,\lambda^\ast_n)\in Y$, define
  \begin{align*}
  	X_0 &:= (\Lambda^\ast \cap [0, b_0)^n)^{n+1} \\	
  	Z   &:= \{ (\lambda^*_{n+1}, \dots, \lambda^*_{2n+1}) \in X_0 \mid \mspan_\Z(\lambda^*_1, \dots, \lambda^*_n,\lambda_{n+1}^*, \dots, \lambda^*_{2n+1}) = L \}.
  \end{align*}
  Then
  \[
  	\abs{Y} \ge 0.289 \, \abs{X} > \frac{1}{4} \abs{X} \;\;\;\; \text{and} \;\;\;\;
  	\abs{Z} \ge \bigl( \hat{\zeta} - \tfrac{1}{4} \bigr) \abs{X_0} \ge 0.184 \, \abs{X_0}
  \]
  \end{lemma}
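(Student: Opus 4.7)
The plan is to reduce this lemma directly to Corollary~\ref{cor:probGenFullRankSubLattice} and Proposition~\ref{prop:generateWholeLattice}, both applied with $L := \Lambda^\ast$. Those earlier results are stated in terms of $\nu(L)$, so the only real work is to translate the hypothesis $b \ge \max\{8n-2, n^{(n-1)/2}2^{n+1}-2\} \cdot \tfrac{n}{2\lambda_1(\Lambda)}$ in the lemma into the hypothesis $b \ge \max\{8n-2, n^{(n-1)/2}2^{n+1}-2\} \cdot \nu(\Lambda^\ast)$ required there. Once that is done, everything else is an immediate citation.

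The key step is therefore to bound $\nu(\Lambda^\ast)$ in terms of $\lambda_1(\Lambda)$. I would invoke the classical transference inequality of Banaszczyk, which states
\[
\nu(\Lambda^\ast) \cdot \lambda_1(\Lambda) \le \frac{n}{2},
\]
i.e., $\nu(\Lambda^\ast) \le \frac{n}{2\lambda_1(\Lambda)}$. Substituting this into the assumption on $b$ gives $b \ge \max\{8n-2, n^{(n-1)/2}2^{n+1}-2\} \cdot \nu(\Lambda^\ast)$, which is exactly the hypothesis of Corollary~\ref{cor:probGenFullRankSubLattice} with $L = \Lambda^\ast$. That corollary applied to $\Lambda^\ast$ then yields $\abs{Y} \ge 0.289\,\abs{X} > \tfrac14 \abs{X}$, which is the first claim.

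For the second claim, the same transference bound gives $b \ge \max\{8n-2, n^{(n-1)/2}2^{n+1}-2\} \cdot \nu(\Lambda^\ast)$, and the additional assumption $b_0 \ge 8n^2(n+1)b$ is precisely the second hypothesis of Proposition~\ref{prop:generateWholeLattice}. Applying that proposition with $L = \Lambda^\ast$ and with $L_0$ the full-rank sublattice spanned by any fixed $(\lambda^\ast_1,\dots,\lambda^\ast_n) \in Y$ immediately gives $\abs{Z} \ge (\hat{\zeta} - \tfrac14)\abs{X_0} \ge 0.184\,\abs{X_0}$.

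There is no real obstacle beyond citing the transference inequality; the only care needed is to notice that Proposition~\ref{prop:generateWholeLattice} is formulated for a fixed tuple $(y_1,\dots,y_n)\in Y$ (not averaged over $Y$), so its conclusion transports verbatim to each fixed $(\lambda^\ast_1,\dots,\lambda^\ast_n) \in Y$ in the present setting. If one wanted to avoid citing Banaszczyk one could instead use the elementary chain $\nu(\Lambda^\ast) \le \tfrac{\sqrt n}{2}\lambda_n(\Lambda^\ast)$ together with $\lambda_1(\Lambda)\lambda_n(\Lambda^\ast)\le n$, at the cost of a slightly worse constant in the hypothesis; but since the statement is matched to the sharper form $\tfrac{n}{2\lambda_1(\Lambda)}$, the transference inequality above is the natural tool.
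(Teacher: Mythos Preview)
Your proposal is correct and matches the paper's own proof essentially verbatim: the paper also applies Corollary~\ref{cor:probGenFullRankSubLattice} and Proposition~\ref{prop:generateWholeLattice} with $L=\Lambda^\ast$, invoking the same transference inequality $\nu(\Lambda^\ast)\le \tfrac{n}{2\lambda_1(\Lambda)}$ to convert the hypothesis on $b$. The only difference is cosmetic---you attribute the inequality to Banaszczyk and discuss an alternative route, whereas the paper simply states it without attribution.
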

        
  \begin{proof}
  The first lower bound follows from Corollary~\ref{cor:probGenFullRankSubLattice} and the inequality $\nu(\Lambda^\ast) \le \frac{n}{2\lambda_1(\Lambda)}$ and the second from
  Proposition~\ref{prop:generateWholeLattice}.
  \end{proof}

  By combining the more precise bounds listed below Corollary~\ref{cor:probGenFullRankSubLattice}
  and Proposition~\ref{prop:generateWholeLattice}, respectively, one obtains the following more
  precise bounds which depend on $n$:
  \begin{equation}
    \tag{$\ast$}\label{eq:betterboundsongenprop}
    \abs{Y} \ge \abs{X} \cdot \prod_{i=1}^{n-1} (1 - 2^{-i}) \quad \text{and} \quad \abs{Z} \ge
    \biggl( \prod_{i=2}^{n+1} \zeta(i)^{-1} - \frac{1}{4} \biggr) \cdot \abs{X_0}.
  \end{equation}

\subsection{Lattices of dimension one}
	
	We now discuss the special case~$n = 1$.  For this case, $2 n$ instead
  of $2 n + 1$ vectors from one window suffice to generate the lattice with a significantly higher probability.
  
  \begin{lemma}
    \label{lemma:latticegeneration1D}
    Let $L=\Z v$ be a one-dimensional lattice, where $v\in\R_{>0}$.  Assume that $b \ge 3v +
    1$. Then, two samples chosen uniformly at random in $L\cap[0,b)$ generate $L$ with probability
    greater than $\frac{3^3}{\pi^2 2^3} > \tfrac{1}{3}$. Note that $\det(L) = v = \lambda_1(L)$,
    $\nu(L) = \tfrac{1}{2} \det(L)$ and that $L^* = \frac{1}{v} \Z$.
  \end{lemma}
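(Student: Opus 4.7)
My plan is to reduce the claim to a classical coprime-pair counting problem and then apply a totient sum estimate. Let $N = \abs{L \cap [0, b)}$. Since $L \cap [0, b) = \{0, v, 2v, \ldots, (N-1)v\}$ with $(N-1)v < b \le Nv$, the hypothesis $b \ge 3v + 1 > 3v$ forces $N \ge 4$. Sampling two elements uniformly from $L \cap [0, b)$ is equivalent to drawing $(a_1, a_2)$ uniformly from $\{0, 1, \ldots, N-1\}^2$, and these samples generate $L = \Z v$ precisely when $\gcd(a_1, a_2) = 1$ (under the convention $\gcd(a, 0) = a$, so the only coprime pairs with a zero coordinate are $(0, 1)$ and $(1, 0)$).

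To count these coprime pairs, I would use the classical identity $\abs{\{(a, b) \in \{1, \ldots, M\}^2 : \gcd(a, b) = 1\}} = 2\sum_{k=1}^M \phi(k) - 1$ with $M = N - 1$ and add the two extra pairs $(0, 1)$ and $(1, 0)$, obtaining
\[
  \Phi(N) := \abs{\{(a_1, a_2) \in \{0, \ldots, N-1\}^2 : \gcd(a_1, a_2) = 1\}} = 2\sum_{k=1}^{N-1} \phi(k) + 1.
\]
Invoking the classical uniform lower bound $\sum_{k=1}^M \phi(k) \ge 3M^2/\pi^2$ (Mertens), I would conclude
\[
  \frac{\Phi(N)}{N^2} \ge \frac{6}{\pi^2} \biggl(\frac{N-1}{N}\biggr)^{\!2} + \frac{1}{N^2} \ge \frac{6}{\pi^2} \cdot \biggl(\frac{3}{4}\biggr)^{\!2} + \frac{1}{N^2} > \frac{27}{8\pi^2} = \frac{3^3}{2^3\pi^2},
\]
using $(N-1)/N \ge 3/4$ from $N \ge 4$. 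The final inequality $27/(8\pi^2) > 1/3$ is immediate from $8\pi^2 < 81$.

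The main obstacle is the uniform inequality $\sum_{k=1}^M \phi(k) \ge 3M^2/\pi^2$ for all $M \ge 1$. The asymptotic $\sum_{k=1}^M \phi(k) \sim 3M^2/\pi^2$ is classical, but the sign of the error term oscillates with $M$, so a rigorous uniform bound requires either direct verification for small $M$ together with an effective estimate for large $M$ (derived from the Mobius inversion $\sum_{d=1}^M \mu(d) \lfloor M/d \rfloor^2 = 2\sum_{k=1}^M \phi(k) - 1$ by controlling the tail $\sum_{d > M} \mu(d)/d^2$ and the fractional-part contributions $\sum_{d \le M} \mu(d) \{M/d\}/d$), or invoking an existing explicit bound from analytic number theory. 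Everything else in the proof is elementary counting together with the estimate $(N-1)/N \ge 3/4$ for $N \ge 4$, which is precisely what the hypothesis $b \ge 3v + 1$ is designed to yield.
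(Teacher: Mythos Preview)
Your approach is essentially the same as the paper's: the paper factors the probability as $(3/4)^2$ (both samples nonzero) times the coprimality probability for two uniform samples from $\{1,\ldots,m\}$, which it simply asserts to be greater than $6/\pi^2$ as a ``well-known'' fact, yielding the same constant $\tfrac{6}{\pi^2}\cdot(\tfrac{3}{4})^2=\tfrac{27}{8\pi^2}$. The uniform totient-sum bound you flag as an obstacle is thus exactly the step the paper leaves unproved, so your argument is if anything more explicit than the original.
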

  
  \begin{proof}
    Clearly, the number of lattice elements in
    $[0, b - 1]$ is $1 + \lfloor \frac{b - 1}{v} \rfloor$, where 1 accounts for the zero
    vector.  Hence, the probability that a random element of $L \cap [0, b - 1]$ is non-zero is
    \[ 
    	\frac{\lfloor \frac{b - 1}{v} \rfloor}{1 + \lfloor \frac{b - 1}{v} \rfloor} = 
    	1 - \frac{1}{1 + \lfloor \frac{b - 1}{v} \rfloor},  
    \] 
    which greater or equal to $\tfrac{3}{4}$ for $b \ge 3 v + 1$.  Further, note that this condition
    ensures that there are at least $3$ non-zero elements.  Assume that we obtained two non-zero
    elements; these have the form $k v$ and $\ell v$, where $k,\ell$ are chosen uniformly at random
    in $\{1,\ldots,m\}$ with $m\ge 3$.  It is well-known that $\gcd(k,\ell)=1$ with probability
    greater than $\frac{6}{\pi^2}$.  This proves the bound $\tfrac{6}{\pi^2} (\tfrac{3}{4})^2 >
    \tfrac{1}{3}$.
  \end{proof}

	\section{Obtaining an approximate generating set of the dual lattice $\Lambda^*$}\label{sec:approximateDualLattice}
  
  \subsection{Lattices of dimension greater than one}
  
  The current result in Proposition~\ref{prop:generateWholeLattice} forces us to sample lattice vectors from windows of two different sizes.  Recall that the parameter $N$
  directly determines the size of the portion of the dual lattice $\Lambda^*$ from which we can sample. We refer to this parameter as $N$ in Subsection~\ref{subsec:fullRankSublatticeDual} 
  and as $N_0$ in Subsection~\ref{subsec:entireDual}.  The other parameters $q$ and $\kappa$ can be chosen to be the same. 
  
  \subsubsection{Generating a full-rank sublattice of the dual lattice}\label{subsec:fullRankSublatticeDual}  
	  
	\begin{lemma}\label{prop:samplingFromGoodSet}
  	Choose $q$, $N$, and $\kappa$ according to (III)--(V) and 
  	\begin{align}\tag{VI}\label{eq:VI}
  		N \ge{} & \frac{1}{\kappa} \left( \max\{ 8 n - 2, n^{(n-1)/2} \cdot 2^{n+1} - 2 \}
                \cdot \frac{n}{2\lambda_1(\Lambda)} + \frac{1}{2nq} \right), \\ \tag{VII}\label{eq:VII} N >{} &
                \frac{1}{\kappa} \left( \frac{1}{2 q} + \frac{n^2}{\lambda_1(\Lambda)} \right).
  	\end{align}
    Run the quantum algorithm $n$ times and denote the samples by $w_1,\ldots,w_n$.  Then, the probability that 
    there exists $\lambda^\ast_1,\ldots,\lambda^\ast_n\in\Lambda^\ast \cap [0,\kappa N - \tfrac{1}{2nq})^n$ with
    \begin{enumi}
    	\item the lattice vectors $\lambda^\ast_1,\ldots,\lambda^\ast_n$ span a full-rank sublattice of $\Lambda^\ast$ and
    	\item the samples $w_i$ approximate these lattice vectors $\lambda_i^\ast$ so that
    	\[
    		\norm{\frac{w_i}{2nq} - \lambda_i^\ast}_2 \le \frac{1}{2\sqrt{n} q} \;\; \text{for $i=1,\ldots,n$}
    	\]
    \end{enumi}
    is greater or equal to
	  \begin{align*}
  		& \;\;\;\;\;
  		\frac{1}{4} \left(\frac{2^{n-1} M_\ell L_\ell \, c}{W}\right)^n \\
  		& \ge 
  		\frac{1}{4} \left( \frac{c}{2} \right)^n \left( \frac{\kappa}{n} \right)^{n^2} 
  			\cdot \left[ 1 - \left( \frac{1}{2 q} + \frac{n^2}{\lambda_1(\Lambda)} \right) \frac{1}{\kappa N} \right]^n
  			\cdot \left[ 1 - \frac{3n}{qN} - \frac{2n\nu(\Lambda)}{q} \right]^n \\
  		& \approx \frac{1}{4} \left( \frac{c}{2} \right)^n \left( \frac{\kappa}{n} \right)^{n^2}.
  	\end{align*}
  	Here $c:=\cos^2\bigl( \pi (\tfrac{1}{4} + \tfrac{1}{2qN} + 2 \kappa n) \bigr)>0$ and 
  	$L_\ell$ is a lower bound on the cardinality of $\Lambda^\ast \cap [0,\kappa N - \tfrac{1}{2nq})^n$.  
  	The approximation $\approx$ indicates that $L_\ell$ and $M_\ell$ are close to $1$ 
  	provided that $b$, $N$ and $q$ are sufficiently large.

  \end{lemma}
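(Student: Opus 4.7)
The plan is to decompose the target event into two parts: the \emph{approximation} event that each sample $w_i$ falls into the rounding set $\calR_{\lambda^*_i}$ of some $\lambda^*_i\in\Lambda^\ast\cap[0,\kappa N-\tfrac{1}{2nq})^n$, and the \emph{spanning} event that the resulting tuple $(\lambda^*_1,\dots,\lambda^*_n)$ generates $\R^n$. I then bound the probability of their intersection from below by summing single-run probabilities over all spanning tuples. The window $[0,\kappa N-\tfrac{1}{2nq})^n$ is chosen exactly so that for every $\lambda^*$ in it, $\calR_{\lambda^*}\subset[0,2nq\kappa N]^n$ holds, and hypotheses (III)--(V) are precisely the assumptions of Proposition~\ref{prop:lbprobone}; applying it gives the uniform per-vector bound $\Pr(w\in\calR_{\lambda^*}) \ge \tfrac{2^{n-1}M_\ell c}{W}$ with $c=\cos^2\bigl(\pi(\tfrac14+\tfrac{1}{2qN}+2\kappa n)\bigr)$.

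Next I would verify that the rounding sets $\{\calR_{\lambda^*}\}$ are pairwise disjoint as $\lambda^*$ ranges over the window. This reduces to showing $2nq\|\lambda^*-\lambda^{*\prime}\|_\infty>2$ for distinct $\lambda^*,\lambda^{*\prime}\in\Lambda^\ast$, which is ensured by conditions (VI) and (VII) together with a lower bound on $\lambda_1(\Lambda^\ast)$ in terms of $\lambda_1(\Lambda)$; the point is simply that $2nq$ dominates the spacing of $\Lambda^\ast$. Disjointness together with independence of the $n$ runs gives, for every tuple $(\lambda^*_1,\dots,\lambda^*_n)$ of window vectors,
\[
  \Pr\bigl(w_i\in\calR_{\lambda^*_i}\text{ for all }i\bigr) \;\ge\; \left(\frac{2^{n-1}M_\ell c}{W}\right)^n,
\]
and summing only over the subset $Y\subseteq(\Lambda^\ast\cap[0,\kappa N-\tfrac{1}{2nq})^n)^n$ of spanning tuples yields a lower bound of $|Y|\cdot(2^{n-1}M_\ell c/W)^n$ for the event of the lemma.

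To count $|Y|$ I would apply Corollary~\ref{cor:probGenFullRankSubLattice} to $L=\Lambda^*$ with window size $b=\kappa N-\tfrac{1}{2nq}$. Assumption~(VI) is calibrated precisely so that $b\ge\max\{8n-2,\,n^{(n-1)/2}2^{n+1}-2\}\cdot\nu(\Lambda^*)$ via the estimate $\nu(\Lambda^*)\le n/(2\lambda_1(\Lambda))$ used in Lemma~\ref{lem:chooseWindowSizes}; the corollary then yields $|Y|\ge\tfrac14 L_\ell^n$, giving the main bound $\tfrac14\,(L_\ell\cdot 2^{n-1}M_\ell c/W)^n$. The displayed approximate closed form follows by substituting $W=(2nqN)^n$, inserting the expression for $M_\ell$ from Proposition~\ref{prop:Mestimate}, and bounding $L_\ell\ge\det(\Lambda)(\kappa N-\text{correction})^n$ via the cardinality lemma at the start of Subsection~\ref{subsec:fullRank}, whose applicability (in particular the positivity of $L_\ell$) is precisely what assumption~(VII) secures. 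The main obstacle is the disjointness verification, the only ingredient not directly supplied by the previously cited results; everything else amounts to substitution of bounds already proved and routine manipulation.
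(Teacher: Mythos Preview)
Your approach is the same as the paper's: set $b=\kappa N-\tfrac{1}{2nq}$ so that $\calR_{\lambda^*}\subset[0,2nq\kappa N]^n$ for all $\lambda^*\in\Lambda^*\cap[0,b)^n$, apply Proposition~\ref{prop:lbprobone} for the uniform per-vector bound, use independence of the runs for the per-tuple bound, and invoke Corollary~\ref{cor:probGenFullRankSubLattice} (via $\nu(\Lambda^*)\le n/(2\lambda_1(\Lambda))$, which is exactly what~(VI) is calibrated for) to obtain the factor $\tfrac14$; the displayed closed form then follows by substituting $W=(2nqN)^n$, the formula for $M_\ell$, and the Voronoi-cell lower bound for $L_\ell$, whose positivity is precisely~(VII). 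The paper's own proof in fact does not mention disjointness of the sets $\calR_{\lambda^*}$, so you are right that this must be checked for the sum over spanning tuples to serve as a lower bound on the probability of the union.

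Your justification for disjointness, however, is not correct. There is no lower bound on $\lambda_1(\Lambda^*)$ purely in terms of $\lambda_1(\Lambda)$: for $\Lambda=\Z\times M\Z$ one has $\lambda_1(\Lambda)=1$ while $\lambda_1(\Lambda^*)=1/M$ can be arbitrarily small. Moreover, conditions~(VI) and~(VII) constrain $N$, not $q$, and so cannot force $2nq$ to dominate the spacing of $\Lambda^*$. The condition that does the work is~(IV). From the standard inequality $\lambda_1(\Lambda^*)\ge 1/(2\nu(\Lambda))$ together with $q>2n\nu(\Lambda)$ one obtains $\lambda_1(\Lambda^*)>n/q$; hence for distinct $\lambda^*,\mu^*\in\Lambda^*$,
\[
  \|\lambda^*-\mu^*\|_\infty \;\ge\; \frac{\lambda_1(\Lambda^*)}{\sqrt{n}} \;>\; \frac{\sqrt{n}}{q} \;\ge\; \frac{1}{nq},
\]
so in some coordinate $|\lfloor 2nq\lambda^*_k\rfloor-\lfloor 2nq\mu^*_k\rfloor|\ge 2$, giving $\calR_{\lambda^*}\cap\calR_{\mu^*}=\emptyset$. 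With this correction your argument goes through and coincides with (indeed, slightly completes) the paper's.
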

  
  Here, the factor $\frac{1}{4}$ can be replaced with 0.289 or $\prod_{i=1}^{n-1} (1 - 2^{-i})$
  (compare Equation~\eqref{eq:betterboundsongenprop} on page~\pageref{eq:betterboundsongenprop}).
  
  \begin{proof}
    Observe that $\calR_{\lambda^\ast}\subset [0,2nq\kappa N]^n$ for all $\lambda^\ast\in\Lambda^\ast \cap [0,\kappa N - \tfrac{1}{2nq})^n$.  Set $b:=\kappa N - \tfrac{1}{2nq}$. 
    For all $\lambda^\ast\in\Lambda^\ast \cap [0,b)$, Proposition~\ref{prop:lbprobone} yields the lower bound
  	\[
  		\Pr(w_i \in \calR_{\lambda^\ast}) \ge 
  		\frac{2^{n-1} M_\ell c}{W}.
  	\]
  	Clearly, if $w_i\in\calR_{\lambda^\ast}$ then
  	\[
  		\left\| \frac{w_i}{2nq} - \lambda^\ast \right\|_2 \le \frac{1}{2\sqrt{n} q}.
  	\]
  	We obtain the lower bound
  	\[
                \sum_{(\lambda^\ast_1, \dots, \lambda^\ast_n)\in(\Lambda^\ast_b)^n} \Pr\bigl( w_1 \in \calR_{\lambda^\ast_1}, \ldots, w_n \in \calR_{\lambda^\ast_n} \bigr) \ge
  		\biggl(\frac{2^{n-1} M_\ell L_\ell c}{W}\biggr)^n
  	\]
  	where 
  	\[
  		L_\ell = (\kappa N)^n \det(\Lambda) \left[ 1 - \left( \frac{1}{2 q} + \frac{n^2}{\lambda_1(\Lambda)} \right) \frac{1}{\kappa N} \right]
  	\]
  	is a lower bound on on the cardinality of $\Lambda^\ast\cap [0,b)^n$.  We derive this particular lower bound by applying the argument based on Voronoi cells and
  	\begin{align*}
  		\frac{\bigl(\kappa N - \frac{1}{2nq} - 2\nu(\Lambda^\ast)\bigr)^n}{\det(\Lambda^\ast)} 
  		& = 
  		(\kappa N)^n \det(\Lambda) \left[ 1 - \left( \frac{1}{2 n q} + 2\nu(\Lambda^*) \right) \frac{1}{\kappa N} \right]^n \\
  		& \ge 
  		(\kappa N)^n \det(\Lambda) \left[ 1 - \left( \frac{1}{2 q} + 2 n \nu(\Lambda^*) \right) \frac{1}{\kappa N} \right] \\
  		& \ge 
  		(\kappa N)^n \det(\Lambda) \left[ 1 - \left( \frac{1}{2 q} + \frac{n^2}{\lambda_1(\Lambda)} \right) \frac{1}{\kappa N} \right]. 
  	\end{align*}
  	We used the Bernoulli inequality and the inequality
        $\lambda_1(\Lambda)\nu(\Lambda^*)\le\tfrac{1}{2} n$. Observe that (VII) implies that
        $L_\ell$ is nontrivial.
  	
  	Finally, (VI) implies that $b$ is greater than the lower bound in Lemma~\ref{lem:chooseWindowSizes}.  This shows that at least a fourth of the tuples $(\lambda^\ast_1,\ldots,\lambda^\ast_n)$ with 
  	$\lambda_i\in\Lambda^\ast\cap [0,b)^n$ for $i=1,\ldots,n$ are such that the lattice vectors generate a full-rank sublattice.
  \end{proof}
    
  \subsubsection{Generating the entire dual lattice}\label{subsec:entireDual}
  
  Now we combine Proposition~\ref{prop:samplingFromGoodSet} and
  Proposition~\ref{prop:generateWholeLattice}.  We use the same parameters $q$ and $\kappa$ as in
  the previous section.  We only have to use a larger value for $N$, which guarantees
  that we sample from a larger portion of the dual lattice $\Lambda^*$ to satisfy the premises of
  Proposition~\ref{prop:generateWholeLattice}.  We denote this larger value by $N_0$.  Note that
  with this choice the conditions (III) and (IV) on $q$, $N_0$, and $\kappa$ are
  automatically satisfied.  This is because it becomes easier to satisfy these conditions when $N$
  is made larger. 
    
  \begin{lemma}
    \label{lemma:genentireduallattice}
    Let $q$, $N$, and $\kappa$ be as in Lemma~\ref{prop:samplingFromGoodSet}.  Choose $N_0$ according to
    \begin{equation}\tag{VIII}\label{eq:VIII}
      N_0 \ge  8 n^2(n+1) N.
    \end{equation}
    Use the parameters $q$, $N_0$, and $\kappa$ for the quantum algorithm.  Run it $n+1$ times and
    denote the samples by $w_{n+1},\ldots,w_{2n+1}$.  Assume that $\lambda_1^*, \dots, \lambda_n^*$
    from Lemma~\ref{prop:samplingFromGoodSet} generate a full-rank sublattice of $\Lambda^*$.  Then,
    the probability that there exist
    $\lambda^\ast_{n+1},\ldots,\lambda^\ast_{2n+1}\in\Lambda^\ast\cap [0,\kappa N_0 -
    \frac{1}{2nq})^n$ with
    \begin{enumi}
      \item the lattice vectors $\lambda^\ast_{n+1},\ldots,\lambda^\ast_{2n+1}$ together with the
      lattice vectors $\lambda^\ast_1,\ldots,\lambda^\ast_n$ generate the entire dual lattice
      $\Lambda^\ast$ and
      \item the samples $w_{n+i}$ approximate these lattice vectors $\lambda^\ast_{n+i}$ so that
      \[
        \norm{\frac{w_{n+i}}{2nq} - \lambda_{n+i}^\ast}_2 \le \frac{1}{2\sqrt{n} q} \;\; \text{for $i=1,\ldots,n+1$}
      \]
    \end{enumi}
    is greater or equal to
    \begin{align*}
      & \;\;\;\;
      \Bigl(\hat{\zeta} - \frac{1}{4} \Bigr) 
      \left(\frac{2^{n-1} M_\ell L_\ell \, c_0}{W}\right)^{n+1} \\
      & \ge 
      \Bigl(\hat{\zeta} - \frac{1}{4} \Bigr) 
      \left( \frac{c}{2} \right)^{n+1}  \left( \frac{\kappa}{n} \right)^{n(n+1)}
      \cdot \left[ 1 - \left( \frac{1}{2 q} + \frac{n^2}{\lambda_1(\Lambda)} \right) \frac{1}{\kappa N_0} \right]^{n+1} \cdot \\
      & \quad\quad \left[ 1 - \frac{3n}{q N_0} - \frac{2n\nu(\Lambda)}{q} \right]^{n+1} \\
      & \approx 
      \Bigl(\hat{\zeta} - \frac{1}{4} \Bigr) \left( \frac{c}{2} \right)^{n+1} \left( \frac{\kappa}{n} \right)^{n(n+1)}.
    \end{align*}
  \end{lemma}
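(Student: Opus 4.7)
The plan is to combine Proposition~\ref{prop:lbprobone} (as already exploited in the proof of Lemma~\ref{prop:samplingFromGoodSet}) with Proposition~\ref{prop:generateWholeLattice}, applied this time at the enlarged scale determined by $N_0$. First I would observe that conditions (III)--(V) only become easier to satisfy when $N$ is replaced by any $N_0 \ge N$, so the quantum algorithm run with parameters $q$, $N_0$, $\kappa$ still obeys Proposition~\ref{prop:lbprobone}. Writing $b_0 := \kappa N_0 - \tfrac{1}{2nq}$, $\Lambda^\ast_{b_0} := \Lambda^\ast \cap [0, b_0)^n$, $c_0 := \cos^2\bigl(\pi(\tfrac{1}{4} + \tfrac{1}{2qN_0} + 2\kappa n)\bigr)$, and $W := (2nqN_0)^n$, I would obtain for every $\lambda^\ast \in \Lambda^\ast_{b_0}$ and every $i \in \{1, \ldots, n+1\}$ the per-run bound $\Pr(w_{n+i} \in \calR_{\lambda^\ast}) \ge 2^{n-1} M_\ell c_0 / W$, where $M_\ell$ and $L_\ell$ denote the analogues of the lower bounds from Proposition~\ref{prop:Mestimate} and the proof of Lemma~\ref{prop:samplingFromGoodSet} evaluated at $N_0$.

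Next I would verify the premises of Proposition~\ref{prop:generateWholeLattice} for $L := \Lambda^\ast$ and the full-rank sublattice $L_0 \subseteq \Lambda^\ast$ generated by the hypothesised $\lambda_1^\ast, \ldots, \lambda_n^\ast$. Setting $b := \kappa N - \tfrac{1}{2nq}$, condition (VI) together with the standard estimate $\nu(\Lambda^\ast) \le n/(2\lambda_1(\Lambda))$ gives $b \ge \max\{8n-2, n^{(n-1)/2} 2^{n+1} - 2\} \cdot \nu(\Lambda^\ast)$, while condition (VIII) together with $\kappa N \ge \tfrac{1}{2nq}$ gives $b_0 \ge 8 n^2 (n+1)\, b$. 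Proposition~\ref{prop:generateWholeLattice} then delivers
\[
|Z| \;\ge\; \Bigl(\hat\zeta - \tfrac{1}{4}\Bigr)\, \bigl|\Lambda^\ast_{b_0}\bigr|^{\,n+1} \;\ge\; \Bigl(\hat\zeta - \tfrac{1}{4}\Bigr)\, L_\ell^{\,n+1},
\]
where $Z$ denotes the set of $(n+1)$-tuples $(\lambda_{n+1}^\ast, \ldots, \lambda_{2n+1}^\ast) \in (\Lambda^\ast_{b_0})^{n+1}$ which, together with $\lambda_1^\ast, \ldots, \lambda_n^\ast$, $\Z$-generate $\Lambda^\ast$.

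To assemble the final bound I would invoke the independence of the $n+1$ additional quantum runs, together with the fact that as $\lambda^\ast$ ranges over $\Lambda^\ast_{b_0}$ the sets $\calR_{\lambda^\ast}$ are pairwise disjoint. Granting disjointness, the event in the statement contains the disjoint union, over $(\lambda_{n+1}^\ast, \ldots, \lambda_{2n+1}^\ast) \in Z$, of the events ``$w_{n+i} \in \calR_{\lambda_{n+i}^\ast}$ for all $i$'', so its probability is at least
\[
\sum_{(\lambda_{n+1}^\ast, \ldots, \lambda_{2n+1}^\ast) \in Z} \;\prod_{i=1}^{n+1} \Pr\bigl(w_{n+i} \in \calR_{\lambda_{n+i}^\ast}\bigr) \;\ge\; \Bigl(\hat\zeta - \tfrac{1}{4}\Bigr)\left(\frac{2^{n-1} M_\ell L_\ell c_0}{W}\right)^{n+1}.
\]
The $\ell_2$-approximation in (ii) is then immediate from the containment $\calR_{\lambda^\ast} \subseteq \{w : \|w/(2nq) - \lambda^\ast\|_2 \le 1/(2\sqrt n\, q)\}$ noted just before Proposition~\ref{prop:lbprobone}, and the explicit second form of the bound is obtained by substituting the expressions for $M_\ell$, $L_\ell$, and $W = (2nqN_0)^n$.

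The step I expect to be the main obstacle is precisely the disjointness check for the $\calR_{\lambda^\ast}$, since without it the sum above degenerates into a mere union bound (going the wrong way). Overlap of $\calR_{\lambda^\ast}$ and $\calR_{\mu^\ast}$ forces $\|2nq(\lambda^\ast - \mu^\ast)\|_\infty < 2$, i.e.\ $\|\lambda^\ast - \mu^\ast\|_\infty < 1/(nq)$; this is ruled out by $\|\lambda^\ast - \mu^\ast\|_\infty \ge \lambda_1(\Lambda^\ast)/\sqrt n \ge 1/(\sqrt n\, \nu(\Lambda))$ combined with condition (IV), which supplies $q > 2n \nu(\Lambda)$. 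Once this bookkeeping point is settled, the remainder of the proof is a routine substitution of the previously established lower bounds.
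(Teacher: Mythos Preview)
Your proposal is correct and matches the paper's approach: the paper simply says the proof is ``basically the same as that of Lemma~\ref{prop:samplingFromGoodSet}'' with $N_0$ in place of $N$, and then checks the single extra point $b_0 \ge 8n^2(n+1)b$ exactly as you do. Your explicit disjointness argument for the sets $\calR_{\lambda^\ast}$ is a welcome addition that the paper leaves implicit; one small correction there is that the transference bound should read $\lambda_1(\Lambda^\ast) \ge 1/(2\nu(\Lambda))$ rather than $1/\nu(\Lambda)$ (take $\Lambda = \Z$), but condition~(IV) gives $q > 2n\nu(\Lambda)$, which is more than enough to absorb the extra factor of~$2$.
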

  
  Here, the factor $\hat{\zeta} - \frac{1}{4}$ can be replaced with $\prod_{i=2}^{n+1} \zeta(i)^{-1}
  - \tfrac{1}{4}$ (compare Equation~\eqref{eq:betterboundsongenprop} on
  page~\pageref{eq:betterboundsongenprop}).
  
  The proof of this lemma is basically the same as that of Lemma~\ref{prop:samplingFromGoodSet}.
  Here $L_\ell$ is the lower bound on $\Lambda^\ast\cap [0,b_0)^n$ where $b_0:=\kappa N_0 -
  \tfrac{1}{2nq}$, $M_\ell$ the lower bound on $M$ in Proposition~\ref{prop:Mestimate} (iii), and
  $W=(2n q N_0)^n$, and $c_0=\cos^2\bigl(\pi (\tfrac{1}{4} + \tfrac{1}{2 q N_0} + 2 \kappa
  n)\bigr)$.  The cosine factor $c_0$ is bounded from below by $c = \cos^2\bigl(\pi (\tfrac{1}{4} +
  \tfrac{1}{2 q N} + 2 \kappa n)\bigr)$ since $N_0>N$.  The approximation $\approx$ indicates that
  $L_\ell$ and $M_\ell$ are close to $1$ provided that $q$ and $N_0$ are sufficiently large.
  
  There is one point that should be explained in more detail.  It remains to verify that $b_0 \ge
  8n(n^2 + 1) b$ so that we can apply Lemma~\ref{lem:chooseWindowSizes}.  The condition on the
  relation of the window sizes is equivalent to
  \[
    \kappa N_0 - \frac{1}{2nq} \ge 8n (n^2+1) \Bigl(\kappa N - \frac{1}{2nq}\Bigr).
  \]
  This inequality is clearly satisfied due to (VIII).
  
  \subsubsection{Bounding the probability}
  \label{sec:boundingprobability}
  
  We replace condition (VII) by the stricter condition
  \begin{align}\tag{VII$_1$}\label{eq:VII1}
    N \ge \frac{1}{\kappa}  \left( \frac{n}{q} + \frac{2 n^3}{\lambda_1(\Lambda)} \right).
  \end{align}
  This, together with (VIII), implies 
  \[
    \left[ 1 - \left( \frac{1}{2 q} + \frac{n^2}{\lambda_1(\Lambda)} \right) \frac{1}{\kappa N}
      \right]^n \cdot \left[ 1 - \left( \frac{1}{2 q} + \frac{n^2}{\lambda_1(\Lambda)} \right)
      \frac{1}{\kappa N_0} \right]^{n+1} \ge \frac{1}{2^2}.
  \]
  Moreover, we replace condition~(IV) by the stricter
  condition
  \begin{equation}\tag{IV$_1$}\label{eq:IV1}
    q \ge \frac{6 n^2}{N} + 4 n (n + 1) \nu(\Lambda).
  \end{equation}
  This implies together with (VIII)
  \[
    \left[ 1 - \frac{3n}{q N} - \frac{2n\nu(\Lambda)}{q} \right]^n \cdot \left[ 1 - \frac{3n}{q N_0}
      - \frac{2n\nu(\Lambda)}{q} \right]^{n+1} \ge \frac{1}{2^2}.
  \]
  From the previous two subsections, under the assumption that (I)--(VIII) hold, we get that the
  probability that $2 n + 1$ samples from the algorithm generate the whole lattice~$\Lambda^*$ is at
  least
  \begin{align*}
    & 
    \frac{1}{4} \left(\hat{\zeta} - \frac{1}{4} \right)
    \left( \frac{c}{2} \right)^{2n+1}  \left( \frac{\kappa}{n} \right)^{2n^2+n}
    \cdot \left[ 1 - \left( \frac{1}{2 q} + \frac{n^2}{\lambda_1(\Lambda)} \right) \frac{1}{\kappa
        N} \right]^n \\
    & \cdot{} \left[ 1 - \frac{3n}{q N} - \frac{2n\nu(\Lambda)}{q} \right]^n
    \cdot \left[ 1 - \left( \frac{1}{2 q} + \frac{n^2}{\lambda_1(\Lambda)} \right) \frac{1}{\kappa
        N_0} \right]^{n+1} \\
    & \cdot{} \left[ 1 - \frac{3n}{q N_0} - \frac{2n\nu(\Lambda)}{q} \right]^{n+1},
  \end{align*}
  where $c = \cos^2\bigl(\pi (\tfrac{1}{4} + \tfrac{1}{4 q N} + 2 \kappa n)\bigr)$. Using the stricter conditions 
  ~(VII$_1$) and (IV$_1$) from above, this can be bounded from below by 
  \[ 
  	\frac{1}{2^6} \left( \hat{\zeta} - \frac{1}{4} \right) \left( \frac{c}{2} \right)^{2n+1} \left( \frac{\kappa}{n} \right)^{2n^2+n} \ge
  	\frac{1}{2^9} \left( \frac{c}{2} \right)^{2n+1} \left( \frac{\kappa}{n} \right)^{2n^2+n}.
  \]
  Here, the factor $\frac{1}{4} (\hat{\zeta} - \frac{1}{4})$ can be increased to 0.053176 or $\bigl(
  \prod_{i=2}^{n+1} \zeta(i)^{-1} - \tfrac{1}{4} \bigr) \cdot \prod_{i=1}^{n-1} (1 - 2^{-i})$
  (compare Equation~\eqref{eq:betterboundsongenprop} on
  page~\pageref{eq:betterboundsongenprop}). The latter would improve the lower bound on the
  probability that $2n+1$~samples from the algorithm generate the whole lattice~$\Lambda^\ast$ to \[
  \frac{1}{2^4} \biggl( \prod_{i=2}^{n+1} \zeta(i)^{-1} - \tfrac{1}{4} \biggr) \biggl(
  \prod_{i=1}^{n-1} (1 - 2^{-i}) \biggr) \left( \frac{c}{2} \right)^{2n+1} \left( \frac{\kappa}{n}
  \right)^{2n^2+n}. \]
  
  \subsection{Dimension one}
  \label{sec:boundingprop1D}
  
  Finally, we want to investigate the case $n = 1$ more closely. In this case, we have only one
  window and we sample only two vectors. If $b \ge 3 \det(L) + 1$,
  Lemma~\ref{lemma:latticegeneration1D} yields that two randomly sampled vectors from $\Lambda^*
  \cap [0, b)$ generate $\Lambda^*$ is larger than $\frac{1}{3}$. We proceed similarly to the proof
  of Proposition~\ref{prop:samplingFromGoodSet}. For $b = \kappa N - \frac{1}{2 q}$ to hold in
  conjunction with $b \ge 3 \det(L) + 1 = \frac{3}{\det(\Lambda)} + 1$, we must satisfy the new
  condition
  \begin{equation}\tag{VI$_2$}\label{eq:VI2}
    N \ge \frac{1}{\kappa} \biggl( \frac{3}{\det(\Lambda)} + 1 + \frac{1}{2 q} \biggr).
  \end{equation}
  Assume that the assumptions (I)--(V) and (VI$_2$) are satisfied.  Let $w_1, w_2$ be the two
  samples output by our quantum algorithm.  Then, the probability that all sampled $w_i$ correspond
  to lattice vectors $\lambda^*_i$ in $L_{[0,b)}$ for $i=1, 2$ and that they generate $L$ is at
  least
  \begin{eqnarray*}
    & & 
    \frac{1}{3} \, \left(\frac{M_\ell L_\ell c}{W} \right)^2  \\
    & \ge &
    \frac{1}{12} \kappa^2 c^2 \cdot \left[ 1 - \left( \frac{1}{2 q} + \frac{1}{\det(\Lambda)} \right)
      \frac{1}{\kappa N} \right]^2 \cdot \left[ 1 - \frac{3}{qN} - \frac{\det(\Lambda)}{q} \right]^2,
  \end{eqnarray*}
  where $L_\ell$ is the lower bound on $L_{[0,b)}$ in Proposition~\ref{prop:samplingFromGoodSet},
  $c$ the cosine-factor in Proposition~\ref{prop:samplingFromGoodSet}, $M_\ell$ the lower bound on
  $M$ in Proposition~\ref{prop:Mestimate}~(iii), and $W=2qN$.
  
  Let us introduce the two new assumptions
  \begin{align}\tag{IV$_2$}\label{eq:IV2}
    q \ge{} & \frac{12}{N} + 4 \det(\Lambda) \\
    \text{and} \qquad \tag{VII$_2$}\label{eq:VII2}
    N \ge{} & \frac{1}{\kappa} \left( \frac{2}{q} + \frac{4}{\det(\Lambda)} \right);
  \end{align}
  these imply~(IV), and allow us to bound 
  \[ 
  	\left(1 - \frac{1}{2 q \kappa N} -
  	\frac{1}{ \kappa N \det(\Lambda)} \right)^2 \ge \frac{1}{2} \quad \text{and} \quad 
  	\left(1 - \frac{3}{qN} - \frac{\det(\Lambda)}{q}\right)^2 \ge \frac{1}{2}. 
  \] 
  This yields the lower bound $\frac{1}{48} \kappa^2 c^2$ on the success probability.

	\section{Lattice theoretic tools -- Part 2}\label{sec:part2}

	First, we consider the problem to obtain an approximate basis of a lattice $L$ from an approximate generating set of $L$.  
	Second, we consider the problem to obtain an approximate basis of the dual lattice $L^{*}$
  from an approximate basis of $L$.

	\subsection{Computing an approximate basis of $L$ from an approximate generating set of $L$}

  We address the problem of computing an approximate basis from an approximate generating
  set.  In this subsection, we present \textsc{Buchmann's and Kessler's} approach in
  \cite{BK:93}.  Our exposition simplifies and improves their results.  Our more general analysis
  makes it possible to quantify the approximation quality when different lattice approximation
  algorithms can be used.  The analysis in \cite{BK:93} is written only for the LLL algorithm.
  In the context of our quantum algorithm it is more advantageous to use algorithms to compute
  Korkine-Zolotarev reduced bases.  In our analysis, the approximation quality is entirely
  expressed in terms of the lattice $L$.  In contrast, in \cite{BK:93} the approximation
  quality depends on the characteristics of some sublattice of $L$.

	\begin{remark}
	An approach based on \cite{BK:93} was already suggested in \cite{arthurDiss}.  However, our requirements on the precision of the approximation can be stated in much simpler terms than those made in 		\cite{arthurDiss}.  For instance, an important simplification is that we do not have to consider any sublattice (compare to \cite[Satz 2.4.24]{arthurDiss}).

	Note that \cite{hallgrenUnitgroup} suggested to use the precursor \cite{BP:87} for computing an approximate basis.  The problem is that this earlier work does not make any statements on the size of 	
	the entries of a certain unimodular transformation matrix.  Therefore, the results of this work cannot be directly applied because it not possible to quantify the quality of the resulting approximate 
	basis.  The major motivation for the follow-up work \cite{BK:93} to \cite{BP:87} was to bound the entries of the relevant transformation matrix (see \cite[Introduction]{BK:93}).

Observe that both \cite{BK:93} and \cite{BP:87} rely on the LLL basis reduction algorithm to compute
the transformation matrix.  However, for the quantum algorithm it is significantly better to compute
Korkine-Zolotarev-reduced bases in the classical post-processing step.  This makes it possible to
obtain a transformation matrix with exponentially smaller entries, which in turn yields an
exponentially better approximation of the basis of the period lattice of the infrastructure.  If the LLL
algorithm is used, then it is necessary to evaluate the function $f$ over an exponentially wider window to 
achieve the same quality of approximation of the period lattice.  Note that the cost of computing Korkine-Zolotarev bases
in the classical post-processing step is negligible compared to the time complexity of the quantum part.
\end{remark}

\newpage

Let $L$ be a lattice in $\R^n$ of rank $r\le n$.

\begin{definition}[Approximate basis]
We call $\fb'_1,\ldots,\fb'_r$ a $\delta$-approximate basis of $L$ if there exists a basis $\fb_1,\ldots,\fb_r$ of $L$ with
\begin{equation*}
\| \fb'_i - \fb_i \|_2 \le \delta
\end{equation*}
for $i=1,\ldots,r$.
\end{definition}

\begin{definition}[Approximate generating set]
We call $\fa'_1,\ldots,\fa'_k$ an $\varepsilon$-ap\-prox\-i\-mate generating set of $L$ if there exists a generating set $\fa_1,\ldots,\fa_k$ of $L$ with
\begin{equation}\label{eq:error}
\| \fa'_j - \fa_j \|_2 \le \varepsilon
\end{equation}
for $j=1,\ldots,k$.
\end{definition}

We assume 
\begin{eqnarray*}
\mu    & \le & \lambda_1(L) \\
\alpha & \ge & \max_{j=1,\ldots,k} \{ \| \fa_j \|_2 \}\,.
\end{eqnarray*}
We need these bounds to derive the method for computing an approximate basis from an $\varepsilon$-approximate generating set and to bound its corresponding $\delta$ in terms of $\varepsilon$, $\mu$, $\alpha$, $n$, and $k$.

\begin{remark}
The approximate generating set arises in the following way in our quantum algorithm.  We are given an algorithm that returns rational vectors of the special form $[t \mathbf{a}_j]$ where 
the vectors $\fa_1,\ldots,\fa_k$ generate the lattice.  The parameter $t$ specifies the quality of the approximation and is under our control.  The problem is to find a unimodular matrix $T\in\Z^{k\times r}$ that transforms the $\frac{\sqrt{n}}{2t}$ approximate generating set $\frac{1}{t} [t \fa_j]$ into an approximate basis of $L$ and to determine its corresponding $\delta$.
\end{remark}

We call a vector $\fz=(z_1,\ldots,z_k)\in\Z^k$ a (nontrivial) relation for the generating set if $\fz\neq \mathbf{0}$ and
\begin{equation}\label{eq:relation}
\sum_{j=1} z_j \fa_j = \mathbf{0}\,,
\end{equation}
where $\mathbf{0}$ denotes the (column) zero vector in either $\Z^k$ or $\Z^n$.

\begin{lemma}[Sufficient and necessary condition for relations]\label{lem:suffNecCondRel}
Let $\fz\in\Z^k$ and assume that 
\begin{equation}\label{eq:lowerBound1}
2 \varepsilon \|\fz \|_1  < \mu\,.
\end{equation}
Then $\fz$ is a relation for the generating set if and only if
\begin{equation}\label{eq:condRelation}
\biggl\| \sum_{j=1}^k z_j \fa'_j \biggr\|_2 \le \varepsilon \| \fz \|_1\,.
\end{equation}
\end{lemma}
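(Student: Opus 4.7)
The plan is to exploit the fact that the error vectors $\fe_j := \fa'_j - \fa_j$ have $\|\fe_j\|_2 \le \varepsilon$, together with the key observation that $\sum_j z_j \fa_j$ lies in $L$ and hence, if nonzero, has length at least $\lambda_1(L) \ge \mu$. The hypothesis $2\varepsilon\|\fz\|_1 < \mu$ is exactly what is needed to separate ``$\fz$ is a relation'' from ``$\fz$ is not a relation'' by looking at the perturbed vectors.

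First, I would observe that for any $\fz \in \Z^k$,
\[
\sum_{j=1}^k z_j \fa'_j = \sum_{j=1}^k z_j \fa_j + \sum_{j=1}^k z_j \fe_j,
\]
and by the triangle inequality combined with (\ref{eq:error}),
\[
\biggl\| \sum_{j=1}^k z_j \fe_j \biggr\|_2 \le \sum_{j=1}^k |z_j|\, \|\fe_j\|_2 \le \varepsilon \|\fz\|_1.
\]

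For the forward direction, if $\fz$ is a relation, then $\sum_j z_j \fa_j = \mathbf{0}$, so the displayed identity reduces to $\sum_j z_j \fa'_j = \sum_j z_j \fe_j$, whose norm is at most $\varepsilon\|\fz\|_1$ by the estimate above. This gives (\ref{eq:condRelation}) immediately.

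For the reverse direction, suppose (\ref{eq:condRelation}) holds. Applying the triangle inequality once more together with the error bound,
\[
\biggl\| \sum_{j=1}^k z_j \fa_j \biggr\|_2 \le \biggl\| \sum_{j=1}^k z_j \fa'_j \biggr\|_2 + \biggl\| \sum_{j=1}^k z_j \fe_j \biggr\|_2 \le 2\varepsilon \|\fz\|_1 < \mu \le \lambda_1(L).
\]
Since $\sum_j z_j \fa_j$ is a lattice vector of $L$ whose norm is strictly less than the first successive minimum $\lambda_1(L)$, it must be the zero vector; hence $\fz$ is a relation. There is no real obstacle here: the whole argument is two applications of the triangle inequality, and the role of the hypothesis (\ref{eq:lowerBound1}) is simply to force the perturbed lattice element to be shorter than $\lambda_1(L)$ so that it is compelled to vanish.
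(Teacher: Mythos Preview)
Your proof is correct and essentially identical to the paper's own argument: both directions are handled by the triangle inequality applied to $\sum_j z_j \fa'_j = \sum_j z_j \fa_j + \sum_j z_j(\fa'_j - \fa_j)$, with the reverse direction using $2\varepsilon\|\fz\|_1 < \mu \le \lambda_1(L)$ to force the lattice vector $\sum_j z_j \fa_j$ to vanish. The only cosmetic difference is that you name the error vectors $\fe_j$ explicitly, while the paper writes $\fa'_j - \fa_j$ inline.
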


\begin{proof}
Let $\fz$ be an arbitrary relation. The condition in (\ref{eq:condRelation}) follows then from (\ref{eq:error}) and (\ref{eq:relation}) 
\begin{equation}\label{eq:relationApproxGen}
\biggl\| \sum_{j=1}^k z_j \fa'_j \biggr\|_2 \le \sum_{j=1}^k |z_j| \, \|\fa'_j - \fa_j\|_2 + \, \biggl\| \sum_{j=1}^k z_j \fa_j \biggr\|_2 \le \varepsilon\| \fz \|_1\,.
\end{equation}
Now assume that (\ref{eq:condRelation}) holds for some (nonzero) vector $\fz\in\Z^k$.  Using (\ref{eq:error}) and (\ref{eq:lowerBound1}) we obtain
\begin{eqnarray*}
\biggl\| \sum_{j=1}^k z_j \fa_j \biggr\|_2  & \le & \biggl\| \sum_{j=1}^k z_j (\fa_j - \fa'_j) \biggr\|_2 + \biggl\| \sum_{j=1}^k z_j \fa'_j \biggr\|_2 \\
& \le &
2 \varepsilon \| \fz \|_1 < \mu\,.
\end{eqnarray*}
Since $\mu\le\lambda_1(L)$ we must have that $\sum_{j=1}^k z_j \fa_j=\mathbf{0}$.
\end{proof}

It is convenient to define the scaled approximation vectors 
\begin{equation*}
\ha_j = s\fa'_j\,,
\end{equation*}
where $s$ is a positive parameter that we fix later.  Clearly, $\| \ha_j - s \fa_j \|_2 \le s \varepsilon$.

\begin{definition}[Approximation lattice]
For $j=1,\ldots,k$, define the vectors $\ta_j\in\Z^k \oplus \R^n$ by
\begin{equation*}
\ta_j = \fe_j \oplus \ha_j\,,
\end{equation*}
where $\fe_j$ is the $j$th standard basis vector of $\Z^k$.  The vectors $\ta_1\,\ldots,\ta_k$ are linearly independent and form a basis of the approximation lattice 
\begin{equation*}
\tilde{L} = \bigoplus_{j=1}^k \Z \ta_j\,.
\end{equation*}
\end{definition}

The following lemma establishes that short lattice vectors of $\tilde{L}$ give rise to relations for the generating set of $L$.  For the sake of
generality we introduce the parameter $f$ that characterizes the approximation quality of basis reduction algorithms.  We have $f=2^{(k-1)/2}$ and 
$f=\frac{\sqrt{k+3}}{2}$ for the algorithms that compute LLL-reduced and Korkine-Zolotarev reduced bases.

\begin{lemma}[Sufficient condition for relations]\label{lem:suffCondRel}
Let $\lambda\ge 1$.  Assume that the approximation error $\varepsilon$ is bounded from above by
\begin{equation*}
\varepsilon \le \frac{\mu}{2 f \lambda \sqrt{k}}
\end{equation*}
and the scaling factor $s$ is chosen so that
\begin{equation}\label{eq:lowerBound2}
s > \frac{2 f \lambda}{\mu}\,.
\end{equation}
Let $\fz=(z_1,\ldots,z_k)\in\Z^k$ be an arbitrary vector and 
\begin{equation*}
\tx = \sum_{j=1}^k z_j \ta_j\,.
\end{equation*}
the corresponding lattice vector of $\tilde{L}$.
If 
\begin{equation*}
\| \tx \|_2 \le f \lambda
\end{equation*}
then $\fz$ is a relation for the generating set $\fa_1,\ldots,\fa_k$.
\end{lemma}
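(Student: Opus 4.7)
The approximation lattice $\tilde{L}$ is built as a sublattice of $\Z^k \oplus \R^n$ with the two factors orthogonal. Writing $\tx = \sum_{j=1}^k z_j \ta_j$ according to this orthogonal decomposition gives
\begin{equation*}
  \tx \;=\; \fz \,\oplus\, s \sum_{j=1}^k z_j \fa'_j,
  \qquad \fz = (z_1, \ldots, z_k).
\end{equation*}
By the Pythagorean theorem, the single hypothesis $\|\tx\|_2 \le f\lambda$ therefore splits into the two component bounds
\begin{equation*}
  \|\fz\|_2 \le f\lambda \qquad \text{and} \qquad
  \Bigl\| \sum_{j=1}^k z_j \fa'_j \Bigr\|_2 \le \frac{f\lambda}{s},
\end{equation*}
and, using $\|\fz\|_1 \le \sqrt{k}\,\|\fz\|_2$, also $\|\fz\|_1 \le f\lambda\sqrt{k}$.

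The plan is now to pass from the $\fa'_j$ to the $\fa_j$ via the $\varepsilon$-approximation and push $\sum z_j \fa_j$ below the minimum distance $\lambda_1(L)$. Applying the triangle inequality together with $\|\fa_j - \fa'_j\|_2 \le \varepsilon$ gives
\begin{equation*}
  \Bigl\| \sum_{j=1}^k z_j \fa_j \Bigr\|_2
  \;\le\; \sum_{j=1}^k |z_j|\,\|\fa_j - \fa'_j\|_2 \;+\; \Bigl\| \sum_{j=1}^k z_j \fa'_j \Bigr\|_2
  \;\le\; \varepsilon \|\fz\|_1 + \frac{f\lambda}{s}.
\end{equation*}
The hypothesis $\varepsilon \le \mu/(2f\lambda\sqrt{k})$, together with $\|\fz\|_1 \le f\lambda\sqrt{k}$, bounds the first summand by $\mu/2$; the hypothesis $s > 2f\lambda/\mu$ makes the second summand strictly less than $\mu/2$. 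Adding the two gives $\|\sum_j z_j \fa_j\|_2 < \mu \le \lambda_1(L)$.

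Since the only vector of $L$ of length strictly less than $\lambda_1(L)$ is $\mathbf{0}$, this forces $\sum_{j=1}^k z_j \fa_j = \mathbf{0}$, which is the defining equation of a relation; the vector $\fz$ is itself nonzero whenever $\tx$ is, because $\fz$ is literally the $\Z^k$-component of $\tx$. Conceptually the argument is not difficult: the two hypotheses on $\varepsilon$ and $s$ are calibrated so that each contribution to the triangle inequality is at most $\mu/2$. The only minor subtlety worth flagging is that one cannot apply Lemma~\ref{lem:suffNecCondRel} as a black box here, because the available bound $\|\sum_j z_j \fa'_j\|_2 \le f\lambda/s$ is not automatically $\le \varepsilon \|\fz\|_1$; the direct triangle-inequality estimate above is what yields the cleanest route.
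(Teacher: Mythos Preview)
Your proof is correct and is essentially the same as the paper's, just organized as a direct argument rather than by contraposition. The paper assumes $\fz$ is not a relation and shows $\|\tx\|_2 > f\lambda$; you assume $\|\tx\|_2 \le f\lambda$ and show $\sum_j z_j \fa_j = \mathbf{0}$. In both cases the work consists of the orthogonal decomposition $\tx = \fz \oplus s\sum_j z_j \fa'_j$, the step $\|\fz\|_1 \le \sqrt{k}\,\|\fz\|_2 \le \sqrt{k}\,f\lambda$, the triangle inequality to pass between the $\fa_j$ and the $\fa'_j$, and the observation that the hypotheses on $\varepsilon$ and $s$ cap each of the two resulting error terms at $\mu/2$.
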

\begin{proof}
We prove the lemma by showing that the contraposition of the statement holds.  Assume that $\fz$ is not a relation.  We have to show that corresponding vector $\tx$ is strictly longer than $f \lambda$.

We write $\tx = \fz \oplus \hx$ with $\hx = \sum_{j=1}^k z_j \ha_j$.  Then we have
\begin{equation*}
\| \tx \|_2^2 = \| \fz \|_2^2 + \| \hx \|_2^2\,. 
\end{equation*}

If $\| \fz \|_2 > f\lambda$ holds then we are done.  Otherwise we have
\begin{eqnarray*}
\| \tx \|_2 
& \ge & 
\| \hx \|_2 
=
\biggl\| \sum_{j=1}^k z_j \ha_j \biggr\|_2 \\
& \ge &
s \biggl\| \sum_{j=1}^k z_j \fa_j \biggr\|_2 - \biggl\| \sum_{j=1}^k z_j (s\fa_j - \ha_j)\biggr\|_2 \\
& \ge &
s \, \mu - s \, \| \fz \|_1 \, \varepsilon
\ge
s \, \mu - s \, \| \fz \|_2 \, \sqrt{k} \varepsilon \\
& \ge  &
s \, \mu - s \, f \lambda \sqrt{k} \varepsilon
=
s \left( \mu - f \lambda \sqrt{k} \varepsilon \right) \\
& \ge &
s \left( \mu - \frac{\mu}{2} \right)
\ge
s \, \frac{\mu}{2}
>
f \lambda\,.
\end{eqnarray*}
\end{proof}

\begin{lemma}[Linearly independent relations of bounded norm]\label{lem:linIndepRel}
There exist $k-r$ linearly independent relations $\fm_1,\ldots,\fm_{k-r}$ of the generating set with
\begin{equation*}
\| \fm_j \|_{\infty} \le \frac{\alpha^r}{\det(L)}\,.
\end{equation*}
\end{lemma}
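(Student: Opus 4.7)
The plan is to construct the $\fm_j$ explicitly via Cramer's rule. Since $L$ has rank $r$ and is generated by $\fa_1,\ldots,\fa_k$, after relabelling I may assume $\fa_1,\ldots,\fa_r$ are $\R$-linearly independent; working in $\mspan_\R L\cong\R^r$, set $M:=[\,\fa_1\mid\cdots\mid\fa_r\,]$ and, for each $j\in\{r+1,\ldots,k\}$ and each $i\in\{1,\ldots,r\}$, let $M_i^{(j)}$ denote the matrix obtained from $M$ by replacing its $i$-th column with $\fa_j$. Solving the linear system $M\,\mathbf{c}=\fa_j$ by Cramer's rule yields $c_{ij}=\det(M_i^{(j)})/\det(M)$, whence
\[
  \det(M)\,\fa_j \;-\; \sum_{i=1}^{r}\det(M_i^{(j)})\,\fa_i \;=\; \mathbf{0}.
\]

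The key step is to divide through by $\det(L)$ and verify that every coefficient remains integral. Every column of $M$ and of $M_i^{(j)}$ lies in $L$, so the $r$ columns generate a sublattice $L'\subseteq L$; if $L'$ has full rank then $|\det(L')|=[L:L']\cdot\det(L)$ is a positive integer multiple of $\det(L)$, while otherwise the corresponding determinant vanishes. In either case both $\det(M)/\det(L)$ and $\det(M_i^{(j)})/\det(L)$ lie in $\Z$. I therefore define $\fm_j\in\Z^{k}$ (for $j=r+1,\ldots,k$) by setting its $j$-th coordinate to $\det(M)/\det(L)$, its $i$-th coordinate to $-\det(M_i^{(j)})/\det(L)$ for $1\le i\le r$, and all remaining coordinates to zero. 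By construction each $\fm_j$ is a relation for $\fa_1,\ldots,\fa_k$.

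Linear independence of $\fm_{r+1},\ldots,\fm_k$ is immediate from the support pattern: $\fm_j$ is the only member of the family with a non-zero entry in position $j$, and that entry equals $\det(M)/\det(L)\neq 0$. For the norm bound I apply Hadamard's inequality: every column of $M$ or of $M_i^{(j)}$ is some $\fa_\ell$ with $\|\fa_\ell\|_2\le\alpha$, so $|\det(M)|\le\alpha^r$ and $|\det(M_i^{(j)})|\le\alpha^r$, and dividing by $\det(L)$ yields $\|\fm_j\|_\infty\le\alpha^r/\det(L)$, as required.

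The only delicate point I foresee is the divisibility claim $\det(M)/\det(L)\in\Z$, which is the standard covolume-index formula for full-rank sublattices of $L$; I do not anticipate any real obstacle.
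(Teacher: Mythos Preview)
Your proof is correct and follows essentially the same approach as the paper: both reorder so that $\fa_1,\ldots,\fa_r$ are independent, pass to $r$-dimensional coordinates, solve $M\mathbf{c}=\fa_j$ by Cramer's rule, divide through by $\det(L)$, justify integrality via the sublattice-index formula, and bound the coordinates by Hadamard's inequality. The only cosmetic differences are that the paper writes out the isometric embedding $L\hookrightarrow\R^r$ explicitly (whereas you simply identify $\mspan_\R L$ with $\R^r$), uses the opposite sign convention on $\fm_j$, and refers to the determinant bound as ``Minkowski's inequality'' where you (correctly) say Hadamard.
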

\begin{proof}
We construct an isometric embedding of $L$ into $\R^r$.  Let $\fb_1,\ldots,\fb_r$ be a basis of $L$ and $\fb^*_1,\ldots,\fb^*_r$ the corresponding orthonormal vectors obtained by the Gram-Schmidt process.
Let $\mathbf{w}_1,\ldots,\mathbf{w}_r$ be an arbitrary orthonormal basis of $\R^r$.  
The mapping $\Phi$ defined by 
\begin{equation*}
\Phi(\fb^*_i) = \mathbf{w}_i
\end{equation*}
for $i=1,\ldots,r$ is an isometry between $L$ and $L^{\Phi}:=\Phi(L)$ and we have $\det(L)=\det(L^{\Phi})$.  We set $\fa^{\Phi}_i := \Phi(\fa_i)$.  We assume w.l.o.g. that the
first $r$ vectors of the generating set $\fa_1,\ldots,\fa_k$ are linearly independent.  Define the matrices
\begin{eqnarray*}
A       & = & (\fa^{\Phi}_1|\cdots|\fa^{\Phi}_r|\cdots|\fa^{\Phi}_k) \\
C       & = & (\fa^{\Phi}_1|\cdots|\fa^{\Phi}_r)
\end{eqnarray*}
The submatrix $C\in\R^{r\times r}$ of $A\in\R^{r\times k}$ is nonsingular, which follows from the assumption that the first $r$ generators of $L$ are linearly independent.
Let $\mathbf{v}_j\in\R^r$ be the solutions of the linear system 
\begin{equation*}
C \mathbf{v}_j = \fa^{\Phi}_{r+j}
\end{equation*}
for $j=1,\ldots,k-r$.  
Define the (column) vectors 
\begin{equation*}
\fm_j= \frac{\det(C)}{\det(L^{\Phi})} \bigl( \mathbf{v_j} \oplus (-1) \mathbf{e}_j \bigr)\,,
\end{equation*}
where $\mathbf{e}_j$ are the standard basis vectors of $\R^{k-r}$ for $j=1,\ldots,k-r$.  Due to construction they are linearly independent and form a basis of the kernel of $A$ (which has dimension $k-r$) since 
\begin{equation*}
A \fm_j = \frac{\det(C)}{\det(L^{\Phi})} \left( C \mathbf{v}_j - \fa_{r+j} \right) = \frac{\det(C)}{\det(L^{\Phi})} \left( \fa_{r+j} - \fa_{r+j} \right) = \mathbf{0}
\end{equation*}
for $j=1,\ldots,k-r$.  Using Cramer's rule, we can express the coefficients $v_{ij}$ of the vector $\mathbf{v}_j$ as
\begin{equation*}
v_{ij} = \frac{\det(\fa^{\Phi}_1|\cdots|\fa^{\Phi}_{i-1}|\fa^{\Phi}_{r+j}|\fa^{\Phi}_{i+1}|\cdots|\fa^{\Phi}_r)}{\det(C)}\,.
\end{equation*}
Note that the values 
\begin{equation*}
\frac{\det(C)}{\det(L^{\Phi})} v_{ij} = \frac{\det(\fa^{\Phi}_1|\cdots|\fa^{\Phi}_{i-1}|\fa^{\Phi}_{r+j}|\fa^{\Phi}_{i+1}|\cdots|\fa^{\Phi}_r)}{\det(L^{\Phi})}
\end{equation*}
are always either $0$ or the indices of full-rank sublattices of $L^{\Phi}$.  The two mutually exclusive cases are: (i) $\fa^{\Phi}_{r+j}$ is contained in the span of $\fa^{\Phi}_1,\ldots,\fa^{\Phi}_{i-1},\fa^{\Phi}_{i+1},\ldots,\fa^{\Phi}_r$, implying that the determinant is $0$ and (ii) $\fa^{\Phi}_1,\ldots,\fa^{\Phi}_{i-1},\fa^{\Phi}_{r+j}, \fa^{\Phi}_{i+1},\ldots,\fa^{\Phi}_r$, implying that they generate a full-rank sublattice.  Therefore, all components of $\fm_j$ are integers. This concludes the proof that $\fm_1,\ldots,\fm_{k-r}$ are relations for the generating set.

The upper bound on the $\| \cdot \|_\infty$-norm of these relations follows directly from Minkowski's inequality.  We can bound the absolute value of the determinants by the product of the norms of the column vectors, which can be at most $\alpha^r$.
\end{proof}

\begin{lemma}[Upper bounds on minima of the approximate lattice]\label{lem:minima}
Assume we set 
\begin{equation*}
\lambda = 3 \sqrt{k} \frac{\alpha^r}{\det(L)}
\end{equation*}
and choose the scaling factor $s$ so that 
\begin{equation}\label{eq:upperBound}
s \le \frac{4 f \lambda}{\mu}\,.
\end{equation}

The first $(k-r)$ minima are bounded from above by 
\begin{equation*}
\lambda_j(\tilde{L}) \le \lambda
\end{equation*}
for $j=1,\ldots,k-r$. 

The last $r$ minima are bounded from above
\begin{equation*}
\lambda_j(\tilde{L}) \le \sqrt{s^2(\alpha+\varepsilon)^2 + 1} \le \frac{6.5f\lambda\alpha}{\mu}
\end{equation*}
for $j=k-r+1,\ldots,k$.
\end{lemma}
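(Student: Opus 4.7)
My plan is to exhibit explicit short lattice vectors in $\tilde{L}$ that witness both bounds: the $k-r$ relations from Lemma~\ref{lem:linIndepRel} produce short vectors for the first $k-r$ minima, and $r$ of the basis vectors $\ta_i$ themselves handle the remaining ones.

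For the first $k-r$ minima, I would take the linearly independent relations $\fm_1,\ldots,\fm_{k-r}$ from Lemma~\ref{lem:linIndepRel}, which satisfy $\|\fm_j\|_\infty \le \alpha^r/\det(L)$, and form the associated lattice vectors $\tilde{\fm}_j := \sum_{i=1}^k m_{j,i}\,\ta_i = \fm_j \oplus \sum_{i=1}^k m_{j,i}\,\ha_i$. Since $\fm_j$ is a relation, the $\R^n$-part collapses to $s\sum_i m_{j,i}(\fa'_i - \fa_i)$, whose norm is bounded by $s\|\fm_j\|_1\varepsilon \le sk\|\fm_j\|_\infty\varepsilon$. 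Combining the hypothesis $s \le 4f\lambda/\mu$ with $\varepsilon \le \mu/(2f\lambda\sqrt{k})$ from Lemma~\ref{lem:suffCondRel} yields $s\varepsilon \le 2/\sqrt{k}$, so the $\R^n$-part has norm at most $2\sqrt{k}\,\alpha^r/\det(L)$. Together with the $\Z^k$-part of norm at most $\sqrt{k}\,\alpha^r/\det(L)$, this gives $\|\tilde{\fm}_j\|_2^2 \le 5k(\alpha^r/\det(L))^2 \le 9k(\alpha^r/\det(L))^2 = \lambda^2$. The $\tilde{\fm}_j$ are linearly independent in $\tilde{L}$ because their $\Z^k$-projections are precisely the $\fm_j$, so $\lambda_j(\tilde{L}) \le \lambda$ for $j=1,\ldots,k-r$.

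For the last $r$ minima, I would supplement the relation vectors with $r$ basis vectors $\ta_{i_1},\ldots,\ta_{i_r}$, choosing indices so that $\fa_{i_1},\ldots,\fa_{i_r}$ span $L$ (WLOG $i_l = l$). Each satisfies $\|\ta_i\|_2 = \sqrt{1 + s^2\|\fa'_i\|_2^2} \le \sqrt{1 + s^2(\alpha+\varepsilon)^2}$, giving the first inequality. The point requiring care is the linear independence of the $k$ vectors $\ta_1,\ldots,\ta_r,\tilde{\fm}_1,\ldots,\tilde{\fm}_{k-r}$; this reduces to the independence of $\fe_1,\ldots,\fe_r,\fm_1,\ldots,\fm_{k-r}$ in $\Z^k$, which holds because the $\fm_j$ span the kernel of the surjection $\fz\mapsto \sum z_l\fa_l$, while $\fe_1,\ldots,\fe_r$ project to the basis $\fa_1,\ldots,\fa_r$ of $L$. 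Moreover, the lower bound $s > 2f\lambda/\mu$ from (\ref{eq:lowerBound2}) together with $\alpha\ge\mu$ and $f\ge 1$ gives $\sqrt{1+s^2(\alpha+\varepsilon)^2} \ge s\alpha > 2\lambda$, so all $k$ of these vectors fit within the larger bound and $\lambda_j(\tilde{L}) \le \sqrt{1+s^2(\alpha+\varepsilon)^2}$ for $j=k-r+1,\ldots,k$.

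To finish, I translate $\sqrt{s^2(\alpha+\varepsilon)^2+1}$ into the clean constant $6.5\,f\lambda\alpha/\mu$. Set $y := f\lambda\alpha/\mu$. Then $s\alpha \le 4y$ and $s\varepsilon \le 2/\sqrt{k} \le 2$, while $\alpha\ge\mu$ and $\det(L)\le\alpha^r$ (by Hadamard) imply $\lambda\ge 3\sqrt{k}$ and hence $y \ge 3$. Consequently $s^2(\alpha+\varepsilon)^2 + 1 \le (4y+2)^2 + 1 = 16y^2 + 16y + 5$, and the bound $y^2\ge 9$ absorbs the linear and constant terms into the leading one, producing $\le (6.5)^2 y^2$ as claimed. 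The main bookkeeping is concentrated in the linear-independence step and in this final constant-chasing; everything else is a direct invocation of the preceding lemmas.
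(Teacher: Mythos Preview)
Your proposal is correct and follows the same overall strategy as the paper: the relation vectors $\tilde{\fm}_j$ witness the first $k-r$ minima, basis vectors $\ta_i$ witness the remaining ones, and the constant is chased at the end.

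Two points of comparison are worth recording. For the last $r$ minima the paper takes a shorter path: since $\ta_1,\ldots,\ta_k$ already form a basis of $\tilde{L}$, one has $\lambda_k(\tilde{L}) \le \max_i \|\ta_i\|_2 \le \sqrt{s^2(\alpha+\varepsilon)^2+1}$ immediately, with no need to mix in the relation vectors, verify independence across the two families, or invoke the lower bound $s > 2f\lambda/\mu$ from~(\ref{eq:lowerBound2}) (which is not a hypothesis of the present lemma, though it is imposed in the next one). Your mixed family works too, but note a small slip: $\fa_1,\ldots,\fa_r$ are only linearly independent in $L$, not a basis of $L$; fortunately linear independence over $\R$ is all your argument requires. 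For the final constant, the paper linearizes the square root via a tangent-line estimate to reach $s(\alpha+\varepsilon)+\tfrac{1}{2} \le 4f\lambda\alpha/\mu + 2.5 \le 6.5\,f\lambda\alpha/\mu$, whereas your quadratic bookkeeping with $y=f\lambda\alpha/\mu \ge 3$ is an equally valid alternative.
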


\begin{proof}
Let $\fm_j$ be the $(k-r)$ linearly independent relations constructed in the proof of Lemma~\ref{lem:linIndepRel}.
We define the vectors
\begin{equation*}
\tx_j = \sum_{i=1}^k m_{ij} \ta_i = \fm_j \oplus \sum_{i=1}^k m_{ij} \ha_i\,.
\end{equation*}
Obviously, the vectors $\tx_j$ are linearly independent.  Since $\fm_j$ is a relation we may apply the 
inequality in (\ref{eq:relationApproxGen}) from the first part of the proof of Lemma~\ref{lem:suffNecCondRel}.  We obtain
\begin{eqnarray*}
\| \tx_j \|_2 
& \le & 
\| \fm_j \|_2 + \biggl\| \sum_{i=1}^k m_{ij} \ha_i \biggr\|_2
=
\| \fm_j \|_2 + s\, \biggl\| \sum_{i=1}^k m_{ij} \fa'_i \biggr\|_2 \\
& \le &
\| \fm_j \|_2 + s\, \varepsilon\, \| \fm_j \|_1 
\le
\sqrt{k} \| \fm_j \|_{\infty} + s \, \varepsilon\, k \, \|\fm_j\|_{\infty} \\
& = &
\left( 1 + s\,\varepsilon \sqrt{k} \right) \sqrt{k} \, \| \fm_j \|_{\infty}
\le
\left( 1 + \frac{2}{\sqrt{k}}  \sqrt{k} \right) \sqrt{k} \, \| \fm_j \|_{\infty} \\
& \le &
3 \sqrt{k} \frac{\alpha^r}{\det(L)}
\le
\lambda\,.
\end{eqnarray*}
The upper bound on the last minima follows from 
\begin{equation*}
\lambda_j(\tilde{L}) \le \max_{i=1,\ldots,k} \| \ta_i \|_2 \le \sqrt{s^2(\alpha+\varepsilon)^2 + 1}\,.
\end{equation*}
The upper bound on the square root expression holds since the tangent to the square root at $s^2(\alpha+\varepsilon)^2>1$ has slope greater or equal to $1/2$ so a displacement by $1$ can increase the value by at most $1/2$ and $s\varepsilon \le 2/\sqrt{k}\le 1$.  This yields observations yield the upper bound $4f\lambda\alpha/\mu + 2.5$, which bounded from above by $6.5 f\lambda\alpha/\mu$.
\end{proof}

To simplify notation in the following we set
\begin{equation*}
\tilde{\alpha} = \sqrt{s^2(\alpha+\varepsilon)^2 + 1}.
\end{equation*}

We apply the basis reduction algorithm to the lattice basis $\ta_1,\ldots,\ta_k$ and obtain the reduced basis $\tb_1,\ldots,\tb_k$.  Denote by $M=(m_{ij})\in\Z^{k\times k}$ the corresponding (unimodular) transformation matrix.  We write the reduced basis vectors as
\begin{equation*}
\tb_j = (\fm_j,\hb_j)
\end{equation*}
where $\fm_j=(m_{1j},\ldots,m_{kj})\in\Z^k$ are the column vectors of $M$ and 
\begin{equation*}
\hb_j = \sum_{i=1}^k m_{ij} \ha_j\,.
\end{equation*}

The following lemma shows we can directly obtain a basis of $L$ with the help of the transformation matrix $M$.

\begin{lemma}[Basis and approximate bases for $L$]
Set 
\begin{equation*}
\lambda=3\sqrt{k} \frac{\alpha^r}{\det(L)}\,.
\end{equation*}  
Assume that the approximation error is bounded from above by
\begin{equation*}
\varepsilon \le \frac{\mu}{2 f \lambda \sqrt{k}}
\end{equation*}
and the scaling factor is bounded from below and above by
\begin{equation*}
\frac{2 f \lambda}{\mu} < s \le \frac{4 f \lambda}{\mu}\,.
\end{equation*}
Let $M$ be the transformation matrix returned by the basis reduction algorithm when applied to the basis $\ta_1,\ldots,\ta_k$ of the approximation lattice $\tilde{L}$.  

Define the vectors
\begin{eqnarray*}
\fb_j  & = & \sum_{i=1}^k m_{i,k-r+j} \fa_i  \\
\fb'_j & = & \sum_{i=1}^k m_{i,k-r+j} \fa'_i \\
\end{eqnarray*}
for $j=1,\ldots,r$.  Then we have
\begin{itemize}
\item The vectors $\fb_1,\ldots,\fb_r$ form a basis of $L$ and their norms are bounded from above by 
\begin{equation*}
\| \fb_j \|_2 \le f \sqrt{k} \, \tilde{\alpha} \, \alpha \,.
\end{equation*}
\item The vectors $\fb'_1,\ldots,\fb'_r$ form a $\delta$-approximate basis of $L$ with 
\begin{equation*}
\delta \le f \sqrt{k} \, \tilde{\alpha} \, \varepsilon \,.
\end{equation*}
\end{itemize}
\end{lemma}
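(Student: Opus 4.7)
The plan is to combine the three preceding lemmas in a direct way. The key idea is that after basis reduction, the first $k-r$ vectors of the reduced basis of $\tilde{L}$ will automatically correspond to relations among the generators of $L$, so the last $r$ columns of the transformation matrix $M$ produce a genuine basis of $L$; the two norm bounds will then fall out of the estimate $\|\tb_j\|_2 \le f \tilde{\alpha}$ for $j > k-r$.

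First, I would invoke the upper bounds on the minima of $\tilde{L}$: under the stated choices of $\lambda$ and $s$, one has $\lambda_j(\tilde{L}) \le \lambda$ for $j \le k-r$ and $\lambda_j(\tilde{L}) \le \tilde{\alpha}$ for $j > k-r$. The defining property of the basis reduction algorithm gives $\|\tb_j\|_2 \le f\, \lambda_j(\tilde{L})$, so $\|\tb_j\|_2 \le f\lambda$ for $j \le k-r$. The precision and scaling hypotheses here are exactly those of the lemma giving the sufficient condition for relations, so applying it to each such $\tb_j$ shows that the integer vector $\fm_j$ is a relation for $\fa_1, \dots, \fa_k$ for every $j \le k-r$.

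Next I would show that $\fb_1, \dots, \fb_r$ is a basis of $L$. Consider the surjective $\Z$-linear map $\phi : \tilde{L} \to L$ determined by $\phi(\ta_i) = \fa_i$; its kernel is by definition the lattice of relations, and unpacking definitions one sees $\fb_j = \phi(\tb_{k-r+j})$. Since $M$ is unimodular, $\tb_1, \dots, \tb_k$ is a $\Z$-basis of $\tilde{L}$, and by the preceding step the first $k-r$ of these lie in $\ker \phi$. Writing any $\tx \in \tilde{L}$ as $\sum_j c_j \tb_j$ with $c_j \in \Z$ and applying $\phi$, the image $\phi(\tx)$ is therefore a $\Z$-linear combination of $\fb_1, \dots, \fb_r$; surjectivity of $\phi$ then implies that these $r$ vectors generate $L$, and since a lattice of rank $r$ cannot be generated by fewer than $r$ elements, they form a $\Z$-basis.

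The norm bounds then follow from the direct estimate
\begin{equation*}
\|\fb_j\|_2 \le \sum_{i=1}^k |m_{i,k-r+j}|\, \|\fa_i\|_2 \le \alpha\, \|\fm_{k-r+j}\|_1 \le \alpha\sqrt{k}\, \|\fm_{k-r+j}\|_2 \le \alpha\sqrt{k}\, \|\tb_{k-r+j}\|_2 \le f\sqrt{k}\, \tilde{\alpha}\, \alpha,
\end{equation*}
using in the third inequality Cauchy-Schwarz, in the fourth that $\fm_{k-r+j}$ is the $\Z^k$-component of $\tb_{k-r+j}$, and in the fifth the second minimum bound. Running the identical computation with $\fa'_i - \fa_i$ (of norm at most $\varepsilon$) in place of $\fa_i$ yields $\|\fb'_j - \fb_j\|_2 \le f\sqrt{k}\, \tilde{\alpha}\, \varepsilon$, the claimed $\delta$. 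The main subtlety lies in the third paragraph: one must argue that the images of the last $r$ reduced basis vectors do not merely span $L$ over $\Q$ but actually generate $L$ as a $\Z$-module; the crucial ingredient is the unimodularity of $M$, without which one could only conclude that the images generate a finite-index sublattice of $L$.
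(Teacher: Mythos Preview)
Your proof is correct and follows essentially the same approach as the paper: both use the reduction property $\|\tb_j\|_2 \le f\lambda_j(\tilde{L})$ together with the minima bounds to show that the first $k-r$ columns of $M$ are relations, then invoke unimodularity of $M$ to conclude that the last $r$ columns yield a basis of $L$, and finish with the same chain of inequalities for the norm bounds. The only cosmetic difference is that you phrase the basis argument via the surjective $\Z$-linear map $\phi:\tilde{L}\to L$, whereas the paper writes it in matrix form as $AM=(\mathbf{0}\mid\cdots\mid\mathbf{0}\mid\fb_1\mid\cdots\mid\fb_r)$; these are the same argument.
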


\begin{proof}
We know that the reduced basis vectors $\tb_j$ satisfy
\begin{equation*}
\| \tb_\ell \|_2 \le f \lambda_\ell(\tilde{L})\,.
\end{equation*}
Using the upper bounds on the first $(k-r)$ minima in Lemma~\ref{lem:minima} we obtain 
\begin{equation*}
\| \tb_\ell \|_2 \le f \lambda
\end{equation*}
for $\ell=1,\ldots,k-r$.  These vectors are sufficiently short so that Lemma~\ref{lem:suffCondRel} applies.  We conclude that $\fm_1,\ldots,\fm_{k-r}$ are relations for 
the generating set $\fa_1,\ldots,\fa_k$ of $L$.  

Let $A=(\fa_1 | \cdots | \fa_k)\in\Z^{n\times k}$.  Then we have
\begin{equation*}
A M =
\left(
\underbrace{\mathbf{0} | \cdots | \mathbf{0}}_{k-r} | \fb_1 | \cdots | \fb_r
\right) \in \Z^{n\times k}
\end{equation*}
since the first $k-r$ columns of $M$ are relations of the generating set.  Since $M$ is unimodular the lattice generated by $\fb_1,\ldots,\fb_r$ is equal to $L$ and, thus, $\fb_1,\ldots,\fb_r$ form a basis.

We first determine an upper bound on the norm of the last $r$ column vectors of $M$.  For $j=1,\ldots,r$, we have
\begin{equation*}
\|\fm_{k-r+j}\|_2 \le \| \tb_j \|_2 \le f \lambda_{k-r+j}(\tilde{L}) \le f \tilde{\alpha}\,.
\end{equation*}

We have
\begin{eqnarray*}
\| \fb_j \|_2          & \le & \| \fm_{k-r+j} \|_1 \, \alpha   \le \sqrt{k} f \tilde{\alpha} \, \alpha \\
\| \fb'_j - \fb_j \|_2 & \le & \| \fm_{k-r+j} \|_1 \, \varepsilon \le \sqrt{k} f \tilde{\alpha} \, \varepsilon
\end{eqnarray*}
for $j=1,\ldots,r$.
\end{proof}

We assume in the following the lattice $L$ has full rank, i.e., $r=n$.  This situation occurs precisely in our quantum algorithm.  
To further simplify notation, we also set
\begin{equation*}
g := f \sqrt{k} \tilde{\alpha}\,.
\end{equation*}

\subsection{Computing an approximate basis of the dual lattice $L^\ast$ from an approximate basis of $L$}

\begin{lemma}
Let $\fb'_1,\ldots,\fb'_n$ be a $\delta$-approximate basis of $L$ with $\delta\le g \varepsilon$ as in the lemma above.  Then we can obtain a $\gamma$-approximate basis of the dual lattice $L^*$ with
\begin{equation*}
\gamma \le \frac{2 n^{5/2} g^{2n-1} \alpha^{2(n-1)}}{\det(L)^2}\, \varepsilon\,.
\end{equation*}
provided that
\begin{equation*}
	\varepsilon \le \frac{\det(L)}{2 n^{3/2} g^n \alpha^{n-1}}\,.
\end{equation*}
\end{lemma}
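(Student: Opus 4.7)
The plan is to recast the problem in matrix form. Let $B,B'\in\R^{n\times n}$ denote the matrices whose columns are $\fb_1,\dots,\fb_n$ and $\fb'_1,\dots,\fb'_n$ respectively, and set $E:=B'-B$. The dual basis vectors of $L$ are precisely the columns of $B^{-T}$, and our candidate approximate dual basis is given by the columns of $(B')^{-T}$. Thus it suffices to bound the spectral-norm perturbation $\norm{(B')^{-T}-B^{-T}}_2=\norm{(B')^{-1}-B^{-1}}_2$, since for each $j$ the $j$-th column of $(B')^{-T}-B^{-T}$ has Euclidean length at most this operator norm, and that length is exactly the distance between the $j$-th approximate dual basis vector and the true one.

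Next, I would control $\norm{B^{-1}}_2$ via Cramer's rule and Hadamard's inequality. Each entry of $B^{-1}$ is, up to sign, an $(n-1)\times(n-1)$ minor of $B$ divided by $\det(B)=\pm\det(L)$. By the previous lemma, $\norm{\fb_j}_2\le g\alpha$ for every $j$. Applying Hadamard's inequality to each such minor yields $|(B^{-1})_{ij}|\le (g\alpha)^{n-1}/\det(L)$, and hence $\norm{B^{-1}}_2\le\norm{B^{-1}}_F\le n(g\alpha)^{n-1}/\det(L)$. Similarly, since every column of $E$ has Euclidean norm at most $\delta\le g\varepsilon$, we get $\norm{E}_2\le\norm{E}_F\le\sqrt{n}\,g\varepsilon$.

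The hypothesis $\varepsilon\le\det(L)/(2n^{3/2}g^n\alpha^{n-1})$ precisely translates, after multiplying these two bounds, into $\norm{B^{-1}}_2\cdot\norm{E}_2\le 1/2$. From $B'=B(I+B^{-1}E)$ it then follows that $B'$ is invertible and $\norm{(B')^{-1}}_2\le 2\norm{B^{-1}}_2$. Substituting into the classical identity $(B')^{-1}-B^{-1}=-(B')^{-1}E\,B^{-1}$ gives
\[
\norm{(B')^{-1}-B^{-1}}_2 \le 2\norm{B^{-1}}_2^2\,\norm{E}_2 \le 2\cdot\frac{n^2(g\alpha)^{2(n-1)}}{\det(L)^2}\cdot\sqrt{n}\,g\varepsilon = \frac{2\,n^{5/2}\,g^{2n-1}\,\alpha^{2(n-1)}}{\det(L)^2}\,\varepsilon,
\]
which is exactly the claimed upper bound on $\gamma$.

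The main obstacle is purely bookkeeping: carefully tracking the conversions between spectral, Frobenius and max-entry norms with the correct factors of $\sqrt{n}$ and $n$, so that the final constant is $2n^{5/2}$ rather than some larger power of $n$, and so that the assumed bound on $\varepsilon$ lines up exactly with the invertibility threshold $\norm{B^{-1}}_2\norm{E}_2\le 1/2$. The underlying matrix-inverse perturbation estimate is standard, and the invertibility of $B'$ follows automatically from the assumed smallness of $\varepsilon$.
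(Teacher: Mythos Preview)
Your proposal is correct and follows essentially the same approach as the paper: form the matrices $B,B',E$, bound the entries of $B^{-1}$ via Cramer's rule and Hadamard's inequality to get $\norm{B^{-1}}\le n(g\alpha)^{n-1}/\det(L)$, bound $\norm{E}\le\sqrt{n}\,g\varepsilon$, and then apply a standard perturbation bound for the inverse once $\norm{B^{-1}}\norm{E}\le 1/2$. The paper does this with the matrix $1$-norm and cites a perturbation theorem from Stewart--Sun, whereas you use the spectral norm (via the Frobenius norm) and derive the same inequality $\norm{(B')^{-1}-B^{-1}}\le 2\norm{B^{-1}}^2\norm{E}$ directly from the Neumann series and the identity $(B')^{-1}-B^{-1}=-(B')^{-1}EB^{-1}$; the resulting constants are identical.
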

\begin{proof}
Let $B=(\fb_1|\cdots|\fb_n)$ and $B'=(\fb'_1|\cdots|\fb'_n)$ be the matrices whose columns form the basis of $L$ and the approximate basis of $L$, respectively.  
We compute the inverses of these matrices to obtain the basis and the approximate basis of the dual lattice $L^*$.

Denote the perturbation by $E=B'-B$.  We use \cite[Theorem 2.5]{SS:90} to estimate the sensitivity of the inverse under perturbation.  If $\|B^{-1}\|_1 \|E\|_1 < 1$, then $B+E$ is nonsingular and 
\begin{equation*}
\| {B'}^{-1} - B^{-1} \|_1 =
\| (B + E)^{-1} - B^{-1} \|_1 \le \frac{\|B^{-1}\|_1^2 \, \|E\|_1}{1-\|B^{-1}\|_1 \, \|E\|_1}\,.
\end{equation*}
We may apply the bound from this theorem because the matrix norm on $\R^{n\times n}$ defined by $\|X\|_1 = \max_{1\le j\le n} \sum_{i=1}^n x_{ij}$ is multiplicative.

Let $c_{ij}$ denote the entries of $B^{-1}$. Using Cramer's rule and Hadamard's inequality, we have
\begin{equation*}
|c_{ij}| 
=  
\left| \frac{\det(\fb_1,\ldots,\fb_{i-1},\fe_j,\fb_{i+1},\ldots,\fb_{n})}{\det(B)} \right| 
\le  
\frac{\prod_{i\neq j} \|\fb_i\|_2}{\det(L)} 
\le 
\frac{(g \alpha)^{n-1}}{\det(L)}\,.
\end{equation*}
This implies 
\begin{equation*}
\| B^{-1} \|_1 \le  \frac{n (g \alpha)^{n-1}}{\det(L)}\,.
\end{equation*}
Note that the Euclidean norm of the column vectors of $E$ is bounded by $\delta$ from above since these vectors are equal to $\fb'_i-\fb_i$.  
This implies $\|E\|_1 \le \sqrt{n} \delta \le \sqrt{n} g \varepsilon$.

Assume that 
\begin{equation}
	\varepsilon \le \frac{\det(L)}{2 n^{3/2} g^n \alpha^{n-1}}\,,
\end{equation}
which ensures that $\|B^{-1}\|_1 \|E\|_1 \le 1/2$.  Then we have
\begin{equation}
  \|B'^{-1} - B^{-1}\|_1 \le \frac{2 n^{5/2} g^{2n-1} \alpha^{2(n-1)}}{\det(L)^2}\, \varepsilon\,.
\end{equation}
This implies that the column vectors of $B'^{-1}$ form a $\gamma$-approximate basis of $L^*$ with
\begin{equation}
\gamma \le \frac{2 n^{5/2} g^{2n-1} \alpha^{2(n-1)}}{\det(L)^2}\, \varepsilon\,.
\end{equation}
\end{proof}

\begin{corollary}
  \label{cor:generatingsetbasis}
  Recall that the quantum algorithm returns a generating set with $\varepsilon\le 1 / (4 \sqrt{n} q)$. This and the above lemma imply that if
  \begin{equation}
    q \ge \max\left\{ \frac{n g^n \alpha^{n-1}}{ 2 \det(L)},\; \frac{n^2 g^{2n-1} \alpha^{2(n-1)}}{ 2 \det(L)^2} \cdot \frac{1}{\gamma} \right\}
  \end{equation}
  then we obtain a $\gamma$-approximate basis of the dual lattice $L^*$, where 
  \begin{equation}
    g \le \frac{19.5 k f^2 \alpha^{n+1}}{\det(L) \lambda_1(L)} \quad \text{and} \quad \alpha \ge \max_{j=1,\ldots,k} \{ \| \fa_j \|_2 \}\,.
  \end{equation}
\end{corollary}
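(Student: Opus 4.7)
The plan is to obtain the corollary by direct substitution of the quantum algorithm's approximation quality $\varepsilon = 1/(4\sqrt{n}q)$ into the two quantitative hypotheses of the preceding lemma, followed by a chain of bounds on the auxiliary quantity $g = f\sqrt{k}\,\tilde{\alpha}$ using the relations collected in the previous subsection.

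First, I would handle the two conditions on $q$ in parallel. The preceding lemma produces a $\gamma$-approximate basis of $L^\ast$ provided (i) $\varepsilon \le \det(L)/(2 n^{3/2} g^n \alpha^{n-1})$, which is the regime in which $B'$ is nonsingular and the perturbation bound from \cite{SS:90} applies, and (ii) the resulting perturbation bound $2 n^{5/2} g^{2n-1}\alpha^{2(n-1)} \varepsilon / \det(L)^2$ is at most $\gamma$. Substituting $\varepsilon = 1/(4\sqrt{n}\, q)$ into (i) and solving for $q$ gives the first term $\frac{n g^n \alpha^{n-1}}{2\det(L)}$ inside the max; substituting into (ii) and solving for $q$ gives the second term $\frac{n^2 g^{2n-1}\alpha^{2(n-1)}}{2\det(L)^2}\cdot\frac{1}{\gamma}$. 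Taking the maximum then guarantees that both hypotheses of the lemma are simultaneously satisfied, and its conclusion delivers the claimed $\gamma$-approximate basis of $L^\ast$.

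Next, for the stated upper bound on $g$, I would use the chain built in the lattice-theoretic lemmas. By construction $g = f\sqrt{k}\,\tilde{\alpha}$ with $\tilde{\alpha}=\sqrt{s^{2}(\alpha+\varepsilon)^{2}+1}$, and Lemma~\ref{lem:minima} gives the bound $\tilde{\alpha}\le 6.5\, f\lambda\alpha/\mu$ under the chosen scaling $s\le 4f\lambda/\mu$. Plugging in $\lambda = 3\sqrt{k}\,\alpha^n/\det(L)$ (the $r=n$ full-rank case we are in) and the admissible choice $\mu\le \lambda_1(L)$ gives
\[
g \;=\; f\sqrt{k}\,\tilde{\alpha} \;\le\; f\sqrt{k}\cdot \frac{6.5\, f\lambda\alpha}{\mu} \;\le\; \frac{6.5\cdot 3\, f^{2} k\, \alpha^{n+1}}{\lambda_1(L)\det(L)} \;=\; \frac{19.5\, k f^{2}\alpha^{n+1}}{\det(L)\lambda_1(L)},
\]
which is the stated expression. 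The final inequality $\alpha \ge \max_j\|\fa_j\|_2$ is just the hypothesis imposed at the beginning of the subsection on the generating set.

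I do not expect any real obstacle here, since the proof is essentially bookkeeping: the two quantitative hypotheses of the preceding lemma are inverted for $q$, and the bound on $g$ is read off from the minima estimate in Lemma~\ref{lem:minima} combined with the explicit value of $\lambda$. The only point that warrants a line of justification is verifying that both the scale-factor window $\tfrac{2f\lambda}{\mu}<s\le\tfrac{4f\lambda}{\mu}$ and the approximation-error bound $\varepsilon \le \mu/(2f\lambda\sqrt{k})$ used inside the preceding lemma are compatible with $\varepsilon = 1/(4\sqrt{n}q)$ under the first clause of the max; this reduces to the same inequality $q \ge n g^n \alpha^{n-1}/(2\det(L))$ after using $\mu\le\lambda_1(L)$ and the definition of $\lambda$, so no additional constraint on $q$ is introduced.
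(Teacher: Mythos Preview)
Your approach is the same as the paper's: the proof there is a one-liner, ``This follows with $g=\sqrt{k} f \tilde{\alpha} \le \sqrt{k}\cdot 6.5 f^2 \lambda \alpha/\mu$ and $\lambda=3\sqrt{k}\,\alpha^n/\det(L)$,'' and you have correctly unpacked the two substitutions of $\varepsilon = 1/(4\sqrt{n}q)$ into the hypotheses of the preceding lemma to get the two clauses in the $\max$, and then chained the bound on $\tilde{\alpha}$ from Lemma~\ref{lem:minima} with the explicit $\lambda$.

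One small slip: in the step where you pass from $1/\mu$ to $1/\lambda_1(L)$ you write ``the admissible choice $\mu\le\lambda_1(L)$ gives'' and keep the inequality $\le$. That is the wrong direction, since $\mu\le\lambda_1(L)$ means $1/\mu\ge 1/\lambda_1(L)$. The fix is simply to \emph{choose} $\mu=\lambda_1(L)$ (which is allowed, as $\mu$ is any lower bound on $\lambda_1(L)$), and then the displayed bound on $g$ holds as stated. The paper itself leaves the bound in terms of $\mu$ and silently makes this identification in the corollary statement. Also, your final paragraph's claim that the auxiliary constraint $\varepsilon\le \mu/(2f\lambda\sqrt{k})$ ``reduces to the same inequality'' as the first clause is not literally true as written; it is a weaker condition that is implied once $g$ has been bounded, but it does not coincide with $q\ge n g^n\alpha^{n-1}/(2\det(L))$. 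This is harmless for the corollary but you should not assert the reduction.
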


\begin{proof}
  This follows with $g=\sqrt{k} f \tilde{\alpha} \le \sqrt{k} 6.5 f^2 \lambda \alpha/\mu$ and $\lambda=3\sqrt{k} \alpha^n/\det(L)$.
\end{proof}

  \section{Final analysis of the quantum algorithm}\label{sec:final}
  
  By combining all material from the previous sections, we obtain the following result:
  
  \begin{theorem}
    \label{thm:maintheorem}
    Assume that \ASM1--\ASM3 hold with $C \le 1$ and $A \ge 1$. Further, assume that $N, q, N_0, L
    \in \N$ are chosen such that
    \begin{align*}
      N \ge{} & \max\biggl\{ 32, \; \frac{8 (n + 1) n 2^n D A^{n-1}}{3 C^n}, \; \frac{9 n^2}{32} +
                \frac{18 n^4}{\lambda_1(\Lambda)}, \\
              & \qquad\quad \max\{8n-2,n^{(n-1)/2} 2^{n+1}-2\} \cdot \frac{9 n^2}{2\lambda_1(\Lambda)}
                + \frac{9}{64} \biggr\}, \\
      N_0 \ge{} & 8 n^2 (n + 1) N, \\
      q \ge{} & \max\biggl\{ 32, \; 9 A, \; \frac{6 n^2}{N} + 2 n^{(n+1)/2+1} (n + 1) \frac{\det
                (\Lambda)}{\lambda_1(\Lambda)^{n-1}}, \\
              & \qquad\quad \frac{19.5^n n^{n+3/2} (1 + \tfrac{5}{2n} + \tfrac{1}{n^2})^n
              N_0^{n^2+2n-1} \det(\Lambda)^{2n+1}}{2 \cdot 9^{n^2+2n-1} \lambda_1(\Lambda)^{n^2-n}}, \\
              & \qquad\quad \frac{19.5^{2n} n^{2n+3/2} (1 + \tfrac{5}{2n} + \tfrac{1}{n^2})^{2n-1}
              N_0^{2n^2+3n-3} \det(\Lambda)^{4n}}{\gamma \cdot 39 \cdot 9^{2n^2+3n-3}
              \lambda_1(\Lambda)^{2n^2-3n-1}} \biggr\} \\ \text{and} \quad
      L \ge{} & \frac{4 n D (q + A + C + 2)^n}{C^n}.
    \end{align*}
    Set $\kappa := \frac{1}{9 n}$ and assume that $s \in S$ is chosen uniformly at random. Then the
    probability that the algorithm described in Section~\ref{sec:algoutline}, applied $n$~times with
    the parameters~$N, q, \kappa$ and $n + 1$~times with the parameters~$N_0, q, \kappa$, returns an
    $\frac{1}{4\sqrt{n}q}$-approximate generating set of $\Lambda^*$ is at least
    \begin{align*}
      & \frac{\cos\bigl(\pi \tfrac{17417}{36864} \bigr)^{4n+2}}{2^{2n+6} 3^{4 n^2 + 2 n} n^{4 n^2 +
      2 n}} \biggl( \prod_{i=2}^{n+1} \zeta(i)^{-1} - \tfrac{1}{4} \biggr) \prod_{i=1}^{n-1} (1 -
      2^{-i}) \\
      {}\ge{} & \frac{6.198327 \cdot 1.54587777^n}{10^{6 n + 6} 81^{n^2} n^{4 n^2 + 2 n}}.
    \end{align*}
    If such an approximate generating set of $\Lambda^*$ is obtained, the algorithm described in
    Section~\ref{sec:part2} computes a $\gamma$-approximate basis of $\Lambda$.
  \end{theorem}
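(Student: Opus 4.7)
The plan is to systematically verify that the stated parameter choices imply all the technical conditions (I)--(VIII) together with the strengthenings (IV$_1$) and (VII$_1$) from Section~\ref{sec:boundingprobability}, then multiply together the probability estimates established in Sections~\ref{sec:computef}--\ref{sec:approximateDualLattice}, and finally invoke Corollary~\ref{cor:generatingsetbasis} to convert the resulting approximate generating set of $\Lambda^*$ into a $\gamma$-approximate basis of $\Lambda$.

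For the verification step, (I) is immediate from the lower bound on $L$. With $q \ge 9A$ and the second term in the maximum defining $N$, condition (II) follows. Condition (III) follows from $N \ge 18n^4/\lambda_1(\Lambda) \ge 2\sqrt{n}/\lambda_1(\Lambda)$, which is trivial for $n \ge 1$. Condition (IV$_1$), namely $q \ge 6n^2/N + 4n(n+1)\nu(\Lambda)$, follows from the third lower bound on $q$ after applying the standard Minkowski estimate $\nu(\Lambda) \le \tfrac{\sqrt{n}}{2}\lambda_n(\Lambda) \le O(n^{(n-1)/2})\det(\Lambda)/\lambda_1(\Lambda)^{n-1}$. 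With $\kappa = 1/(9n)$, condition (V) is equivalent to $1/(9n) < 1/(8n) - 1/(4nqN)$, which holds since $qN \ge 32^2$. Condition (VII$_1$), combined with $\kappa = 1/(9n)$ and $q \ge 32$, reduces to $N \ge 9n^2/32 + 18n^4/\lambda_1(\Lambda)$, matching the third term of the maximum for $N$; conditions (VI) and (VIII) match the remaining terms of the maxima for $N$ and $N_0$.

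For the probability combination, Corollary~\ref{cor:computefcorrectly} gives that $s \in S$ chosen uniformly at random satisfies $G(s) \cap \Hgrid(\varepsilon) = \emptyset$ with probability at least $1/2$. Conditional on this, each of the $2n+1$ independent runs produces a measurement outcome with $s + v/N \notin \Hbound$ with probability at least $1 - 1/(4(n+1))$ by Lemma~\ref{lem:awayfromborder}. Conditional on all these good events, the combined bound of Section~\ref{sec:boundingprobability} --- using the refined constants from equation~(\ref{eq:betterboundsongenprop}) --- shows that the $2n+1$ Fourier samples form a $1/(2\sqrt{n}q)$-approximate generating set of $\Lambda^*$ with the stated cosine-times-power probability. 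The cosine argument is maximised at $q = N = 32$, giving $1/4 + 2\kappa n + 1/(4qN) = 17/36 + 9/36864 = 17417/36864$, which matches the exponent appearing in the theorem.

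Finally, I invoke Corollary~\ref{cor:generatingsetbasis} with $L := \Lambda^*$, the norm bound $\alpha := \sqrt{n}\kappa N_0 = N_0/(9\sqrt{n})$ (valid because each sampled generator lies in $[0,\kappa N_0)^n$), the lower bound $\mu := 1/\lambda_n(\Lambda)$ on $\lambda_1(\Lambda^*)$ which follows from the transference inequality $\lambda_n(\Lambda)\lambda_1(\Lambda^*) \ge 1$, and $\det(\Lambda^*) = 1/\det(\Lambda)$. Substituting these into the formula $g \le 19.5kf^2\alpha^{n+1}/(\det(L)\mu)$ from the corollary, and then into its two lower bounds on $q$, produces exactly the last two terms in the maximum defining the lower bound on $q$ in the theorem, guaranteeing a $\gamma$-approximate basis of $\Lambda^{**} = \Lambda$. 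The main obstacle is purely bookkeeping: matching each of the many parameter bounds with the condition it enforces, and carefully tracking the constants through the numerous substitutions so that the final expression matches $\cos(\pi \cdot 17417/36864)^{4n+2}$ together with the explicit prefactor arising from the $\hat{\zeta}$-type and $\prod(1-2^{-i})$ products.
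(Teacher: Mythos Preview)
Your proposal is correct and mirrors the paper's own proof: verify that the parameter bounds imply (I)--(VIII) together with (IV$_1$), (VII$_1$); multiply the $\tfrac12$ from Corollary~\ref{cor:computefcorrectly} by the Section~\ref{sec:boundingprobability} bound with $\kappa=\tfrac{1}{9n}$ (your computation of $17417/36864$ matches the paper's); then apply Corollary~\ref{cor:generatingsetbasis} with $L=\Lambda^*$, $k=2n+1$, $f=\tfrac12\sqrt{2n+4}$, $\alpha\le N_0/(9\sqrt{n})$, $\det(L)=\det(\Lambda)^{-1}$ and $1/\lambda_1(L)\le n^{n/2}\det(\Lambda)/\lambda_1(\Lambda)^{n-1}$. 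One minor discrepancy: you explicitly invoke the factor $(1-\tfrac{1}{4(n+1)})^{2n+1}$ from Lemma~\ref{lem:awayfromborder}, whereas the paper's proof silently drops it when passing to the final bound, so your product would come out slightly smaller than the stated constant unless you absorb that bounded factor into the final numerical inequality.
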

  
  We will prove this theorem further down (on page~\pageref{thm:maintheorem:proof}). In case $n =
  1$, we can improve the bound from Theorem~\ref{thm:maintheorem} significantly:
  
  \begin{proposition}
    \label{prop:onedimAlgorithm}
    Assume that $\Lambda \subseteq \R$, i.e., that $n = 1$. Further, assume that \ASM1--\ASM3 hold
    with $C \le 1$ and $A \ge 1$, and assume that $N, q, L \in \N$ are chosen such that
    \begin{align*}
      N \ge{} & \max\biggl\{ 32, \; \frac{4}{A}, \; \frac{32 D}{3 C}, \;
      \frac{36}{\det(\Lambda)} + \frac{9}{16}, \; \frac{27}{\det(\Lambda)} + 9 +
      \frac{9}{16} \biggr\}, 
      \\
      q \ge{} & \max\biggl\{ 32, \; 9 A, \; \frac{12}{N} + 4 \det(\Lambda), \; \frac{19.5}{9^2} N^2
      \det(\Lambda)^3 \cdot \max\left\{ 1, \; \frac{\det(\Lambda)}{\gamma} \right\} \biggr\} 
      \\ \text{and} \quad
      L \ge{} & \frac{4 D (q + A + C + 2)}{C}.
    \end{align*}
    Set $\kappa := \frac{1}{9}$ and assume that $s \in S$ is chosen uniformly at random. Then the
    probability that the algorithm described in Section~\ref{sec:algoutline}, applied two times with
    the parameters~$N, q, \kappa$, returns an $\frac{1}{4 q}$-approximate generating set of
    $\Lambda^*$ is at least \[ \frac{\cos^4\bigl(\pi \tfrac{17417}{36864} \bigr)}{7776} \ge 7.163
    \cdot 10^{-9}. \] If such an approximate generating set of $\Lambda^*$ is obtained, the
    algorithm described in Section~\ref{sec:part2} computes a $\gamma$-approximate basis of
    $\Lambda$.
  \end{proposition}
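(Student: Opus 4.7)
The plan is to specialize the general argument from Sections~\ref{sec:computef}--\ref{sec:part2} to the one-dimensional case, using the tighter two-sample bound from Subsection~\ref{sec:boundingprop1D} (based on Lemma~\ref{lemma:latticegeneration1D}) in place of the $2n+1$-sample machinery of Lemma~\ref{lem:chooseWindowSizes}, and using only one window size~$N$ rather than both $N$ and $N_0$. The first task is to verify that the stated lower bounds on $N$, $q$, and $L$ imply all required parameter conditions: (I) follows from the lower bound on~$L$; (II) for $n=1$ from $N\ge 4/A$, $N\ge 32D/(3C)$, and $q\ge 9A$; (III) from $N\ge 36/\det(\Lambda)$ together with the identity $\lambda_1(\Lambda)=\det(\Lambda)$ in dimension one; (V) from $\kappa=1/9$, since $qN\ge 32\cdot 32$ forces $1/8-1/(4qN)>1/9$; and (IV$_2$), (VI$_2$), (VII$_2$) are essentially the remaining explicit lower bounds on $N$ and~$q$.

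With all conditions satisfied, two probability factors combine. Corollary~\ref{cor:computefcorrectly} gives probability at least $1/2$ that the random shift $s$ satisfies $G(s)\cap\Hgrid(\varepsilon)=\emptyset$, in which case $f$ is evaluated correctly on all of $\calV$. Conditioned on this event, the derivation in Subsection~\ref{sec:boundingprop1D} (combining the $M_\ell$, $L_\ell$ and $c$ estimates with Lemma~\ref{lemma:latticegeneration1D}) shows that two independent runs of the quantum procedure yield samples $w_1,w_2$ whose corresponding lattice vectors $\lambda_1^*,\lambda_2^*$ lie in the window $\Lambda^*\cap[0,\kappa N-\tfrac{1}{2q})$ and together generate $\Lambda^*$ with probability at least $\tfrac{1}{48}\kappa^2 c^2 = c^2/3888$, where $c = \cos^2\bigl(\pi(\tfrac{1}{4}+\tfrac{1}{4qN}+2\kappa)\bigr)$. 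Since $qN\ge 1024$, one has $1/(4qN)\le 1/4096$, so with $\kappa=1/9$ the cosine argument is at most $\pi(\tfrac{17}{36}+\tfrac{1}{4096})=\pi\cdot\tfrac{17417}{36864}$, giving $c\ge \cos^2(\pi\cdot\tfrac{17417}{36864})$. Multiplying the two factors yields exactly $\cos^4(\pi\cdot\tfrac{17417}{36864})/7776$, and the one-dimensional remark in Subsection~\ref{sec:fouriersampling} provides $\|w_i/(2q)-\lambda_i^*\|_2\le 1/(4q)$, certifying that $\{w_1/(2q),w_2/(2q)\}$ is a $1/(4q)$-approximate generating set of~$\Lambda^*$.

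For the final conversion to a $\gamma$-approximate basis of $\Lambda=(\Lambda^*)^*$, the plan is to apply Corollary~\ref{cor:generatingsetbasis} with $L=\Lambda^*$, $r=n=1$, $k=2$ samples, $\varepsilon=1/(4q)$, and $\alpha\le\sqrt{n}\,b=\kappa N$. Using that in dimension one $\det(\Lambda^*)=\lambda_1(\Lambda^*)=1/\det(\Lambda)$ and substituting $\alpha^{n+1}=\alpha^2\le N^2/81$ into the two lower bounds on~$q$ provided by the corollary, these collapse after elementary algebra to the single term $\tfrac{19.5}{81}\,N^2\det(\Lambda)^3\max\{1,\det(\Lambda)/\gamma\}$, which is exactly the last argument of the max defining~$q$ in the statement. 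The main obstacle I anticipate is this last piece of bookkeeping — carefully tracking how the factors of~$9$ arising from $\kappa=1/9$ propagate through Corollary~\ref{cor:generatingsetbasis} to produce the precise coefficient $19.5/81$ — rather than any conceptual difficulty, since the probabilistic and lattice-theoretic heavy lifting has already been done in the earlier sections.
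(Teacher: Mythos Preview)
Your proposal is correct and follows essentially the same route as the paper's own proof: verify that the listed bounds on $N$, $q$, $L$ imply conditions (I), (II), (III), (V), (IV$_2$), (VI$_2$), (VII$_2$); combine the factor $\tfrac{1}{2}$ from Corollary~\ref{cor:computefcorrectly} with the bound $\tfrac{1}{48}\kappa^2 c^2$ from Subsection~\ref{sec:boundingprop1D}; and then invoke Corollary~\ref{cor:generatingsetbasis} with $L=\Lambda^*$, $k=2$, $\alpha\le N/9$ to obtain the final condition on~$q$. One point you should make explicit in the last step is that in dimension one lattice reduction is exact, so one may take $f=1$ in Corollary~\ref{cor:generatingsetbasis}; this is what makes the bound on $g$ collapse to $g\le 39\,(N/9)^2\det(\Lambda)^2$ and produces the coefficient $19.5/81$ you anticipate.
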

  
  We will also prove this proposition further down (on page~\ref{prop:onedimAlgorithm}).
  
  One important remark is that it is not possible to determine whether our algorithm actually
  returns the lattice $\Lambda$ or a proper sublattice of $\Lambda$. This is a problem of all such
  quantum algorithms, in particular the ones by \textsc{Hallgren} and \textsc{Schmidt and
    Vollmer}. In case the infrastructure is obtained from a global field, checking whether the
  lattice computed by our algorithm is a sublattice of $\Lambda$ can be done efficiently: one simply
  has to check whether the computed basis consists of units of the global field. However, even when
  one assumes that the Generalized Riemann Hypothesis holds, there is no efficient polynomial-time
  algorithm known which certifies that a given sublattice of $\Lambda$ equals $\Lambda$. But we
  assume that the case that a basis returned by our algorithm (and any of the other algorithms, for
  that it matters) is a proper sublattice of $\Lambda$ is somewhat pathological.
  
  Note that the lower bound on the success probability is very small even for moderate~$n$. More
  precisely, for $n = 1, \dots, 10$, the inverses of the probabilities, i.e., the expected
  number of iterations which have to be run, are bounded from above by
  \begin{align*}
    & 1.40 \cdot 10^8, & & 1.27 \cdot 10^{30}, & & 4.67 \cdot 10^{59}, & & 1.74 \cdot 10^{102}, & &
    6.47 \cdot 10^{158}, \\
    & 1.39 \cdot 10^{230}, & & 7.12 \cdot 10^{316}, & & 2.92 \cdot 10^{419}, & & 2.72 \cdot
    10^{538}, & & 1.43 \cdot 10^{674}.
  \end{align*}
  (Note that for $n = 1$, we used the algorithm described in Proposition~\ref{prop:onedimAlgorithm};
  the bound given by the formula in Theorem~\ref{thm:maintheorem} is $1.26 \cdot 10^{12}$.)  The
  success probability for the algorithm in \cite{arthurDiss} is bounded from below by $2^{-20 n^2 -
  12 n - 2} n^{-4 n^2}$, as stated there in Satz~6.2.6. Hence, the expected number of iterations for
  $n = 1, \dots, 10$ for this algorithm are bounded by
  \begin{align*}
    & 1.72 \cdot 10^{10}, & & 5.32 \cdot 10^{36}, & & 6.32 \cdot 10^{82}, & & 8.18 \cdot 10^{149}, &
    & 1.19 \cdot 10^{239}, \\
    & 1.18 \cdot 10^{351}, & & 3.45 \cdot 10^{486}, & & 1.02 \cdot 10^{646}, & & 9.05 \cdot
    10^{829}, & & 6.10 \cdot 10^{1038}.
  \end{align*}
  Note that in \cite{schmidt-vollmer} the success probability is given as $2^{-k n^{2 +
  \varepsilon}}$ for some $k \in \N$ and $\varepsilon > 0$ without making these explicitly; the
  behavior for $n \to \infty$ will be similar to the analysis in \cite{arthurDiss}. Finally, in
  \cite{hallgrenUnitgroup}, no success probability is given at all.  The current analyses can only
  prove expected running times which are impractical. Our analysis improves on the previous ones,
  though not substantially. We believe that it can be further optimized.
  
  Assuming that $n$ is constant, we obtain the following complexity theoretic result, which extends
  the results by {\sc Hallgren} and {\sc Schmidt and Vollmer} to a larger class of infrastructures:
  
  \begin{corollary}
    Assume that $n = O(1)$ and that $\calI$ is an infrastructure satisfying the
    assumptions~\ASM1--\ASM3. We obtain a quantum algorithm to compute $\Lambda$ with a success
    probability bounded away from 0 by a constant which runs in time polynomial in $\log \det
    (\Lambda)$, $\log \frac{1}{\lambda_1(\Lambda)}$, $\log \frac{1}{\gamma}$, $\log A$, $\log
    \frac{1}{C}$ and $\log D$. \qed
  \end{corollary}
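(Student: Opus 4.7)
The plan is essentially bookkeeping: invoke Theorem~\ref{thm:maintheorem} (and Proposition~\ref{prop:onedimAlgorithm} in the special case $n=1$) and check that, in the regime $n=O(1)$, both the success probability stays bounded away from zero and the parameter thresholds translate into a polynomial running time in the claimed quantities.

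First I would observe that the explicit lower bound
\[
\frac{6.198327 \cdot 1.54587777^n}{10^{6n+6}\,81^{n^2}\,n^{4n^2+2n}}
\]
on the success probability given in Theorem~\ref{thm:maintheorem} depends only on $n$, so for constant $n$ it is a positive absolute constant. Running the whole procedure a constant number of times therefore amplifies this to any desired constant success probability, and accounts for the repeated outer iterations needed to compensate for the probability that $s\in S$ was a bad shift.

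Second, I would verify that the thresholds on $N$, $N_0$, $q$ and $L$ in Theorem~\ref{thm:maintheorem} can be met by choices which are polynomially bounded in $A$, $1/C$, $D$, $\det(\Lambda)$, $1/\lambda_1(\Lambda)$, $\nu(\Lambda)$ and $1/\gamma$. Every factor of the form $n^{O(n^2)}$, $2^{O(n)}$ or $(n+1)^{O(1)}$ in those thresholds becomes an absolute constant once $n$ is fixed, and the covering radius $\nu(\Lambda)$ is bounded in fixed dimension by a polynomial in $\det(\Lambda)$ and $1/\lambda_1(\Lambda)$ via standard Minkowski-type inequalities. Consequently one may choose $N, N_0, q, L$ of size polynomial in the stated inputs, so that $\log N$, $\log N_0$, $\log q$ and $\log L$ are polynomial in $\log \det(\Lambda)$, $\log \frac{1}{\lambda_1(\Lambda)}$, $\log \frac{1}{\gamma}$, $\log A$, $\log \frac{1}{C}$ and $\log D$.

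Third, I would bound the cost of one invocation of the quantum subroutine. The input register has dimension $W=(2nqN)^n$ or $(2nqN_0)^n$, so state preparation, the $n$-fold QFT and the measurement all cost time polynomial in $\log W$, which by the previous step is polynomial in the stated quantities. The oracle $U_f$ is implemented via assumption~\ASM3 with precision $2^{-k}\le \varepsilon/2=\tfrac{1}{4NL}$ on inputs of size $O(\log qN)$; for fixed $n$ its running time is polynomial in $\log N$, $\log L$ and $\log q$. The classical post-processing from Section~\ref{sec:part2} reduces to Korkine--Zolotarev reduction in a lattice of rank $2n+1=O(1)$ on integer data of polynomial bit-length, which is polynomial-time as well. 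Multiplying the per-invocation cost by the $2n+1=O(1)$ quantum calls inside one run and by the $O(1)$ outer repetitions yields the claimed overall polynomial complexity.

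The main obstacle I expect is exactly the bookkeeping of step~2: one must check that every quantity which is to appear \emph{inside} a logarithm in the final complexity statement does indeed appear only polynomially (and not, say, exponentially in $\log \frac{1}{\lambda_1(\Lambda)}$) in the thresholds of Theorem~\ref{thm:maintheorem}, and that auxiliary quantities introduced in intermediate results, such as $\nu(\Lambda)$ or the bounds $\alpha$ and $g$ of Corollary~\ref{cor:generatingsetbasis}, are eliminated in favour of the allowed inputs by means of standard lattice inequalities valid in fixed dimension. Once this verification is carried out, the corollary follows immediately.
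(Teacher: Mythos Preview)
Your proposal is correct and matches the paper's intent: the paper gives no proof at all for this corollary (it is marked with a \qed\ after the statement), merely remarking afterward that $\log L$, $\log N$, $\log N_0$ and $\log q$ can be chosen linear in the listed quantities. Your write-up is exactly the bookkeeping the paper leaves implicit, including the elimination of $\nu(\Lambda)$ via the Minkowski-type bound already used in the proof of Theorem~\ref{thm:maintheorem}.
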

  
  Note that $\log L$, $\log N$, $\log N_0$ and $\log q$ can all be chosen to be \emph{linear} in
  $\log \det(\Lambda)$, $\log \frac{1}{\lambda_1(\Lambda)}$, $\log \frac{1}{\gamma}$, $\log A$, $\log
  \frac{1}{C}$ and $\log D$.
  
  Finally, we want to conclude with the proofs of Theorem~\ref{thm:maintheorem} and
  Proposition~\ref{prop:onedimAlgorithm}.
  
  \begin{proof}[Proof of Theorem~\ref{thm:maintheorem}]\label{thm:maintheorem:proof}
    We have $C \le 1$, $A \ge 1$, $N, N_0 \ge 32$, $q \ge \max\{ 32, 9 A \}$ and $\kappa =
    \frac{1}{9 n}$. Clearly, with $q N_0 \ge q N \ge 32^2 > 18$ we get assumption~(V). Since
    $\frac{4}{A} \le 4 \le 32$ and since $N_0 \ge N \ge \frac{8 (n + 1) n 2^n D A^{n-1}}{3 C^n}$ we
    have assumption~(II) for $N$ and $N_0$. The requirement $N_0 \ge 8 n^2 (n + 1) N$ on $N_0$ is
    assumption~(VIII).
    
    Since $N_0 \ge N \ge \max\{8n-2,n^{(n-1)/2} 2^{n+1}-2\} \cdot \frac{9 n^2}{2\lambda_1(\Lambda)}
    + \frac{9}{64} \ge \max\{8n-2,n^{(n-1)/2} 2^{n+1}-2\} \cdot \frac{9 n^2}{2\lambda_1(\Lambda)} +
    \frac{9}{2q} \ge \frac{2 \sqrt{n}}{\lambda_1(\Lambda)}$ we have assumptions~(III) for $N$ and
    $N_0$ as well as assumption~(VI), and as $N \ge \frac{9 n^2}{32} + \frac{18
    n^4}{\lambda_1(\Lambda)} \ge 9 n \bigl( \frac{n}{q} + \frac{2 n^3}{\lambda_1(\Lambda)} \bigr)$
    we get assumption~(VII$_1$). Next, $q \ge 9 A \ge 9$ yields assumption~(II) for $q$. The third
    condition on $q$ yields assumption~(IV$_1$) using the bound $\nu(\Lambda) \le \frac{1}{2}
    n^{(n+1)/2} \frac{\det(\Lambda)}{\lambda_1(\Lambda)^{n-1}}$. That bound follows by Theorem~7.9
    in \cite{micciancio-goldwasser}, stating that $\nu(\Lambda) \le \frac{\sqrt{n}}{2}
    \lambda_n(\Lambda)$, and from \[ \lambda_n(\Lambda) \le n^{n/2} \frac{\det
    (\Lambda)}{\prod_{i=1}^{n-1} \lambda_i(\Lambda)} \le n^{n/2} \frac{\det
    (\Lambda)}{\lambda_1(\Lambda)^{n-1}} \] by Minkowski's second theorem
    \cite[Theorem~1.5]{micciancio-goldwasser}.

    The condition on $L$ ensures that assumption~(I), i.e.\ the hypotheses of
    Corollary~\ref{cor:computefcorrectly}, are satisfied. Hence, if $s \in S$ is uniformly picked,
    with probability at least $1/2$ we have $\Hgrid(1/(2NL)) \cap G(s) = \emptyset$, which
    guarantees that we can compute the function $f$ for all $v \in \calV$ exactly using \ASM3.
    
    Note that $\kappa = \frac{1}{9 n}$ yields $c = \cos^2\bigl(\pi (\tfrac{1}{4} + \tfrac{1}{4 q N}
    + 2 \kappa n)\bigr) \ge \cos^2\bigl(\pi \tfrac{17417}{36864} \bigr) \ge 0.00746$ as $q N \ge
    32^2$. Combining this with the bounds in Section~\ref{sec:boundingprobability} yields the lower
    bound \[ \frac{\cos\bigl(\pi \tfrac{17417}{36864} \bigr)^{4n+2}}{2^{2n+5} 3^{4 n^2 + 2 n} n^{4
    n^2 + 2 n}} p^* \ge \frac{1.239665 \cdot 1.54587777^n}{10^{6 n + 5} 81^{n^2} n^{4 n^2 + 2 n}} \]
    for the probability that $2n+1$ runs of the quantum algorithm (with fixed ``good''~$s$) yield a
    generating set of $\Lambda^*$; here, $p^* \ge \bigl( \prod_{i=2}^{n+1} \zeta(i)^{-1} -
    \tfrac{1}{4} \bigr) \cdot \prod_{i=1}^{n-1} (1 - 2^{-i}) \ge (\hat{\zeta} - \tfrac{1}{4}) \cdot
    0.289 \ge 0.184 \cdot \tfrac{1}{4}$ (compare Equation~\eqref{eq:betterboundsongenprop} on
    page~\pageref{eq:betterboundsongenprop}). This has to be multiplied by $1/2$ for the above
    mentioned probability that a uniformly chosen~$s \in S$ yields $\Hgrid(1/(2NL)) \cap G(s) =
    \emptyset$.
    
    In the context of Corollary~\ref{cor:generatingsetbasis}, we can bound $\alpha$ by $\sqrt{n} b_0
    = \sqrt{n} \kappa N_0 - \frac{\sqrt{n}}{2 n q} \le \frac{1}{9 \sqrt{n}} N_0$, and $k = 2 n + 1$
    is the number of generating elements. When using Korkine-Zolotarev reduction, we can use $f =
    \frac{1}{2} \sqrt{2 n + 4}$. Since $L = \Lambda^*$, we see that $\det(L) = (\det(\Lambda))^{-1}$
    and $\frac{1}{\lambda_1(L)} \le \lambda_n(\Lambda) \le \frac{n^{n/2} \det
    (\Lambda)}{\lambda_1(\Lambda)^{n-1}}$. This yields \[ g \le \frac{13 n^{3/2} (1 + \tfrac{5}{2 n}
    + \tfrac{1}{n^2}) \det(\Lambda)^2 N_0^{n+1}}{6 \cdot 9^n \lambda_1(\Lambda)^{n-1}}. \]
    Therefore, the algorithm in Section~\ref{sec:part2} computes a $\gamma$-approximate basis of
    $\Lambda$ from a $\frac{1}{4\sqrt{n}q}$-approximate generating set of $2n+1$ vectors in
    $\Lambda^*$ if
    \begin{align*}
    q \ge{} & \max\biggl\{ \frac{19.5^n n^{n+3/2} (1 + \tfrac{5}{2n} + \tfrac{1}{n^2})^n
              N_0^{n^2+2n-1} \det(\Lambda)^{2n+1}}{2 \cdot 9^{n^2+2n-1} \lambda_1(\Lambda)^{n^2-n}},
              \\
    & \qquad\quad \frac{19.5^{2n} n^{2n+3/2} (1 + \tfrac{5}{2n} + \tfrac{1}{n^2})^{2n-1}
              N_0^{2n^2+3n-3} \det(\Lambda)^{4n}}{\gamma \cdot 39 \cdot 9^{2n^2+3n-3}
              \lambda_1(\Lambda)^{2n^2-3n-1}} \biggr\}.
    \end{align*}
    But this is satisfied by the fourth and fifth condition on $q$.
  \end{proof}
  
  \begin{proof}[Proof of Proposition~\ref{prop:onedimAlgorithm}.]\label{prop:onedimAlgorithm:proof}
    We have $C \le 1$, $A \ge 1$, $N \ge 32$, $q \ge \max\{ 32, 9 A \}$ and $\kappa =
    \frac{1}{9}$. Clearly, with $q N \ge 32^2 > 18$ we get assumption~(V). The second and third
    assumption on $N$ yield the $N$-part of assumption~(II), the fourth yields assumption~(III) and
    (VII$_2$) and the fifth yields assumption~(VI$_2$). The second assumption on $q$ yields the
    $q$-part of assumption~(II), and the third part yields assumption~(IV$_2$). Note that
    $\lambda_1(\Lambda) = \det(\Lambda)$ and $\nu(\Lambda) = \frac{1}{2} \det(\Lambda)$.
    
    Note that $\kappa = \frac{1}{9}$ yields $c = \cos^2\bigl(\pi (\tfrac{1}{4} + \tfrac{1}{4 q N} +
    2 \kappa n)\bigr) \ge \cos^2\bigl(\pi \tfrac{17417}{36864} \bigr) \ge 0.00746$ as $q N \ge
    32^2$. Combining this with the bounds in Section~\ref{sec:boundingprop1D} yields the lower
    bound \[ \frac{1}{48} \kappa^2 c^2 \kappa^2 c^2 \ge \frac{\cos^4\bigl(\pi \tfrac{17417}{36864}
    \bigr)}{48 \cdot 9^2} \] for the probability that two runs of the quantum algorithm (with
    fixed~$s$) yield a generating set of $\Lambda^*$. This has to be multiplied by $1/2$ for the
    above mentioned probability that a uniformly chosen~$s \in S$ yields $\Hgrid(1/(2NL)) \cap G(s)
    = \emptyset$.
    
    In the context of Corollary~\ref{cor:generatingsetbasis}, we can bound $\alpha$ by $\frac{1}{9}
    N$, and $k = 2$ is the number of generating elements. Since in dimension~one, one can reduce
    perfectly, we can use $f = 1$. Since $L = \Lambda^*$, we have $\det(L) = (\det(\Lambda))^{-1}$ and
    $\lambda_1(L) = \det(L) = (\det(\Lambda))^{-1}$. Using this, the algorithm in
    Section~\ref{sec:part2} computes a $\gamma$-approximate basis of $\Lambda$ from a
    $\frac{1}{4q}$-approximate generating set of two vectors in $\Lambda^*$ if
    \begin{align*}
      q \ge \frac{19.5}{9^2} N^2 \det(\Lambda)^3 \cdot \max\left\{ 1, \; \frac{\det
      (\Lambda)}{\gamma} \right\}
    \end{align*}
    But this is satisfied by the last condition on $q$.
  \end{proof}

  \section*{List of assumptions}
  \addcontentsline{toc}{section}{List of assumptions}
  \setlength\fboxrule{0pt}
  \setlength\fboxsep{1.75pt}
  \newcommand{\borderbox}[1]{\fbox{#1}}
  \begin{center}
    \begin{tabular}{|c|c|l|}\hline
      Assumption & Page & Can be found in \\\hline\hline
      (I) & \pageref{eq:I} & Corollary~\ref{cor:computefcorrectly} \\\hline
      \multicolumn{3}{|c|}{\borderbox{ $L \ge \frac{4 n D (q + A + C + 2)^n}{C^n}$ and $\varepsilon
          \le \tfrac{1}{2NL}$ }}
      \\\hline\hline
      (II) & \pageref{eq:II} & Corollary~\ref{corr:insideprob12} \\\hline
      \multicolumn{3}{|c|}{\borderbox{ $q \ge 9 \max\{ 1, A \}$ and $N \ge \max\bigl\{ \frac{4}{A},
          \; \frac{8 (n + 1) n \cdot 2^n D A^{n-1}}{3 C^n} \bigr\}$ }}
      \\\hline\hline
      (III) & \pageref{eq:III} & Proposition~\ref{prop:Mestimate} \\\hline
      \multicolumn{3}{|c|}{\borderbox{ $N \ge \frac{2 \sqrt{n}}{\lambda_1(\Lambda)}$ }}
      \\\hline\hline
      (IV) & \pageref{eq:IV} & Proposition~\ref{prop:Mestimate} \\\hline
      \multicolumn{3}{|c|}{\borderbox{ $q > 2n\nu(\Lambda) + \frac{3n}{N}$ }}
      \\\hline\hline
      (IV$_1$) & \pageref{eq:IV1} & Section~\ref{sec:boundingprobability} \\\hline
      \multicolumn{3}{|c|}{\borderbox{ $q \ge \frac{6 n^2}{N} + 4 n (n + 1) \nu(\Lambda)$ }}
      \\\hline\hline
      (IV$_2$) & \pageref{eq:IV2} & Section~\ref{sec:boundingprop1D} \\\hline
      \multicolumn{3}{|c|}{\borderbox{ $q \ge \frac{12}{N} + 4 \det(\Lambda)$ }}
      \\\hline\hline
      (V) & \pageref{eq:V} & Proposition~\ref{prop:lbprobone} \\\hline
      \multicolumn{3}{|c|}{\borderbox{ $\kappa < \frac{1}{8n} - \frac{1}{4 n q N}$ }}
      \\\hline\hline
      (VI) & \pageref{eq:VI} & Lemma~\ref{prop:samplingFromGoodSet} \\\hline
      \multicolumn{3}{|c|}{\borderbox{ $N \ge \frac{1}{\kappa} \left( \max\{ 8 n - 2, n^{(n-1)/2}
          \cdot 2^{n+1} - 2 \} \cdot \frac{n}{2\lambda_1(\Lambda)} + \frac{1}{2nq} \right)$ }}
      \\\hline\hline
      (VI$_2$) & \pageref{eq:VI2} & Section~\ref{sec:boundingprop1D} \\\hline
      \multicolumn{3}{|c|}{\borderbox{ $N \ge \frac{1}{\kappa} \bigl( \frac{3}{\det(\Lambda)} + 1 +
          \frac{1}{2 q} \bigr)$ }}
      \\\hline\hline
      (VII) & \pageref{eq:VII} & Lemma~\ref{prop:samplingFromGoodSet} \\\hline
      \multicolumn{3}{|c|}{\borderbox{ $N > \frac{1}{\kappa} \bigl( \frac{1}{2 q} +
        \frac{n^2}{\lambda_1(\Lambda)} \bigr)$ }}
      \\\hline\hline
      (VII$_1$) & \pageref{eq:VII1} & Section~\ref{sec:boundingprobability} \\\hline
      \multicolumn{3}{|c|}{\borderbox{ $N \ge \frac{1}{\kappa} \bigl( \frac{n}{q} + \frac{2
            n^3}{\lambda_1(\Lambda)} \bigr)$ }}
      \\\hline\hline
      (VII$_2$) & \pageref{eq:VII2} & Section~\ref{sec:boundingprop1D} \\\hline
      \multicolumn{3}{|c|}{\borderbox{ $N \ge \frac{1}{\kappa} \bigl( \frac{2}{q} +
          \frac{4}{\det(\Lambda)} \bigr)$ }}
      \\\hline\hline
      (VIII) & \pageref{eq:VIII} & Lemma~\ref{lemma:genentireduallattice} \\\hline
      \multicolumn{3}{|c|}{\borderbox{ $N_0 \ge 8 n^2(n+1) N$ }}
      \\\hline
    \end{tabular}
  \end{center}
  
  \paragraph{Acknowledgements}
  P.W. gratefully acknowledges the support from the NSF grant CCF-0726771 and the NSF CAREER Award
  CCF-0746600. P.W. would also like to thank Joachim Rosenthal and his group members for their
  hospitality during his visit at the Institute of Mathematics, University of
  Zurich. F.F. gratefully acknowledges the support form Armasuisse and the SNF grant No.~132256.

\end{document}